\numberwithin{equation}{section}
\newtheorem{theorem}{Theorem}[section]
\newtheorem{lemma}[theorem]{Lemma}
\newtheorem{corollary}[theorem]{Corollary}
\theoremstyle{definition}
\newtheorem{remark}[theorem]{Remark}
\newtheorem*{ack}{Acknowledgement}
\theoremstyle{remark}
\newenvironment{romenumerate}[1][0pt]{
\addtolength{\leftmargini}{#1}\begin{enumerate}
 }{\end{enumerate}}
\newcounter{oldenumi}
{\setcounter{oldenumi}{\value{enumi}}
\begin{romenumerate} \setcounter{enumi}{\value{oldenumi}}}
{\end{romenumerate}}
\newcounter{thmenumerate}
\newenvironment{thmenumerate}
{\setcounter{thmenumerate}{0}%
 \def\item{\par
 \refstepcounter{thmenumerate}\textup{(\roman{thmenumerate})\enspace}}
}
{}
\newcounter{romxenumerate}   
\newcounter{xenumerate}   
\newcommand{\refT}[1]{Theorem~\ref{#1}}
\newcommand{\refL}[1]{Lemma~\ref{#1}}
\newcommand{\refR}[1]{Remark~\ref{#1}}
\newcommand{\refS}[1]{Section~\ref{#1}}
\newcommand{\refApp}[1]{Appendix~\ref{#1}}
\newcommand{\refand}[2]{\ref{#1} and~\ref{#2}}
\newcommand\marginal[1]{\marginpar{\raggedright\parindent=0pt\tiny #1}}
\xdef\klockan{\the\count1.0\the\count255}
\xdef\klockan{\the\count1.\the\count255}\fi
\newcommand{\sumj}{\sum_{j=0}^\infty}
\newcommand{\sumk}{\sum_{k=0}^\infty}
\newcommand{\sumn}{\sum_{n=0}^\infty}
\newcommand{\sumni}{\sum_{n=1}^\infty}
\newcommand{\sumaa}{\sum_{\aaa\in\cAx}}
\newcommand{\sumik}{\sum_{i=1}^k}
\newcommand{\sumin}{\sum_{i=1}^n}
\newcommand{\prodin}{\prod_{i=1}^n}
\newcommand{\sumkoooo}{\sum_{k=-\infty}^\infty}
\newcommand{\summoooo}{\sum_{m=-\infty}^\infty}
\newcommand\set[1]{\ensuremath{\{#1\}}}
\newcommand\bigpar[1]{\bigl(#1\bigr)}
\newcommand\Bigpar[1]{\Bigl(#1\Bigr)}
\newcommand\lrpar[1]{\left(#1\right)}
\newcommand\xcpar[1]{\{#1\}}
\newcommand\Bigcpar[1]{\Bigl\{#1\Bigr\}}
\newcommand\abs[1]{|#1|}
\def\rompar(#1){\textup(#1\textup)}    
\newcommand\xfrac[2]{#1/#2}
\newcommand\Bigparfrac[2]{\Bigpar{\frac{#1}{#2}}}
\def\xexp(#1){e^{#1}}
\newcommand\ceil[1]{\lceil#1\rceil}
\newcommand\floor[1]{\lfloor#1\rfloor}
\newcommand\ntoo{\ensuremath{{n\to\infty}}}
\newcommand\ttoo{\ensuremath{{t\to\infty}}}
\newcommand\bmin{\wedge}
\newcommand\bmax{\vee}
\newcommand\punkt[1]{\if.#1\else.\spacefactor1000\fi{#1}}
\newcommand\iid{i.i.d\punkt}    
\newcommand\ie{i.e\punkt}
\newcommand\eg{e.g\punkt}
\newcommand{\aex}{a.e\punkt}
\newcommand\whp{w.h.p\punkt}
\newcommand\ii{\mathrm{i}}
\newcommand{\tend}{\longrightarrow}
\newcommand\dto{\overset{\mathrm{d}}{\tend}}
\newcommand\pto{\overset{\mathrm{p}}{\tend}}
\newcommand\asto{\overset{\mathrm{a.s.}}{\tend}}
\newcommand\eqd{\overset{\mathrm{d}}{=}}
\newcommand\bbR{\mathbb R}
\newcommand\bbZ{\mathbb Z}
\newcounter{CC}
\newcounter{cc}
\newcounter{ppsi}
\newcommand{\ppsi}{\stepcounter{ppsi}\ppsix} 
\newcommand{\ppsix}{\psi_{\arabic{ppsi}}}     
\newcommand{\ppsidef}[1]{\xdef#1{\ppsix}}     
\newcommand\E{\operatorname{\mathbb E{}}}
\renewcommand\P{\operatorname{\mathbb P{}}}
\newcommand\Var{\operatorname{Var}}
\newcommand\Exp{\operatorname{Exp}}
\newcommand\Po{\operatorname{Po}}
\newcommand\Bi{\operatorname{Bi}}
\newcommand\Be{\operatorname{Be}}
\newcommand\AsN{\operatorname{AsN}}
\newcommand\ga{\alpha}
\newcommand\gb{\beta}
\newcommand\gd{\delta}
\newcommand\gD{\Delta}
\newcommand\gf{\varphi}
\newcommand\gl{\lambda}
\newcommand\gs{\sigma}
\newcommand\gss{\sigma^2}
\newcommand\eps{\varepsilon}
\newcommand\cA{\mathcal A}
\newcommand\cD{\mathcal D}
\newcommand\cT{{\mathcal T}}
\newcommand\ett[1]{\boldsymbol1[#1]} 
\def\[#1]{[\![#1]\!]}
\newcommand\qq{^{1/2}}
\newcommand\qqw{^{-1/2}}
\newcommand\qqqw{^{-1/3}}
\newcommand\qw{^{-1}}
\newcommand\qww{^{-2}}
\newcommand\qwww{^{-3}}
\newcommand\qqqb{^{2/3}}
\renewcommand{\=}{:=}
\newcommand\intoo{\int_0^\infty}
\newcommand\intoooo{\int_{-\infty}^\infty}
\newcommand\oi{[0,1]}
\newcommand\ooo{[0,\infty)}
\newcommand\oooo{(-\infty,\infty)}
\newcommand\setoi{\set{0,1}}
\newcommand\dtv{d_{\mathrm{TV}}}
\newcommand\dd{\,\textup{d}}
\newcommand\ddx{\textup{d}}
\newcommand\rhs{right-hand side}
\newcommand\sss[1]{^{(#1)}}
\newcommand\sssn{\sss n}
\newcommand\hS{\widehat S}
\newcommand\hg{\widehat g}
\newcommand\hnu{\widehat\nu}
\newcommand\hpsi{\widehat\psi}
\newcommand\ctoo{\cT_\infty}
\newcommand\xxo{X_0^*}
\newcommand\aaa{{\boldsymbol{\alpha}}}
\newcommand\bbb{\boldsymbol{\beta}}
\newcommand\cAx{\cA^*}
\newcommand\xY{W}
\newcommand\tY{\widetilde\xY}
\newcommand\tZ{\widetilde Z}
\newcommand\zj{Z_{j}}
\newcommand\zjn{Z_{jn}}
\newcommand\tzj{\tZ_{j\gl}}
\newcommand\ggx{f_1}
\newcommand\ggh{g}
\newcommand\hggh{\widehat\ggh}
\newcommand\FF{f_0} 
\newcommand\frakt[1]{\xcpar{#1}}
\newcommand\Bigfrakt[1]{\Bigcpar{#1}}
\newcommand\kk{\kappa}
\newcommand\dl{L}
\newcommand\dtm{\dl^{\mathsf T}_M}
\newcommand\dtx[1]{\dl^{\mathsf T}_{#1}}
\newcommand\dkr{\dl^{\mathsf K}_R}
\newcommand\dkv{D_{K,V}}
\newcommand\hgs{\widetilde\sigma}
\newcommand\tgs{\widehat\sigma}
\newcommand\hgss{\hgs^2}
\newcommand\tD{\widetilde D}
\newcommand\tK{\widetilde K}
\newcommand\tKi{\widetilde K_1}
\newcommand\tnu{\widetilde \nu}
\newcommand\sqrtvv{\sqrt{V_2}}
\newcommand\gcdpq{\gcd(\ln p,\ln q)}
\newcommand\pat{^{P}}
\newcommand\psibj{\psi_{bj}}
\newcommand\ppsiq[1]{\psi_{\arabic{ppsi}#1}}
\newcommand\gdm{\gD_M}
\newcommand\CS{Cauchy--Schwarz}
\newcommand\CSineq{\CS{} inequality}
\newcommand\REM[1]{{\raggedright\texttt{[#1]}\par\marginal{XXX}}}
\newenvironment{comment}{\setbox0=\vbox\bgroup}{\egroup} 
\newcommand\citetq[2]{\citeauthor{#2} \cite[{\frenchspacing #1}]{#2}}
\newcommand\urladdrx[1]{{\urladdr{\def~{{\tiny$\sim$}}#1}}}
\begin{document}
\title
{Renewal theory in analysis of tries and strings}

\date{December 11, 2009} 

\author{Svante Janson}
\address{Department of Mathematics, Uppsala University, PO Box 480,
SE-751~06 Uppsala, Sweden}
\email{svante.janson@math.uu.se}
\urladdrx{http://www.math.uu.se/~svante/}

\dedicatory{To my colleague and friend Allan Gut on the occasion of
  his retirement}

\subjclass[2000]{68P05; 60K10} 

\begin{comment}  

60K Special processes
60K05 Renewal theory
60K10 Applications (reliability, demand theory, etc.)
60K15 Markov renewal processes, semi-Markov processes
60K20 Applications of Markov renewal processes (reliability, queueing
      networks, etc. ) 
60K35 Interacting random processes; statistical mechanics type models;
      percolation theory [See also 82B43, 82C43] 

68 Computer science
68P Theory of data
68P05 Data structures
68P10 Searching and sorting

\end{comment}

\begin{abstract} 
We give a survey of a number of simple applications of renewal theory
to problems on random strings and tries: insertion depth, size,
insertion mode and imbalance of tries; variations for $b$-tries and
Patricia tries; Khodak and Tunstall codes.
\end{abstract}

\maketitle

\section{Introduction}\label{S:intro}

Although it long has been realized that renewal theory is a useful
tool in the study of random strings and related structures, it has not
always been used to its full potential.
The purpose of the present paper is to give a survey presenting in a
unified way some 
simple applications of renewal theory to a number of problems
involving random strings, in particular several problems on tries,
which are tree structures constructed from strings.
(Other applications of renewal theory to problems on random trees are
given in, \eg, \cite{DGrubel} and \cite{SJ165}.)

Since our purpose is to illustrate a method rather than to prove new
results, we present a number of problems in a simple form without
trying to be as general as possible. In particular, for simplicity we
exclusively consider random strings in the alphabet \setoi, and assume
that the ``letters'' (bits) $\xi_i$ in the strings are \iid.
Note, however, that the methods below are much more widely applicable and
extend in a straightforward way to larger alphabets. The methods also,
at least in principle,  extend to, for
example, Markov sources where $\xi_i$ is a Markov chain. 
(See \eg{} \citetq{Section  2.1}{Szp-book} 
and \citet{CFV01} for various interesting
probability models of random strings.
Renewal theory for Markov chains is treated for example by
\citet{Kesten} and \citet{AN78}.)
Indeed, one of the purposes of this paper is to make propaganda for
the use of renewal theory to study \eg{} Markov models, even if we do
not do this in the present paper. (Some such results may appear elsewhere.)

The results below are (mostly) not new;
they have earlier been proved by other methods, in particular Mellin
transforms.
(We try to give proper references for the theorems, but we do not
attempt to cover the large literature on random tries and strings in
any completeness.)
Indeed, such methods often provide sharper results, with better error
bounds or higher order terms, and these methods too certainly are
important. Nevertheless, we believe that renewal
theory often is a valuable method that yields the leading terms in a
simple and intuitive way, and that it ought to be more widely used for
this type of problems. Moreover, as said above, this method may be
easier to extend to other situations. (Further, it gives one
explanation for the oscillatory terms that often appear, as an
instance of the arithmetic case in renewal theory. Note that oscillatory terms
become much less common for larger alphabets, except when all letters
are equiprobable, because it is more difficult to be arithmetic, see
\refApp{App}.)

We treat a number of problems on random tries in Sections
\ref{Strie}--\ref{Sexp} and \ref{Sins} (insertion depth, imbalance,
size, insertion mode).
We consider 
$b$-tries in \refS{Sb} and Patricia tries in  \refS{Spat}. 
Tunstall and Khodak codes are studied in \refS{Scodes}. 
A random walk in a region bounded by two crossing lines is studied in
\refS{SX}. 
The standard
results from renewal theory that we use are for convenience collected
in \refApp{App}.

\subsection*{Notation}

We use $\pto$ and $\dto$ for convergence in probability and
in distribution, respectively.

If $Z_n$ is a sequence of random variables and 
$\mu_n$ and $\gss_n$ are sequences of real numbers with $\gss_n>0$
(for large $n$, at least), then
$Z_n\sim\AsN(\mu_n,\gss_n)$ means that $(Z_n-\mu_n)/\gs_n\dto N(0,1)$.


We denote the fractional part of a real number $x$ by 
$\frakt x\=x-\floor x$.

\begin{ack}
  I thank 
Allan Gut and Wojciech Szpankowski for 
inspiration and helpful  discussions.
\end{ack}

\section{Preliminaries}\label{Sprel}

Suppose that $\Xi\sss1,\Xi\sss2,\dots$ is an \iid{} sequence of random
infinite strings $\Xi\sssn=\xi_1\sssn\xi_2\sssn\dotsm$, with letters
$\xi_i\sssn$ in an alphabet $\cA$. (When
the superscript $n$ does not matter we drop it; we thus write
$\Xi=\xi_1\xi_2\dotsm$ for a generic string in the sequence.)
For simplicity, we consider only the case $\cA=\setoi$, and further
assume that the individual letters $\xi_i$ are \iid{} with
$\xi_i\sim\Be(p)$ for some fixed $p\in(0,1)$, \ie, 
$\P(\xi_i=1)=p$ and $\P(\xi_i=0)=q\=1-p$.

Given a finite string $\ga_1\dotsm\ga_n\in\cA^n$, let
$P(\ga_1\dotsm\ga_n)$ be the probability that the random string $\Xi$
begins with $\ga_1\dotsm\ga_n$. 
In particular,
for a single letter, $P(0)=q$ and $P(1)=p$, and in general
\begin{equation}\label{paaa}
  P(\ga_1\dotsm\ga_n)=\prodin P(\ga_i)=\prodin p^{\ga_i}q^{1-\ga_i}.
\end{equation}

Given a random string $\xi_1\xi_2\dotsm$, we define
\begin{equation}\label{x}
  X_i\=-\ln P(\xi_i)
=-\ln\bigpar{ p^{\xi_i}q^{1-\xi_i}}
=
\begin{cases}
  -\ln q, & \xi_i=0, \\
  -\ln p, & \xi_i=1.
\end{cases}
\end{equation}
Note that $X_1,X_2,\dots$ is an \iid{} sequence of positive random
variables with
\begin{align}
  \E X_i &= H:=-p\ln p -q\ln q, \label{H}
\intertext{the usual \emph{entropy} of each letter $\xi_i$, and}
  \E X^2_i& = H_2:=p\ln^2 p +q\ln^2q,
\\
\Var X_i &= H_2-H^2=pq (\ln p-\ln q)^2=pq \ln^2 (p/q).
\end{align}
Note that the case $p=q=1/2$ is special; in this case $X_i=\ln 2$ is
deterministic and $\Var X_i=0$; for all other $p\in(0,1)$, $0<\Var X_i<\infty$.

By \eqref{x}, $X_i$ is supported on \set{\ln(1/p),\ln(1/q)}. 
It is well-known, both in renewal theory and in the analysis of tries,
that one frequently has to distinguish between two cases: the
\emph{arithmetic} (or \emph{lattice})
case  when the support is a subset of $d\bbZ$ for
some $d>0$, and 
the \emph{non-arithmetic} (or \emph{non-lattice})
case when it is not, see further \refApp{App}.
For $X_i$ given by \eqref{x}, this yields the following cases:
\begin{description}
  \item[arithmetic]
The ratio $\ln p/\ln q$ is rational. 
More precisely, 
$X_i$ then is $d$-arithmetic, where $d$ equals $\gcd(\ln p,\ln q)$, the
largest positive real 
number such that $\ln p$ and $\ln q$ both are integer multiples of $d$.
If $\ln p/\ln q=a/b$, where $a$ and $b$ are
relatively prime positive integers, then 
\begin{equation}\label{d}
d=
\gcd(\ln p,\ln q)
=\frac{|\ln p|}a=\frac{|\ln q|}b.  
\end{equation}

  \item[non-arithmetic]  
The ratio $\ln p/\ln q$ is irrational. 
\end{description}

We let $S_n$ denote the partial sums of $X_i$: $S_n\=\sumin X_i$.
Thus
\begin{equation}\label{sop}
  P(\xi_1\dotsm\xi_n)=\prodin P(\xi_i)
=\prodin e^{-X_i}
=e^{-S_n}.
\end{equation}
(This is a random variable, since it depends on the random string
$\xi_1\dotsm\xi_n$; it can be interpreted as the probability that
another random string $\Xi\sss j$ begins with the same $n$ letters as
observed.)

We introduce the standard renewal theory notations 
(see \eg{} \citetq{Chapter  2}{GutSRW}), for $t\ge0$ and $n\ge1$,
\begin{align}
  \nu(t)&\=\min\set{n:S_n> t}   
, \label{nu}
\\
  F_n(t)&\=\P(S_n\le t)  
=\P (\nu(t)> n),
\\
 U(t)&\=\E \nu(t)=\sumn F_n(t). \label{U}
\end{align}
Note that \eqref{U} means that, for any function $g\ge0$,
\begin{equation}
  \label{U2}
\intoo g(t)\dd U(t)=\sumn\intoo g(t)\dd F_n(t)
=\sumn \E g(S_n).
\end{equation}

We also allow the summation to start with an initial random variable
$X_0$, which is independent of $X_1,X_2,\dots$, but may have an
arbitrary real-valued distribution. We then define
\begin{align}
\hS_n&\=\sumn X_i=X_0+\sumni X_i, \label{hS}
\\
  \hnu(t)&\=\min\set{n:\hS_n> t}. \label{hnu}
\end{align}

\section{Insertion depth in a trie}\label{Strie}

A \emph{trie} is a binary tree structure designed to store a set of
strings. It is constructed from the strings by the following recursive
procedure, see further \eg{} 
\citetq{Section 6.3}{KnuthIII},
\citetq{Chapter 5}{Mahmoud}
or \citetq{Section 1.1}{Szp-book}: 
If the set of strings is empty, then the trie is empty; 
if there is only one string, then the trie consists of a single node (the
root), and the string is stored there; 
if there is more than one string, then the trie begins with a root,
without any string stored, all strings that begin with 0 are passed to
the left subtree of the root, and all strings that begin with 1 are
passed to the right subtree. In the latter case, the subtrees are constructed
recursively by the same procedure, with the only difference that at
the $k$th level, the strings are partitioned according to the $k$th
letter.
We assume that the strings are distinct (in our random model, this
holds with probability 1), and then the procedure terminates.
Note that one string is stored in each leaf of the trie, and that no
strings are stored in the remaining nodes. The leaves are also called
\emph{external nodes} and the remaining nodes 
are called \emph{internal nodes}; note that every internal node has
one or two children.

The trie is a finite subtree of the complete infinite binary tree $\ctoo$,
where the nodes can be labelled by finite strings
$\aaa=\ga_1\dotsm\ga_k\in\cAx\=\bigcup_{k=0}^\infty\cA^k$ 
(the root is the empty
string).
It is easily seen that a node $\ga_1\dotsm\ga_k$ in $\ctoo$ is an
internal node of the trie if and only if there are at least 2 strings
(in the given set) that start with $\ga_1\dotsm\ga_k$, and (for
$k\ge1$)
that $\ga_1\dotsm\ga_k$ is an external node if and only if there is exactly
one such string, and there is at least one other string beginning with
$\ga_1\dotsm\ga_{k-1}$.

Let $D_n$ be the depth (= path length) of the node containing a given string, 
for example the first, 
in the trie constructed from  $n$ random strings
$\Xi\sss1,\dots,\Xi\sssn$. 
(By symmetry, any of $n$ strings will have a depth with the same
distribution.) Denoting the chosen string by $\Xi=\xi_1\xi_2\dotsm$,
the depth $D_n$ is thus at most $k$ if and only if no other of the
strings begins with $\xi_1\dotsm\xi_k$.
Conditioning on the string $\Xi$, each of the other strings has this
beginning with probability $P(\xi_1\dotsm\xi_k)$, and thus by independence,
recalling \eqref{sop},
\begin{equation}\label{sw1}
  \P(D_n\le k\mid\Xi)
=\bigpar{1-P(\xi_1\dotsm\xi_k)}^{n-1}
=\bigpar{1-e^{-S_k}}^{n-1}.
\end{equation}

Let $X_0=X_0\sssn$ be a random variable with the distribution
\begin{equation}\label{x0}
  \P(X_0\sssn>x)=\bigpar{1-e^x/n}_+^{n-1}
=\bigpar{1-e^{x-\ln n}}^{n-1}_+,
\qquad x\in\oooo.
\end{equation}
As \ntoo, this converges to $\exp(-e^x)$, and thus $X_0\sssn\to
\xxo$, where $-\xxo$ has the Gumbel distribution with $\P(-\xxo\le
x)=\exp(-\exp(-x))$. 

\begin{remark}
It is easily seen that $X_0\sssn\eqd \ln n-\max\set{Z_1,\dots,Z_{n-1}}$,
where $Z_1,Z_2,\dots$ are \iid{} $\Exp(1)$ random variables. 
Cf.\  \citetq{Example 1.7.2}{LLR}.  
\end{remark}

Using \eqref{x0}, we can rewrite \eqref{sw1} as
\begin{equation}\label{sw2}
  \P(D_n\le k\mid\Xi)
=\P\bigpar{X_0\sssn>\ln n-S_k\mid\Xi}
\end{equation}
and thus, recalling \eqref{hS} and \eqref{hnu},
\begin{equation}\label{sw3}
  \P(D_n\le k)
=\P\bigpar{X_0>\ln n-S_k}
=\P\bigpar{\hS_k>\ln n}
=\P\bigpar{\hnu(\ln n)\le k}.
\end{equation}
Since $k\ge1$ is arbitrary, this shows that 
\begin{equation}\label{dnu}
D_n\eqd\hnu(\ln n).  
\end{equation}

In the case $p=1/2$, $S_k=k\ln 2$ is non-random, and the only
randomness in $\hnu(\ln n)$ comes from $X_0$; in fact, it is easy to
see that $\P(D_n\le k)\to \P(-\xxo\le t)$ if $k\to\infty$ and
$n\to\infty$ along sequences such that $k\ln 2-\ln n\to t\in\oooo$, see
\cite{JR86}, \cite{P86}, \cite[Theorem 5.7]{Mahmoud}, \cite{LP06}.
This result can also be expressed as $\dtv(D_n,\ceil{(\ln n-\xxo)/\ln 2})\to0$
as \ntoo, where $\dtv$ denotes the total variation distance of the
distributions, see \cite[Example 4.5]{SJ175}.

However, if $p\neq1/2$, then each $X_k$ is truly random, which
leads to larger dispersion of $D_n$.
We can apply standard renewal theory theorems, see Theorems
\ref{TA1}--\ref{TA2} and \refR{RA1} in the appendix, and immediately
obtain the following. For other, earlier proofs see 
\citetq{Sections 6.3 and 5.2}{KnuthIII},
\citet{P85,P86} and
\citetq{Section 5.5}{Mahmoud}. 
%
The Markov case is treated by \citet{JS91}, ergodic
strings by \citet{P85}, and a class of general dynamical sources
by \citet{CFV01}.

\begin{theorem}\label{TD1}
For every $p\in(0,1)$, 
  \begin{equation}\label{td1}
	\frac{D_n}{\ln n} \pto \frac1{H},
  \end{equation}
with $H$ the entropy given by \eqref{H}.
Moreover, the convergence holds in every $L^r$, $r<\infty$,
too. Hence, all moments converge in \eqref{td1} and
  \begin{equation}
\E D_n^r \sim H^{-r}(\ln n)^r,
\qquad 0<r<\infty. 
\end{equation}
\end{theorem}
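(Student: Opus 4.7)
The plan is to reduce everything to the identity $D_n \eqd \hnu(\ln n)$ obtained in \eqref{dnu} and then invoke the renewal-theoretic laws of large numbers collected in Theorems \ref{TA1}--\ref{TA2} and \refR{RA1}. The only wrinkle is that $\hnu$ involves the $n$-dependent initial term $X_0^{(n)}$, so one has to check that this perturbation does not affect the asymptotics (in probability and in $L^r$).

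First I would rewrite $\hnu(\ln n) = \nu\bigpar{\ln n - X_0^{(n)}}$, which is immediate from the definitions \eqref{hS}--\eqref{hnu}. The distributional description \eqref{x0} yields uniform tail control: for any $y>0$,
\begin{equation*}
\P\bigpar{X_0^{(n)} \le -y} = 1 - \bigpar{1 - e^{-y}/n}^{n-1} \le e^{-y},
\end{equation*}
while $X_0^{(n)} \le \ln n$ a.s.\ by \eqref{x0}. In particular $X_0^{(n)}/\ln n \pto 0$, and the sequence $(X_0^{(n)})_n$ has moments of all orders bounded uniformly in $n$.

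For convergence in probability, the renewal SLLN (\refT{TA1}) gives $\nu(t)/t \asto 1/H$ as $t\to\infty$. Combined with $X_0^{(n)}/\ln n \pto 0$, this yields
\begin{equation*}
\frac{D_n}{\ln n} \eqd \frac{\nu\bigpar{\ln n - X_0^{(n)}}}{\ln n} \pto \frac1H,
\end{equation*}
which is \eqref{td1}.

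For $L^r$ convergence it suffices to show that $(D_n/\ln n)^r$ is uniformly integrable. Since $X_i \ge c := \min\bigpar{\ln(1/p), \ln(1/q)} > 0$ deterministically, one has the surely valid bound $\nu(t) \le t/c + 1$ for $t \ge 0$, hence
\begin{equation*}
\frac{D_n}{\ln n} \le \frac{\ln n - X_0^{(n)}}{c\ln n} + \frac1{\ln n} \le \frac1c + \frac{|X_0^{(n)}|}{c\ln n} + \frac1{\ln n}.
\end{equation*}
The tail bound above shows that $|X_0^{(n)}|$ has moments of all orders bounded uniformly in $n$, so the right-hand side is bounded in every $L^r$. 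This gives uniform integrability of $(D_n/\ln n)^r$ for every $r<\infty$, hence $L^r$ convergence, and then $\E D_n^r \sim H^{-r}(\ln n)^r$. The main (and only real) obstacle is this moment control, i.e.\ verifying that the $n$-dependent initial variable $X_0^{(n)}$ does not inflate the moments; but the explicit formula \eqref{x0} makes this straightforward.
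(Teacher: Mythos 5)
Your proposal is correct and takes essentially the paper's route: the paper also reduces everything to the identity \eqref{dnu} and then invokes Theorem~\ref{TA1} together with Remark~\ref{RA1}, while you verify the needed uniformity in the $n$-dependent delay $X_0\sssn$ by hand, using the model-specific bound $\nu(t)\le t/c+1$. (One small point to make explicit: your two displayed facts control only the lower tail and give $X_0\sssn\le\ln n$, so for $X_0\sssn/\ln n\pto0$ you should also note the upper-tail estimate $\P(X_0\sssn>\eps\ln n)=(1-n^{\eps-1})^{n-1}\to0$, which is immediate from \eqref{x0}.)
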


\begin{theorem} \label{TD1+}
More precisely:
    \begin{thmenumerate}
\item
  If\/ $\ln p/\ln q$ is irrational, then, as $n\to\infty$,
  \begin{equation}\label{em1}
\E D_n = \frac{\ln n}{H} + \frac{H_2}{2H^2}+\frac{\gamma}{H}+o(1).
  \end{equation}

\item
  If\/ $\ln p/\ln q$ is rational, 
then, as $n\to\infty$,
  \begin{equation}
\E D_n = \frac{\ln n}{H} + \frac{H_2}{2H^2}+\frac{\gamma}{H}
+\ppsi(\ln n)+o(1),
\ppsidef\ppsitd 
\end{equation}
where $\ppsix(t)$ is a small continuous function, with period $d=\gcdpq$ in
$t$,
given by
\begin{equation}\label{ppsiTD1+}
\ppsix(t)\= 
-\frac 1H \sum_{k\neq0}
{\Gamma(-2\pi\ii k/d)}{} e^{2\pi\ii k t/d}.
\end{equation}
  \end{thmenumerate}
\end{theorem}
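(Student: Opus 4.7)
The plan is to reduce to a renewal-theoretic estimate via \eqref{dnu} and then identify the periodic fluctuation by Fourier analysis. Since \eqref{x0} forces $X_0\sssn \le \ln n$ almost surely, conditioning on $X_0\sssn$ gives $\hnu(\ln n) = \nu(\ln n - X_0\sssn)$ and therefore
\begin{equation*}
\E D_n = \E\bigsqpar{U(\ln n - X_0\sssn)}, \qquad U(s)\=\E\nu(s).
\end{equation*}
By the renewal theorems in \refApp{App}, as $s\to\infty$,
\begin{equation*}
U(s) = \frac{s}{H} + \frac{H_2}{2H^2} + \phi(s) + o(1),
\end{equation*}
where $\phi\equiv 0$ in the non-arithmetic case and $\phi(s) = (d/H)(\tfrac12 - \frakt{s/d})$ is a bounded $d$-periodic sawtooth in the arithmetic case.

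Taking expectation gives $\E D_n = (\ln n - \E X_0\sssn)/H + H_2/(2H^2) + \E\phi(\ln n - X_0\sssn) + o(1)$; the $o(1)$ is uniform because $\ln n - X_0\sssn$ is of order $\ln n$ with overwhelming probability, while $\phi$ is merely bounded on the rare event where it is small. The mean of $X_0\sssn$ is extracted from the representation $X_0\sssn \eqd \ln n - \max\set{Z_1,\ldots,Z_{n-1}}$, $Z_i\sim\Exp(1)$ iid: the identity $\E\max\set{Z_1,\ldots,Z_{n-1}} = \sum_{j=1}^{n-1} 1/j = \ln n - \gamma + O(n\qw)$ yields $\E X_0\sssn = -\gamma + o(1)$, which proves \eqref{em1} in the non-arithmetic case.

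The arithmetic case is the main obstacle. Substituting the Fourier expansion $\phi(s) = (d/H)\sum_{k\neq 0}(2\pi i k)\qw e^{2\pi i k s/d}$ into $\E\phi(\ln n - X_0\sssn)$ and exchanging sum and expectation gives
\begin{equation*}
\E\phi(\ln n - X_0\sssn) = \frac{d}{H}\sum_{k\neq 0}\frac{e^{2\pi i k\ln n/d}}{2\pi i k}\,\E e^{-2\pi i k X_0\sssn/d}.
\end{equation*}
Using the Exp-max representation one computes $\E e^{-isX_0\sssn} = n^{-is}\Gamma(n)\Gamma(1-is)/\Gamma(n-is) \to \Gamma(1-is)$, matching the characteristic function of the Gumbel limit $X_0^*$ (via $u = e^{-y}$: $\E e^{-isX_0^*} = \int_0^\infty u^{-is}e^{-u}\dd u = \Gamma(1-is)$). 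Applying the functional equation $\Gamma(1-z) = -z\Gamma(-z)$ at $z = 2\pi i k/d$ collapses each summand to $-H\qw\Gamma(-2\pi i k/d)e^{2\pi i k\ln n/d}$, which is exactly \eqref{ppsiTD1+}. The delicate step is justifying the term-by-term manipulation: the Fourier series for $\phi$ converges only conditionally, but the factors $\Gamma(1-2\pi i k/d)$ decay like $\exp(-\pi^2|k|/d)$ on the imaginary axis, so after taking expectation the series converges absolutely; uniform control of $n^{-is}\Gamma(n)/\Gamma(n-is)$ in $k$ via Stirling then lets dominated convergence handle both the interchange and the passage to the Gumbel limit.
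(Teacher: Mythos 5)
Your argument is correct, and at heart it is the same renewal-theoretic proof as the paper's: reduce to $D_n\eqd\hnu(\ln n)$ via \eqref{dnu}, apply the second-order expansion of the renewal function, and identify the constant and the oscillation through the Gumbel characteristic function $\Gamma(1-\ii t)$ and the functional equation $\Gamma(1-z)=-z\Gamma(-z)$. The difference is one of packaging. The paper quotes \refT{TA1+} (the delayed versions \eqref{ta1+a}, \eqref{ta1+b}), invokes \refR{RA1} to handle the $n$-dependence of $X_0\sssn$ and replaces it by its Gumbel limit $\xxo$, and then reads off the Fourier series from \refL{Lsofie}. You instead condition on $X_0\sssn$, use the undelayed expansions \eqref{ta1+a0}--\eqref{ta1+b0} for $U$, compute $\E X_0\sssn$ exactly from harmonic numbers, and re-derive the Lsofie-type identity from the exact characteristic function $\E e^{-\ii sX_0\sssn}=n^{-\ii s}\Gamma(n)\Gamma(1-\ii s)/\Gamma(n-\ii s)$. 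What this buys is explicitness: you avoid the uniformity hypotheses of \refR{RA1} and the passage to the limit variable. The price is that you must justify two steps by hand: plugging a random, $n$-dependent argument into an $o(1)$ expansion (this needs boundedness of $U(s)-s/H$ on $\ooo$, e.g.\ Lorden's inequality, together with $\ln n-X_0\sssn\pto\infty$; your sketch is essentially this), and the term-by-term Fourier manipulation, which is precisely the content of \refL{Lsofie}.

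Two small blemishes, neither fatal. First, the harmonic-number asymptotic should read $\sum_{j=1}^{n-1}1/j=\ln n+\gamma+O(n\qw)$, not $\ln n-\gamma+O(n\qw)$; with the correct sign one indeed gets $\E X_0\sssn\to-\gamma$, which is the value you actually use, so this is only a slip in the displayed identity. Second, the domination in the arithmetic case should not be attributed to Stirling control of $n^{-\ii s}\Gamma(n)/\Gamma(n-\ii s)$ alone: that ratio is not uniformly bounded in $s$ and $n$ (its modulus is $\prod_{j=1}^{n-1}(1+s^2/j^2)^{-1/2}/|\Gamma(1-\ii s)|$, which for fixed $n$ grows exponentially in $|s|$). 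The usable uniform bound is on the characteristic function itself: $|\E e^{-\ii sX_0\sssn}|=\prod_{j=1}^{n-1}(1+s^2/j^2)^{-1/2}\le 2|s|^{-2}$ for all $n\ge3$ and $|s|\ge1$, which makes your series absolutely summable uniformly in $n$ and lets each term pass to its Gumbel limit; combined with the uniformly bounded partial sums of the sawtooth series and the continuity of the law of $X_0\sssn$ (equivalently, the hypothesis $\gf(t)=O(|t|^{-\gd})$ of \refL{Lsofie}, amply satisfied here), this legitimizes the interchange of expectation and summation. With these repairs your computation reproduces \eqref{em1} and \eqref{ppsiTD1+} exactly.
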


\begin{proof}
  The non-arithmetic case \eqref{em1} follows directly from
  \eqref{dnu} and \eqref{ta1+a}; we can replace $X_0\sssn$ by the
  limit $\xxo$, and 
since the Gumbel variable $-\xxo$ has characteristic function 
$\E e^{-\ii t\xxo}=\Gamma(1-\ii t)$,  we have $\E\xxo=\Gamma'(1)=-\gamma$.

In the arithmetic case, we use \eqref{ta1+b}, together with
\refL{Lsofie} which yields
\begin{equation*}
  \E\Bigfrakt{\frac td - \frac{\xxo}d}
=
\frac12-\sum_{k\neq0}
\frac{\Gamma(1-2\pi\ii k/d)}{2\pi k \ii} e^{2\pi\ii k t/d}
=
\frac12+\frac1d\sum_{k\neq0}
\Gamma(-2\pi\ii k/d) e^{2\pi\ii k t/d}.
\qedhere	
\end{equation*}
\end{proof}

\begin{theorem}\label{TD2}
  Suppose that $p\in(0,1)$.  Then, as \ntoo,
  \begin{equation*}
\frac{D_n-H\qw \ln n}{\sqrt{\ln n}} \dto N\Bigpar{0,\frac{\gss}{H^3}},
  \end{equation*}
with $\gss=H_2-H^2=pq(\ln p-\ln q)^2$.
If $p\neq1/2$, then $\gss>0$ and this can be written as
\begin{equation*}
  D_n\sim\AsN\bigpar{H\qw\ln n,H\qwww\gss\ln n}.
\end{equation*}
Moreover, 
\begin{equation*}
 \Var D_n = \frac{\gss}{H^3}\ln n+o(\ln n).
\end{equation*}
\end{theorem}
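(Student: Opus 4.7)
The plan is to combine the distributional identity $D_n \eqd \hnu(\ln n)$ from \eqref{dnu} with the central limit theorem for renewal counting processes recorded in \refApp{App}. Assuming $\gss = \Var X_1 > 0$ (i.e., $p \neq 1/2$), the renewal CLT gives
\begin{equation*}
\frac{\nu(t) - t/H}{\sqrt{t}} \dto N\Bigpar{0,\frac{\gss}{H^3}}, \qquad \ttoo,
\end{equation*}
with $\nu(t) = \min\{n : S_n > t\}$. To pass from $\nu$ to $\hnu$, I would condition on $X_0 = X_0\sssn$: outside the $o(1)$-probability event $\{X_0\sssn > \ln n\}$ we have $\hnu(\ln n) = \nu(\ln n - X_0\sssn)$, and since $X_0\sssn \dto \xxo$ by \eqref{x0} (a tight, proper limit), the shift is $\Op(1) = \op(\sqrt{\ln n})$. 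Slutsky's theorem then produces the asserted weak limit for $D_n$. The degenerate case $p = 1/2$ is handled separately: here $X_i = \ln 2$ is deterministic, so $D_n - \ln n/\ln 2 = \Op(1)$, consistent with the statement at $\gss = 0$.

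For the variance statement, distributional convergence combined with Fatou's lemma already gives $\liminf_{\ntoo} \Var D_n/\ln n \ge \gss/H^3$. For the matching upper bound I would establish uniform integrability of the squared standardized variable $((D_n - H\qw \ln n)/\sqrt{\ln n})^2$ by bounding a $(2+\delta)$-moment. Since the $X_i$ are bounded, standard Cram\'er-type tail bounds yield exponential concentration of $\nu(t)$ around $t/H$; and $X_0\sssn$ has uniformly bounded moments in every $L^r$ thanks to the explicit distribution \eqref{x0}. Together these give the desired $L^{2+\delta}$ bound, hence $L^2$ convergence of the standardized variable, from which $\Var D_n = \gss H\qwww \ln n + o(\ln n)$ follows.

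The main obstacle is the variance upgrade: the weak limit falls out essentially for free from \eqref{dnu} and the renewal CLT, while variance convergence requires moment control beyond the CLT. If the renewal theorem invoked from \refApp{App} packages distributional and $L^2$ convergence together (as is customary; cf.\ Gut's monograph), the obstacle disappears and only the routine verification that the initial perturbation $X_0\sssn$ is harmless at both the first- and second-moment levels remains.
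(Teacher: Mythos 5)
Your proposal is correct and takes essentially the same route as the paper: Theorem~\ref{TD2} is obtained there directly from the identity \eqref{dnu}, $D_n\eqd\hnu(\ln n)$, together with the delayed-renewal CLT and variance asymptotics of \refT{TA2}, where \refR{RA1} supplies precisely the tightness and $\sup_n \E|X_0\sssn|^2<\infty$ conditions that you verify by hand. Your conditioning/Slutsky step and the uniform-integrability argument via exponential concentration are just an unpacking of those appendix statements, so the argument is sound and matches the paper's.
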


In the argument above, $X_0$ depends on $n$. This is a nuisance,
although no real problem (see \refR{RA1}). 
An alternative that avoids this problem is to 
Poissonize by considering a random number of strings. In this case
it is simplest to consider $1+\Po(\gl)$ strings, so that a selected
string $\Xi$ is compared to a Poisson number $\Po(\gl)$ other strings, for a
parameter $\gl\to\infty$. Conditioned on $\Xi$, the number of other
strings beginning with $\xi_1\dotsm\xi_k$ then has the Poisson
distribution $\Po(\gl P(\xi_1\dotsm\xi_k))$.  Thus we obtain instead
of \eqref{sw1}, now denoting the depth by $D_\gl$,
\begin{equation*}
  \begin{split}
  \P(D_\gl\le k\mid\Xi)
&=
e^{-\gl P(\xi_1\dotsm\xi_k)}=e^{-\gl e^{-S_k}}
=e^{-e^{-(S_k-\ln\gl)}}
\\
&=
\P(-\xxo< S_k-\ln\gl)
=\P(S_k+\xxo> \ln\gl)
	=\P\bigpar{\hnu(\ln\gl)\le k},
  \end{split}
\end{equation*}
where $X_0\=\xxo$ now is independent of $n$, and consequently
$D_\gl\eqd\hnu(\ln\gl)$. We obtain the same asymptotics as for $D_n$
above, directly from Theorems \ref{TA1}--\ref{TA2}.
It is in this case easy to depoissonize, by noting that $D_n$ is
stochastically monotone in $n$, and derive the results for $D_n$ from
the results for $D_\gl$ by choosing $\gl=n\pm n^{2/3}$; we omit the details.

\section{Imbalance in tries}\label{Simbalance}

\citet{Mahmoud:imbalance} studied the imbalance factor of a string
in a trie, defined as the number of steps to the right minus the
number of steps to the left in the path from the root to the leaf
where the string is stored.
We define
\begin{equation*}
  Y_i\=2\xi_i-1=
  \begin{cases}
	-1, & \xi_1=0,
\\
	+1, & \xi_1=1,
  \end{cases}
\end{equation*}
and denote the corresponding partial sums by $V_k\=\sumik Y_i$.
Thus the imbalance factor $\gD_n$ of the string $\Xi$ in a random
trie with $n$ strings is $V_{D_n}$, with $D_n$ as in \refS{Strie} the
depth of the string.

It follows immediately from \eqref{sw2} that \eqref{sw3} holds also
conditioned on the sequence $(Y_1,Y_2,\dots)$.
As a consequence, for any $k$ and $v$,
\begin{equation*}
  \P(D_n= k\mid V_k=v)
=\P\bigpar{\hnu(\ln n)= k\mid V_k=v},
\end{equation*}
which shows that 
\begin{equation*}
(D_n,\gD_n)
  =(D_n,V_{D_n})
\eqd\bigpar{\hnu(\ln n),V_{\hnu(\ln n)}}.  
\end{equation*}
In particular,
\begin{equation*}
\gD_n
\eqd V_{\hnu(\ln n)}.  
\end{equation*}
We may apply \refT{TAVtaut} (and \refR{RVtaut}).
A simple calculation yields $\Var(\mu_X Y_1-\mu_Y X_1)=pq(\ln p+\ln
q)^2=pq\ln^2 (pq)$, and we obtain
the 
central limit theorem by \citet{Mahmoud}:
\begin{theorem}
  As \ntoo,
  \begin{equation*}
\gD_n
\sim
\AsN\lrpar{\frac{p-q}{H}\ln n,\,\frac{pq\ln^2(pq)}{H^3}\ln n}.	
  \end{equation*}
\end{theorem}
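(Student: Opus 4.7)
The plan is to use the distributional identity $\gD_n\eqd V_{\hnu(\ln n)}$ established immediately before the statement, and invoke the joint renewal-theoretic central limit theorem of \refT{TAVtaut} (together with \refR{RVtaut}, which accommodates the initial summand $X_0=\xxo$ in $\hS$). Writing $\mu_X\=\E X_1=H$ and $\mu_Y\=\E Y_1=p-q$, the elementary renewal theorem gives $\hnu(\ln n)/\ln n\pto 1/H$, so the natural centering of $V_{\hnu(\ln n)}$ is $(\mu_Y/\mu_X)\ln n=(p-q)(\ln n)/H$, matching the claim.

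The $\sqrt{\ln n}$-fluctuations come from the identity
\[ \mu_X V_{\hnu}-\mu_Y\hS_{\hnu}=\sum_{i=1}^{\hnu}(\mu_X Y_i-\mu_Y X_i)-\mu_Y X_0, \]
whose main term is a random sum of i.i.d.\ mean-zero variables with variance $\gs^2\=\Var(HY_1-(p-q)X_1)$. An Anscombe-type argument (the content of \refT{TAVtaut}) gives a Gaussian limit at rate $\sqrt{\hnu}$, and since $\hnu(\ln n)\sim(\ln n)/H$, the asymptotic variance of $V_{\hnu(\ln n)}$ is $\gs^2(\ln n)/H^3$; the $\Op(1)$ contributions from $X_0$ and from the overshoot $\hS_{\hnu}-\ln n$ are negligible at this scale, so dividing through by $\mu_X=H$ gives the stated Gaussian limit.

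It remains to evaluate $\gs^2$. Since $(Y_1,X_1)$ takes the two values $(-1,-\ln q)$ with probability $q$ and $(+1,-\ln p)$ with probability $p$, the combination $HY_1-(p-q)X_1$ is itself two-valued, so $\gs^2=pq(b-a)^2$ with
\[ b-a=\bigpar{H+(p-q)\ln p}-\bigpar{-H+(p-q)\ln q}=2H+(p-q)(\ln p-\ln q). \]
Substituting $2H=-2p\ln p-2q\ln q$ collapses the right-hand side to $-(p+q)(\ln p+\ln q)=-\ln(pq)$, yielding $\gs^2=pq\ln^2(pq)$ as required. I anticipate no serious obstacle: the only delicate point is absorbing the initial summand $X_0$ in $\hS$, which \refR{RVtaut} is designed to handle, and the non-lattice/lattice dichotomy that complicates the mean expansion of \refT{TD1+} plays no role here since $O(1)$ oscillations are invisible at the $\sqrt{\ln n}$ scale.
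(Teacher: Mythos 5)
Your proposal follows the paper's proof essentially verbatim: the paper likewise applies \refT{TAVtaut} together with \refR{RVtaut} to the identity $\gD_n\eqd V_{\hnu(\ln n)}$ and records that $\Var(\mu_X Y_1-\mu_Y X_1)=pq(\ln p+\ln q)^2=pq\ln^2(pq)$, which is exactly your variance computation (the paper leaves it as ``a simple calculation''). Your sketch of the mechanism inside \refT{TAVtaut} and the remark that the lattice dichotomy is irrelevant at the $\sqrt{\ln n}$ scale are correct but add nothing beyond what the cited theorem already provides.
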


\section{The expected size of a trie}\label{Sexp}

A trie built of $n$ strings as in \refS{Strie} has $n$ external nodes,
since each external node contains exactly one string. 
However, the number of
internal nodes, $\xY_n$, say, is random. 
We will study its expectation.
For simplicity we Poissonize directly and consider a trie constructed
from $\Po(\gl)$ strings; we let $\tY_\gl$ be the number of internal
nodes.
The results below have previously been found by other methods, in
particular, more precise asymptotics have been found using Mellin
transforms; see \citet{KnuthIII}, \citet{Mahmoud}, 
\citet{FFHJ85}, and, in particular, \citet{JR87,JR89}.
The Markov case is studied by \citet{R88}
and dynamical sources by \citet{CFV01}.

If $\aaa=\ga_1\dotsm\ga_k$ is a finite string, let $I(\aaa)$ be the
indicator of the event that $\aaa$ is an internal node in the trie.
We found above that this event occurs if and only if there are at
least two strings beginning with $\aaa$. In our Poisson model, the
number of strings beginning with $\aaa$ has a Poisson distribution
$\Po(\gl P(\aaa))$, and thus
\begin{equation} \label{emma}
  \begin{split}
  \E \tY_\gl = \sumaa \E I(\aaa)
= \sumaa\P\bigpar{\Po(\gl P(\aaa))\ge 2}
=\sumaa	f(\gl P(\aaa)),
  \end{split}
\end{equation}
where
\begin{equation}\label{femma}
  f(x)\=\P\bigpar{\Po( x)\ge 2}
=1-(1+x)e^{- x}.
\end{equation}

Sums of the type in \eqref{emma} are often studied using Mellin
transform inversion and residue calculus. Renewal theory presents an
alternative. As said in the introduction, this opens the way to
straightforward generalizations, \eg{} to Markov sources.

\begin{theorem}\label{Tsum}
  Suppose that $f$ is a non-negative function on  $(0,\infty)$, and that
$F(\gl)=\sumaa	f(\gl P(\aaa))$, with $P(\aaa)$ given by \eqref{paaa}.
Assume further that $f$ is \aex\ continuous 
and satisfies the estimates 
\begin{align}\label{dRie}
f(x)=O(x^2),\quad 0< x<1, \qquad\text{and}\qquad f(x)=O(1), \quad 1<x<\infty.
\end{align}
Let $\ggh(t)\=e^tf(e^{-t})$.
  \begin{thmenumerate}
\item
  If\/ $\ln p/\ln q$ is irrational, then, as $\gl\to\infty$,
  \begin{equation}\label{jeiw}
	\frac{F(\gl)}{\gl} \to 
\frac1 H \intoooo g(t)\dd t
=
\frac1 H \intoo f(x)x\qww\dd x.	
  \end{equation}

\item
  If\/ $\ln p/\ln q$ is rational, 
then, as $\gl\to\infty$,
  \begin{equation}\label{jeiiw}
	\frac{F(\gl)}{\gl} = \frac1 H \psi(\ln\gl) + o(1),
  \end{equation}
where,
with $d\=\gcdpq$ given by \eqref{d}, 
$\psi$ is a bounded $d$-periodic function having the Fourier series
\begin{equation}\label{tsumfou}
  \psi(t)\sim\summoooo\hpsi(m)e^{2\pi\ii mt/d}
\end{equation}
with
\begin{equation}\label{tsumfouf}
  \hpsi(m)=\widehat \ggh(-2\pi m/d) = \intoooo e^{2\pi\ii m t/d}\ggh(t)\dd t
=\intoo f(x)x^{-2-2\pi\ii m/d}\dd x.
\end{equation}
Furthermore, 
\begin{equation}
  \label{tsumsum}
\psi(t)=d\sumkoooo g(kd-t).
\end{equation}
If $f$ is continuous, then $\psi$ is too.
  \end{thmenumerate}
\end{theorem}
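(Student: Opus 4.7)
The plan is to recognize $F(\gl)/\gl$ as a convolution against the renewal measure $U$ and to invoke the key renewal theorem (\refT{TA2}) in each case.

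First, I would collapse the sum over words to a sum over partial sums. Grouping $F(\gl)=\sumn\sum_{\aaa\in\cA^n}f(\gl P(\aaa))$ by word length $n$, and using \eqref{paaa}--\eqref{sop} in the form ``average a test function over a random prefix'',
\[
\sum_{\aaa\in\cA^n} f(\gl P(\aaa))=\sum_{\aaa\in\cA^n} P(\aaa)^{-1}\cdot P(\aaa) f(\gl P(\aaa))=\E\bigsqpar{e^{S_n} f(\gl e^{-S_n})}=\gl\,\E g(S_n-\ln\gl).
\]
Summing over $n$ and appealing to \eqref{U2} yields
\[
\frac{F(\gl)}{\gl}=\sumn\E g(S_n-\ln\gl)=\int_0^\infty g(t-\ln\gl)\dd U(t).
\]

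Second, I would check that $g$ is directly Riemann integrable on $\bbR$. The hypotheses \eqref{dRie} give $g(u)=e^u f(e^{-u})=O(e^{-u})$ as $u\to+\infty$ (since $f(e^{-u})=O(e^{-2u})$) and $g(u)=O(e^u)$ as $u\to-\infty$ (since $f(e^{-u})=O(1)$), so $g$ is bounded and exponentially dominated on both sides; together with the \aex{} continuity of $f$ this is standard dRi.

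Third, I would apply \refT{TA2} with $\tau=\ln\gl\to\infty$. In the non-arithmetic case the key renewal theorem yields
\[
\int_0^\infty g(t-\tau)\dd U(t)\to\frac1H\intoooo g(u)\dd u=\frac1H\int_0^\infty f(x)x\qww\dd x,
\]
the second equality by $x=e^{-u}$, which proves (i). In the arithmetic case $U$ becomes the lattice measure with atoms $u_j=\sum_n\P(S_n=jd)\to d/H$ at $jd$, $j\ge 0$. Writing $\tau=md+r$ with $r=d\frakt{\tau/d}\in[0,d)$ and setting $k=j-m$, dominated convergence on $\bbZ$---with majorant $\sup_{r\in[0,d)}|g(kd-r)|$, summable by Step~2---gives
\[
\int_0^\infty g(t-\tau)\dd U(t)=\sum_{j\ge 0} u_j\,g((j-m)d-r)\longrightarrow\frac{d}{H}\sumkoooo g(kd-r)=\frac{\psi(\tau)}{H},
\]
which is \eqref{jeiiw} and also gives \eqref{tsumsum}. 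The Fourier coefficients \eqref{tsumfouf} follow by interchanging sum and integral in $\hpsi(m)=\frac1d\int_0^d\psi(t)e^{-2\pi\ii mt/d}\dd t$, substituting $u=kd-t$ to telescope the pieces into $\intoooo g(u)e^{2\pi\ii mu/d}\dd u$, and finally $x=e^{-u}$; continuity of $\psi$ when $f$ is continuous is then immediate from uniform convergence of the defining series, again via Step~2.

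The main obstacle will be the arithmetic case: one must handle a general shift $\tau\notin d\bbZ$ and replace the slowly converging atomic masses $u_j$ by their limit $d/H$ uniformly in $j$. The exponential decay of $g$ (not just $L^1$-integrability) is what makes this uniform replacement---and hence the exact $d$-periodic limit---go through cleanly.
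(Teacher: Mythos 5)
Your proposal is correct and follows essentially the same route as the paper: the change-of-measure identity turning $F(\gl)/\gl$ into $\int_0^\infty \ggh(t-\ln\gl)\dd U(t)$, the direct Riemann integrability check via the exponential bound coming from \eqref{dRie}, and then the key renewal theorem (which you cite as \refT{TA2}, but the relevant statement is \refT{TAkey}). The only difference is cosmetic: where the paper simply quotes \refT{TAkey}(ii) together with \eqref{psi}--\eqref{fouf}, you re-derive the lattice case from Blackwell's theorem (with the needed uniformity over the shift, which you correctly flag) and recompute the Fourier coefficients, which is precisely how the appendix establishes those facts anyway.
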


\begin{proof}
If $\FF(\aaa)$ is any non-negative function on $\cAx$, then, using
\eqref{sop}, for each $k\ge0$, 
\begin{equation*}
  \begin{split}
  \sum_{\ga_1,\dots,\ga_k} \FF(\ga_1\dotsm\ga_k)
&=
  \sum_{\ga_1,\dots,\ga_k}
\frac{ \FF(\ga_1\dotsm\ga_k)}{P(\ga_1\dotsm\ga_k)}{P(\ga_1\dotsm\ga_k)}
\\&
=  \E \frac{ \FF(\xi_1\dotsm\xi_k)}{P(\xi_1\dotsm\xi_k)}
=  \E \bigpar{e^{S_k} \FF(\xi_1\dotsm\xi_k)},	
  \end{split}
\end{equation*}
and thus, 
\begin{equation}\label{FF}
  \sumaa \FF(\aaa)=\sumk \E\bigpar{ e^{S_k} \FF(\xi_1\dotsm\xi_k)}.
\end{equation}

With $\FF(\aaa)=f(\gl P(\aaa))$, we have
$\FF(\xi_1\dotsm\xi_k)=f(\gl e^{-S_k})$ and thus \eqref{FF} yields,
recalling \eqref{U}, 
\begin{equation*}
F(\gl)=  \sumaa f(\gl P(\aaa))
=\sumk \E\bigpar{ e^{S_k} f(\gl e^{-S_k})}
=\intoo f(\gl e^{-x})e^x\dd U(x).
\end{equation*}
Define further $\ggx(x)\=f(x)/x$; thus
$\ggh(t)=\ggx(e^{-t})$. Then, 
\begin{equation}\label{magnus}
F(\gl)
=\intoo \gl \ggx(\gl e^{-x})\dd U(x)
=\gl \intoo \ggh(x-\ln \gl)\dd U(x).
\end{equation}

We can now apply the key renewal theorem, \refT{TAkey}.
The function $g$ is \aex\ continuous and it follows from \eqref{dRie} that 
$\ggh(t)\le C e^{-|t|}$ for some $C$; hence $g$ is directly
Riemann integrable on $\oooo$ by \refL{LRiemann}.
In the non-arithmetic case (i) we obtain \eqref{jeiw} from 
\eqref{magnus} and
\eqref{key+},
since $\mu=\E X_i=H$ by \eqref{H} and,
with $x=e^{-t}$,
\begin{equation}\label{sigw}
  \intoooo \ggh(t)\dd t
= \intoooo e^t f(e^{-t})\dd t
= \intoo f(x)x\qww\dd x.
\end{equation}

Similarly, the aritmetic case (ii)
follows from
\eqref{keya+} and \eqref{psi}--\eqref{fouf} together with the
calculation, generalizing \eqref{sigw}, 
\begin{equation*}
\hggh(s)
=  \intoooo e^{-\ii st} \ggh(t)\dd t
= \intoooo e^{(1-\ii s)t} f(e^{-t})\dd t
= \intoo f(x)x^{-2+\ii s}\dd x.
\end{equation*}
(This equals the Mellin transform $\widetilde f(-1+\ii s)$.)
\end{proof}

\begin{remark}
  The assumptions on $f$ may be weakened (with the same proof); it
  suffices that 
$f(x)= O( x^{1-\gd})$ and $f(x)= O( x^{1+\gd})$ for 
$x\in(0,\infty)$
  and some $\gd>0$. If $f$ is continuous, it is obviously sufficient
  that these estimates hold for small and large $x$, respectively.
\end{remark}

Returning to $\tY_\gl$, we obtain the following for the expected
number of internal nodes in the Poisson trie.
\begin{theorem}\label{Ttrie}
  \begin{thmenumerate}
\item
  If\/ $\ln p/\ln q$ is irrational, then, as $\gl\to\infty$,
  \begin{equation}\label{jei}
	\frac{\E \tY_\gl}{\gl} \to \frac1 H.	
  \end{equation}

\item
  If\/ $\ln p/\ln q$ is rational, 
then, as $\gl\to\infty$,
  \begin{equation}\label{jeii}
	\frac{\E \tY_\gl}{\gl} = \frac1H +\frac1 H \ppsi(\ln\gl) + o(1),
  \end{equation}
where,
with $d=\gcdpq$,
$\ppsix\ppsidef\ppsitrie$ 
is a continuous $d$-periodic function with average $0$ and Fourier
expansion
\begin{equation*}
  \ppsix(t)
=\sum_{k\neq0}\frac{\Gamma(1-2\pi\ii k/d)}{1+2\pi\ii k/d}
 e^{2\pi\ii k t/d}
=\sum_{k\neq0}\frac{2\pi\ii k}{d}{\Gamma\Bigpar{-1-\frac{2\pi\ii k}{d}}}
 e^{2\pi\ii k t/d}.
\end{equation*}
  \end{thmenumerate}
\end{theorem}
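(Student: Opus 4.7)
The plan is to apply \refT{Tsum} with $f(x)=1-(1+x)e^{-x}=\P(\Po(x)\ge 2)$, which by \eqref{emma} makes $F(\gl)=\E\tY_\gl$. The hypotheses are readily checked: $f$ is continuous on $(0,\infty)$; the Taylor expansion $f(x)=x^2/2+O(x^3)$ near $0$ gives $f(x)=O(x^2)$ on $(0,1)$; and $0\le f\le 1$ yields $f(x)=O(1)$ on $(1,\infty)$, so \eqref{dRie} holds.

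For part (i), I would evaluate $\intoo f(x)x\qww\dd x$ by an antiderivative trick: since $\frac{d}{dx}(1+x)e^{-x}=-xe^{-x}$, one has $f'(x)=xe^{-x}$, and a short calculation shows that $h(x)\=(1-e^{-x})/x$ satisfies $h'(x)=-f(x)/x^2$. Hence $\intoo f(x)x\qww\dd x=h(0^+)-h(\infty)=1-0=1$, and \eqref{jei} follows from \eqref{jeiw}.

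For part (ii), by \eqref{tsumfouf} the Fourier coefficients of $\psi$ are values of the Mellin transform $\widetilde f(s)\=\intoo f(x)x^{s-1}\dd x$ at $s=-1-2\pi\ii m/d$. I would compute $\widetilde f$ on the strip $\operatorname{Re}(s)\in(-2,0)$ by integration by parts against $f'(x)=xe^{-x}$; the boundary contributions vanish by the bounds above (since $\operatorname{Re}(s+2)>0$ at $0$ and $\operatorname{Re}(s)<0$ at $\infty$), leaving $\widetilde f(s)=-\Gamma(s+2)/s$. Substituting $s=-1-2\pi\ii m/d$ yields $\hpsi(m)=\Gamma(1-2\pi\ii m/d)/(1+2\pi\ii m/d)$. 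The $m=0$ coefficient equals $1$, consistent with part (i) and accounting for the explicit constant $1/H$ in \eqref{jeii}; the remaining series over $k\ne 0$ is the stated Fourier expansion of the oscillating part, which therefore has mean $0$. The alternative form of the coefficient comes from the identity $\Gamma(1-z)=z(1+z)\Gamma(-1-z)$, obtained by two applications of $\Gamma(w+1)=w\Gamma(w)$ with $z=2\pi\ii k/d$. Continuity of the periodic function is inherited from continuity of $f$ via the last sentence of \refT{Tsum}. No step is a real obstacle: the entire argument amounts to verifying the hypotheses of \refT{Tsum} and a standard Mellin computation on a strip where integration by parts is straightforward.
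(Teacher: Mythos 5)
Your proposal is correct and follows essentially the same route as the paper: apply Theorem~\ref{Tsum} to \eqref{emma}, evaluate $\intoo f(x)x^{-2}\,\ddx x=1$ via $f'(x)=xe^{-x}$ and an integration by parts, and obtain the Fourier coefficients $\Gamma(1-2\pi\ii m/d)/(1+2\pi\ii m/d)$ from the same Mellin-type integral, with the second form following from the Gamma recurrence. The only differences are cosmetic (explicit antiderivative $h(x)=(1-e^{-x})/x$ for part (i), explicit strip $\operatorname{Re}(s)\in(-2,0)$ for the boundary terms), so nothing further is needed.
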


\begin{proof}
We apply \refT{Tsum} to \eqref{emma}.
It follows from \eqref{femma} that $f'(x)=xe^{-x}$.
Thus, by an integration by parts,
since $f(x)/x\to0$ as $x\to0$ and $x\to\infty$, 
\begin{equation}\label{sig}
 \intoo f(x)x\qww\dd x
= \intoo f'(x)x\qw\dd x
= \intoo e^{-x}\dd x
=1.
\end{equation}
Consequently,
\eqref{jei} follows from \eqref{jeiw}.

Similarly,
\eqref{jeii} follows from
\eqref{jeiiw}, and the calculation, generalizing \eqref{sig},
\begin{align*}
\hggh(s)
&
= \intoo f(x)x^{-2+\ii s}\dd x
= (1-\ii s)\qw\intoo f'(x)x^{-1+\ii s}\dd x
\\&
= \frac{\Gamma(1+\ii s)}{1-\ii s}
= -\ii s\Gamma(-1+\ii s).
\qedhere
\end{align*}
\end{proof}

The case of a fixed number $n$ of strings is easily handled by
comparison, and \eqref{jei} and \eqref{jeii} imply the corresponding results
for $\xY_n$:

\begin{theorem}\label{Ttrien}
  \begin{thmenumerate}
\item
  If\/ $\ln p/\ln q$ is irrational, then, as $n\to\infty$,
  \begin{equation*}
	\frac{\E \xY_n}{n} \to \frac1 H.	
  \end{equation*}

\item
  If\/ $\ln p/\ln q$ is rational, 
then, as $n\to\infty$,
with $\ppsitrie$ as in \refT{Ttrie},
  \begin{equation*}
	\frac{\E \xY_n}{n} = \frac1H+\frac1 H \ppsitrie(\ln n) + o(1).
  \end{equation*}
  \end{thmenumerate}
\end{theorem}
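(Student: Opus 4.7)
The plan is to depoissonize by monotonicity, along the lines sketched at the end of \refS{Strie}. First I would couple the family $\{\xY_n\}_{n\ge1}$ by building every trie from one fixed infinite sequence $\Xi\sss1,\Xi\sss2,\dotsc$ of iid strings. Since the defining event for an internal node at $\aaa$---``at least two stored strings begin with $\aaa$''---is monotone in the set of strings, the map $n\mapsto\xY_n$ is non-decreasing in this coupling. Moreover, if $N\sim\Po(\gl)$ is independent of the strings, then the trie on the first $N$ of them is distributed as the Poisson trie of intensity $\gl$, so that $\tY_\gl=\E\xY_N$.

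Next, I would set $\gl_\pm\=n\pm n^{2/3}$ and take $N_\pm\sim\Po(\gl_\pm)$ independent of the strings. Standard Chernoff bounds for the Poisson distribution give $\P(N_+<n)+\P(N_->n)=O(e^{-cn^{1/3}})$ for some $c>0$. Monotonicity together with independence of $N_\pm$ from $\xY_n$ then yields the sandwich
\begin{equation*}
\tY_{\gl_+}\ge\E\bigsqpar{\xY_n\ett{N_+\ge n}}=\E\xY_n\cdot\P(N_+\ge n)
\end{equation*}
and
\begin{equation*}
\E\xY_n\ge\E\bigsqpar{\xY_{N_-}\ett{N_-\le n}}=\tY_{\gl_-}-\E\bigsqpar{\xY_{N_-}\ett{N_->n}},
\end{equation*}
so that $\tY_{\gl_-}-r_n\le\E\xY_n\le\tY_{\gl_+}\bigpar{1+O(e^{-cn^{1/3}})}$, where $r_n\=\E\bigsqpar{\xY_{N_-}\ett{N_->n}}$.

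The residual $r_n$ is dispatched by the elementary bound $\xY_m\le\sum_{i=1}^m D_m^{(i)}$, where $D_m^{(i)}$ is the depth of the $i$-th stored string; this holds because each internal node is an ancestor of at least one leaf, while the root-to-leaf path for a leaf at depth $D_m^{(i)}$ passes through exactly $D_m^{(i)}$ internal nodes. Combined with \refT{TD1}, which gives $\E D_m^2=O(\log^2 m)$, and a Cauchy--Schwarz step over the sum, this yields $\E\xY_m^2\le m^2\E(D_m^{(1)})^2=O(m^2\log^2 m)$. A further Cauchy--Schwarz application against the Poisson tail then bounds $r_n\le(\E\xY_{N_-}^2)\qq\P(N_->n)\qq=O(n\log n\cdot e^{-cn^{1/3}/2})=o(1)$.

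Finally, I would insert \refT{Ttrie} to expand $\tY_{\gl_\pm}$. Since $\gl_\pm=n+O(n^{2/3})$ and $\ln\gl_\pm-\ln n=\ln(1\pm n^{-1/3})=o(1)$, continuity of $\ppsitrie$ yields $\ppsitrie(\ln\gl_\pm)=\ppsitrie(\ln n)+o(1)$, and both statements of the theorem fall out of the sandwich after dividing by $n$. The only genuine obstacle is the tail estimate on $r_n$, and even that is routine once the crude domination by total path length is in hand; everything else is bookkeeping.
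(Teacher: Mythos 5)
Your proposal is correct and follows essentially the same route as the paper: depoissonization via monotonicity of the trie under addition of strings, a Chernoff sandwich at $\gl_\pm=n\pm n^{2/3}$, and then \refT{Ttrie} together with the (uniform) continuity of the periodic function $\ppsitrie$. The only real difference is how the residual tail term is killed: you use the crude domination of $\xY_m$ by the total path length plus the depth moments from \refT{TD1} to get $\E \xY_m^2=O(m^2\log^2 m)$, whereas the paper gets the a priori bound $\E\xY_n\le 2\E\tY_{2n}=O(n)$ from $\P(\Po(2n)\ge n)\ge\tfrac12$ and monotonicity; both devices are adequate (your Cauchy--Schwarz step just needs the easy remark that $\E\bigsqpar{N_-^2\log^2 N_-}=O(n^{2+\eps})$ for the Poisson variable $N_-$, which still gives $r_n=o(1)$).
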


\begin{proof}
   $\E \xY_n$ is increasing in $n$.
Thus, first, because $\P(\Po(2n)\ge n)\ge1/2$,
$\E \tY_{2n}\ge \frac12 \E \xY_n$, and thus 
$\E \xY_n\le 2 \E \tY_{2n} = O(n)$.
Secondly, using this estimate, the standard Chernoff concentration
bounds for the Poisson distribution easily implies, 
with $\gl_{\pm}=n\pm
n^{2/3}$, say,
$\E \tY_{\gl_-} +o(n) \le \E \xY_n \le \E \tY_{\gl_+} +o(n)$.
The results then follow from \refT{Ttrie}.
\end{proof}

\begin{remark}
  It is well-known that the periodic function $\ppsitrie$ above, as in many
  similar results, fluctuates very little from its mean. In fact, the
  largest $d$ is obtained for $p=q=1/2$, when $d=\ln 2$. Since
  $\Gamma(1+\ii s)$ decreases rapidly as $s\to\pm\infty$, the Fourier
coefficients of $\ppsitrie(t)$ are very small; the largest (in
  absolute value) are 
  $|\widehat\ppsitrie(\pm1)|=|\Gamma(1+2\pi\ii/\ln2)|/|1-2\pi\ii/\ln
  2|\approx0.542\cdot 10^{-6}$, so 
$|\ppsitrie(\ln n)|$ is at most   about $10^{-6}$, and
the oscillations 
$\ppsitrie(\ln n)/H$ 
of $\E W_n/n$ are bounded by $1.6\cdot10^{-6}$.
(See for example \cite[pp.~23--28]{Mahmoud}.)
Other choices of $p$ yield even smaller oscillations.
\end{remark}

\section{$b$-tries}\label{Sb}
  As a variation, consider a $b$-trie, where each node can store $b$
  strings, for some fixed integer $b\ge1$; as before, the internal
  nodes do not contain any string. A finite string $\aaa$ now is an
  internal node if and only if at least $b+1$ of the strings start
  with $\aaa$. In the argument above we only have to replace
  \eqref{femma} by
  \begin{equation}
f(x)\=\P\bigpar{\Po(x)\ge b+1}; 
  \end{equation}
thus  
$f'(x)=\P\bigpar{\Po(x)= b}=x^be^{-x}/b!$
and \eqref{sigw} yields, with an integration by parts as in \eqref{sig}, 
$\intoooo \ggh(t)\dd t = 1/b$.
Hence, in the non-arithmetic case when $\ln p/\ln q$ is irrational,
the expected number of internal nodes is $\E \tY\sss b_\gl \sim \gl
/(H b)$, as found by \citet{JR87,JR89}. In the arithmetic case, we
obtain a periodic function $\psi$, now with Fourier coefficients
$(1+2\pi\ii k/d)\qw\Gamma(b-2\pi\ii k/d)/b!$. 

We can also analyze the external nodes. Let $\zj$ be the number of
nodes where exactly $j$ strings are stored, $j=1,\dots,b$.
A finite string $\aaa$ is one of these nodes if exactly $j$ of the
stored strings begin with $\aaa$, and at least $b-j+1$ other strings
begin with $\aaa'$, the sibling of $\aaa$ obtained by flipping the
last letter. (We assume that there are at least $b$ strings, so we can
ignore the root.)

Consider again the Poisson model.
In the case when $\aaa$ ends with 1, \ie, $\aaa=\bbb1$ for some
$\bbb$, the probability of this event is, with $x=\gl P(\bbb)$,
by independence in the Poisson model,
$\P\bigpar{\Po(px)=j} \P\bigpar{\Po(qx)>b-j}$. If $\aaa=\bbb0$, we
similarly have the probability
$\P\bigpar{\Po(qx)=j} \P\bigpar{\Po(px)>b-j}$. Summing over $\beta\in\cAx$, 
we thus obtain a sum of the type in \refT{Tsum} with $f$ replaced by 
\begin{equation*}
  \begin{split}
  f_j(x)&=
\P\bigpar{\Po(px)=j} \P\bigpar{\Po(qx)>b-j}
+ \P\bigpar{\Po(qx)=j} \P\bigpar{\Po(px)>b-j}
\\
&=
\frac{p^jx^j}{j!}e^{-px}\lrpar{1  -\sum_{k=0}^{b-j}\frac{q^kx^{k}}{k!}e^{-qx}}
+\frac{q^jx^j}{j!}e^{-qx}\lrpar{1 -\sum_{k=0}^{b-j}\frac{p^kx^{k}}{k!}e^{-px}}
\\
&=
\frac{p^jx^j}{j!}e^{-px} +\frac{q^jx^j}{j!}e^{-qx} 
-\sum_{k=0}^{b-j}\frac{(p^jq^k+q^jp^k)x^{j+k}}{j!\,k!}e^{-x}
.	
  \end{split}
\end{equation*}

We argue as above, with $\ggh_j(t)\=e^tf_j(e^{-t})$. We have,
similarly to \eqref{sigw},
omitting some details,
\begin{equation}
  \label{cj1}
  \begin{split}
c_j&\=\intoooo\ggh_j(t)\dd t=\intoo f_j(x)x\qww\dd x
\\&\phantom:=
\begin{cases}
p\ln(1/p)+q\ln(1/q)
-\sum_{k=1}^{b-1}\frac1{k}(pq^k+qp^k),
  & j=1,\\
\frac{1}{j(j-1)}
-\sum_{k=0}^{b-j}\frac{(j+k-2)!}{j!\,k!}(p^jq^k+q^jp^k),
& 2\le j\le b.
\end{cases}	
  \end{split}
\end{equation}

Alternatively, using
\begin{equation*}
  \begin{split}
  f_j(x)&=
\frac{p^jx^j}{j!}e^{-px}\sum_{k=b-j+1}^\infty\frac{q^kx^{k}}{k!}e^{-qx}
+\frac{q^jx^j}{j!}e^{-qx}\sum_{k=b-j+1}^{\infty}\frac{p^kx^{k}}{k!}e^{-px}
\\
&=
\sum_{k=b-j+1}^{\infty}\frac{(p^jq^k+q^jp^k)x^{j+k}}{j!\,k!}e^{-x}
,	
  \end{split}
\end{equation*}
we find 
\begin{equation}
  \label{cj2}
  \begin{split}
c_j&=
\sum_{k=b-j+1}^\infty\frac{(j+k-2)!}{j!\,k!}(p^jq^k+q^jp^k),
\qquad 1\le j\le b.
  \end{split}
\end{equation}
More generally (except when $(j,s)=(1,0)$),
\begin{equation}
  \label{hhj}
  \begin{split}
\hggh_j(s)&=\intoo f_j(x)x^{-2+\ii s}\dd x
\\&=
\frac{\Gamma(j-1+\ii s)}{j!}(p^{1-\ii s}+q^{1-\ii s})
-\sum_{k=0}^{b-j}\frac{\Gamma(j+k-1+\ii s)}{j!\,k!}(p^jq^k+q^jp^k).
  \end{split}
\end{equation}

If we use the notation $\zjn$ for the trie with a fixed number $n$ of
strings and $\tzj$ for the Poisson model with $\Po(\gl)$ strings, we
obtain as above the following result 
for the number of external nodes that store $j$ strings.

\begin{theorem}\label{Tbtrien}
  \begin{thmenumerate}
\item
  If\/ $\ln p/\ln q$ is irrational, then, as $n\to\infty$,
for $j=1,\dots,b$,
  \begin{equation*}
	\frac{\E \zjn}{n} \to \pi_j\=\frac{c_j} H,
  \end{equation*}
with $c_j$ given by \eqref{cj1}--\eqref{cj2}.

\item
  If\/ $\ln p/\ln q$ is rational, 
then, as $n\to\infty$, for $j=1,\dots,b$,
  \begin{equation*}
	\frac{\E \zjn}{n} =  \psibj(\ln n) + o(1),
  \end{equation*}
where $\psibj$ is a continuous $d$-periodic function,
with $d$ as in \refT{Ttrie}; $\psibj$ has
average $\pi_j$ and Fourier expansion
\begin{equation*}
  \psibj(t)
=H\qw\sumkoooo\hggh_j(-2\pi\ii k/d) e^{2\pi\ii k t/d}
=\pi_j+H\qw\sum_{k\neq0}\hggh_j(-2\pi\ii k/d) e^{2\pi\ii k t/d},
\end{equation*}
with $\hggh_j$ given by \eqref{hhj}.
  \end{thmenumerate}
The same results (with $n$ replaced by $\gl$) hold for $\tzj$ in the
Poisson model. 
\end{theorem}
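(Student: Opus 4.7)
The plan is to handle the Poisson model first via \refT{Tsum}, and then depoissonize. From the analysis preceding the theorem statement,
\[
\E \tzj = \sum_{\bbb \in \cAx} f_j\bigpar{\gl P(\bbb)},
\]
so \refT{Tsum} applies with $f=f_j$ once its hypotheses are verified. Continuity of $f_j$ is clear, and the series representation
\[
f_j(x) = \sum_{k \ge b-j+1} \frac{p^j q^k + q^j p^k}{j!\,k!}\,x^{j+k} e^{-x}
\]
shows that $f_j(x) = O(x^{b+1}) = O(x^2)$ as $x \to 0^+$, while $f_j(x) \le 1$ trivially since it is a probability. Thus \refT{Tsum}(i) yields $\E \tzj/\gl \to c_j/H = \pi_j$ in the non-arithmetic case, with $c_j = \intoo f_j(x)\,x\qww\dd x$, and \refT{Tsum}(ii) produces the $d$-periodic function $\psibj$ whose Fourier coefficients are given by \eqref{tsumfouf} specialised to $f_j$.

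To identify $c_j$ and $\hggh_j(s)$ explicitly, I would compute the underlying integrals term-by-term. Inserting the series form of $f_j$ into $c_j$ and using $\intoo x^{m-2} e^{-x}\dd x = \Gamma(m-1)$ (valid since every exponent $m=j+k$ satisfies $m \ge b+1 \ge 2$) yields \eqref{cj2}; rearranging from the original expression for $f_j$ and integrating by parts gives the equivalent form \eqref{cj1}. For $\hggh_j(s) = \intoo f_j(x)\, x^{-2+\ii s}\dd x$, inserting the original definition of $f_j$ (a finite sum of monomials times $e^{-px}$, $e^{-qx}$, or $e^{-x}$) and applying $\intoo x^{a-1} e^{-bx}\dd x = b^{-a}\Gamma(a)$ termwise yields \eqref{hhj}; specialising at $s=0$ recovers $\hggh_j(0) = c_j$, so the Fourier series has mean $\pi_j$ as stated.

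Finally, the fixed-$n$ statement is transferred from the Poisson statement by the depoissonization scheme used for \refT{Ttrien}. Since $\zjn$ is not monotone in $n$, I would first establish a Lipschitz bound $|\E Z_{j,n+1} - \E Z_{j,n}| \le C_b$ with a constant depending only on $b$: inserting one more string into an existing $b$-trie either joins it to a node with strictly fewer than $b$ stored strings (shifting at most one node between counts $j$ and $j+1$), or splits an overflowing external node of size $b$ into a finite mini-trie on $b+1$ strings whose total number of nodes is bounded in terms of $b$ alone. Combined with standard Chernoff concentration of $\Po(n \pm n^{2/3})$ about its mean, this gives $\E \zjn = \E \tzj + o(n)$ when $\gl = n$, and so (i) and (ii) transfer from $\tzj$ to $\zjn$. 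The main obstacle is the Lipschitz estimate, which requires a careful combinatorial analysis of how insertions cascade at split nodes; the Mellin-type integrals for $c_j$ and $\hggh_j(s)$ are routine but computation-heavy.
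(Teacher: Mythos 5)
Your proof is correct and takes essentially the same route as the paper: the Poisson case is \refT{Tsum} applied to $f_j$ (with the integrals \eqref{cj1}--\eqref{cj2} and \eqref{hhj} computed exactly as you describe), and the fixed-$n$ case follows by a bounded-change-per-insertion depoissonization. One small correction to your Lipschitz step: an overflowing external node can generate an arbitrarily long chain of one-child internal nodes, so the resulting mini-trie's \emph{total} node count is not bounded in terms of $b$ alone; what is bounded is the number of external nodes affected (the destroyed leaf plus at most two new ones), and since $\zjn$ counts only external nodes this gives $|Z_{j,n+1}-Z_{j,n}|\le 3$, after which the paper simply couples the two models on $\min(n,N)$ common strings, $N\sim\Po(n)$, to get $|\tzj-\zjn|\le 3|N-n|$ and $|\E\tzj-\E\zjn|\le 3\E|N-n|=O(\sqrt n)$, so the Chernoff/$n^{2/3}$ device is not needed.
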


\begin{proof}
As just said, the Poisson case follows from \eqref{Tsum}, and 
  it remains only to depoissonize. To do this, choose $\gl=n$, and let
  $N\sim\Po(n)$ be the number of strings in the Poisson model. We
  couple the trie with $n$ strings and the Poisson trie with $N$
  strings by starting with $\min(n,N)$ common strings.
If we add a new string to the trie, it is either stored in an existing
  leaf or it converts a leaf to an internal node and adds two new
  leafs (and possibly a chain of further internal nodes). Thus at most
  3 leaves are affected, and each $Z_j$ changes by at most 3. Since we
  add $\max(n,N)-\min(n,N)=|N-n|$ new strings, we have $|\tzj-\zjn|\le
  3|N-n|$ for each $j$, and thus
$|\E\tzj-\E\zjn|\le 3\E |N-n|=O(\sqrt n)$.
\end{proof}

For example, for $b=2,3,4$ we have the following limits in the
non-aritmetic case, and up to small oscillations also in the aritmetic case:
$$
\begin{tabular}{c|c|c|c|c}
$b$ & $\pi_1$ & $\pi_2$ & $\pi_3$ & $\pi_4$ \\
\hline
2 & $1-\frac{2}H pq$ & $\frac1H{pq}$& \rule{0pt}{12pt} \\[3pt]
3 & $1-\frac{5}{2H}pq$ & $\frac{1}{2H}pq$& $\frac{1}{2H}pq$ \\[3pt]
4 & $1-\frac{17}{6H}pq+\frac{2}{3H}(pq)^2$ &
  $\frac1{2H}pq-\frac1H(pq)^2$ & 
  $\frac1{6H}pq+\frac2{3H}(pq)^2$ & 
  $\frac1{3H}pq-\frac1{6H}(pq)^2$\!\!\!\!\\  
\end{tabular}
$$
Note that $\sum_{1}^b j\pi_j=1$, 
or equivalently $\sum_{1}^b jc_j=H$, 
since the total number of strings in
the leaves is $n$; this can also be verified from \eqref{cj1}.

\section{Patricia tries}\label{Spat}

Another version of the trie is the Patricia trie, where the trie is
compressed by 
eliminating all internal nodes with only one child. 
(We use the notations above with a superscript $P$ for the Patricia case.)
Since each internal node in the Patricia trie thus has exactly 2
children, the  number of internal nodes is one less than the number of
external nodes, \ie{} $W_n\pat=n-1$ for a Patricia trie with $n$ strings.

As another illustration of \refT{Tsum}, we note that this trivial result,
to the first order at least, also can be derived as above.
The condition for
a finite string $\aaa$ to be an internal node of the Patricia trie is
 that there is at least one string beginning with $\aaa0$ and at
least one string beginning with $\aaa1$. In the Poisson model, the
number of strings with these beginnings are independent Poisson random
variables with means $\gl P(\aaa0)=\gl qP(\aaa)$ and $\gl P(\aaa1)=\gl
pP(\aaa)$, and we can argue as above with $f(x)=(1-e^{-px})(1-e^{-qx})$.
In this case, $\intoooo \ggh(t)\dd t=\intoo f(x)x^{-2}=-p\ln p-q\ln q=H$,
which implies $\E \tY_\gl\pat\sim \gl$ and
$\E \xY_n\pat\sim n$ in the non-arithmetic
case. Moreover, we know that this holds in the arithmetic case too,
without oscillations, which means that $\hpsi(m)=0$ for $m\neq0$ in
\eqref{tsumfou}--\eqref{tsumfouf}. 
Indeed, for example by integration by parts,
\begin{equation*}
  \begin{split}
\hggh(s)&=  \intoo f(x) x^{-2+\ii s}\dd x =
\intoo  x^{-2+\ii s} (1-e^{-px}-e^{-qx}+e^{-x})\dd x 
\\&
=
\bigpar{1-p^{1-\ii s}-q^{1-\ii s}}\Gamma(-1+\ii s),	
  \end{split}
\end{equation*}
and thus $\hpsi(m)=\hggh(-2\pi m/d)=0$ for $m\neq0$.

We can also consider a Patricia $b$-trie, and obtain the asymptotics
of the expected number of internal nodes in a similar way, but it is
simpler to use the result in \refT{Tbtrien} and the fact that the
number of internal nodes is 
$\sum_{j=1}^b Z_{jn}\pat-1=\sum_{j=1}^b Z_{jn}-1$; 
in the
non-arithmetic case this yields the asymptotics 
$\bigpar{\sum_{j=1}^b\pi_j}n$.

The number of internal nodes in the Patricia trie
is reduced to $n-1$ from about $n/H$ in the trie (see \refT{Ttrien}, and ignore
the small oscillations in the arithmetic case); this is a
reduction by a factor $H$ which is at most $\ln2\approx0.693$, in
other words a reduction with at least 30\%.
Nevertheless, the reduction in the path length to a given string is
negligible.
In fact, if we for simplicity, as in \refS{Strie}, consider
$1+\Po(\gl)$ strings, with one selected string $\Xi$, then a string
$\aaa$ is an internal node on the path in the trie from the root to $\Xi$ such
that $\aaa$ does not appear in the Patricia trie if and only if $\Xi$ begins
with $\aaa$, and further, either $\Xi$ begins with $\aaa0$, there is
at least one other such string, and there is no string beginning with
$\aaa1$,
or, conversely, $\Xi$ and at least one other string begins with
$\aaa1$ but no string begins with $\aaa0$.
The probability of this is $\gl\qw f(x)$ with $x=\gl P(\aaa)$ and
\begin{equation*}
  f(x)\=x q (1-e^{-qx})e^{-px}+ x p (1-e^{-px})e^{-qx}.
\end{equation*}
Hence, if $\Delta D_\gl\=D_\gl-D_\gl\pat$ is difference between the path
lengths to $\Xi$ in the 
trie and in the Patricia trie, 
then
$\E \Delta D_\gl=\gl\qw\sum_{\aaa} f(\gl P(\aaa))$ and \refT{Tsum} yields
\begin{equation*}
  \begin{split}
  \E \Delta D_\gl &\to \frac1H\intoooo f(x)x\qww\dd x
\\&
=
\frac qH\intoooo\frac{e^{-px}-e^{-x}}{x}\dd x
+\frac pH\intoooo\frac{e^{-qx}-e^{-x}}{x}\dd x
\\&
=\frac{-q\ln p-p\ln q}{H}.	
  \end{split}
\end{equation*}
This holds also in the arithmetic case, since a simple calculation
shows that Fourier coefficients $\hpsi(m)$ in \eqref{tsumfouf}
vanish for all $m\neq0$.
(This is an interesting example of cancellation in an arithmetic case
where we would expect oscillations.)
Hence the expected saving is 1 for $p=1/2$, and $O(1)$ for any fixed
$p$. (This is $o(\E D_\gl)$ and thus asymptotically negligible.)

Again, we can depoissonize by considering $\gl=n\pm n\qqqb$, and we
obtain the same result for a fixed number $n$ of strings.
Together with \refT{TD1+}, we obtain the following, earlier found by
\citet{Szp90}, 
see also \citetq{Section 6.3}{KnuthIII} ($p=1/2$) and
\citet{RJSz:Patricia}. 
(Dynamical sources are considered by \citet{Bourdon01}.)

\begin{theorem} For the expected depth $\E D^P_n$ in a Patricia trie:
    \begin{thmenumerate}
\item
  If\/ $\ln p/\ln q$ is irrational, then, as $n\to\infty$,
  \begin{equation*}
\E D^P_n = 
\frac{\ln n}{H} + \frac{H_2}{2H^2}+\frac{\gamma+q\ln p+p\ln q}{H}+o(1). 
  \end{equation*}

\item
  If\/ $\ln p/\ln q$ is rational, 
then, as $n\to\infty$,
  \begin{equation*}
\E D^P_n = \frac{\ln n}{H} + \frac{H_2}{2H^2}+\frac{\gamma
+q\ln p+p\ln q}{H}
+\ppsitd(\ln n)+o(1),
\end{equation*}
where $\ppsitd(t)$ is a small continuous function, with period $d$ in
$t$,
given by \eqref{ppsiTD1+}.
  \end{thmenumerate}
\end{theorem}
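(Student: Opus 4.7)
The plan is to assemble the statement from two ingredients already essentially in hand: the asymptotic formula for $\E D_n$ from \refT{TD1+}, and the computation of the expected saving $\E\gD D_\gl = \E D_\gl - \E D^P_\gl$ carried out just above the theorem. Writing $\E D^P_n = \E D_n - \E \gD D_n$, the constant term $\gam/H$ appearing in \refT{TD1+} should pick up the extra contribution $(q\ln p + p\ln q)/H$ coming from the limit of $\E\gD D_\gl$, which gives exactly the constant $(\gam + q\ln p + p\ln q)/H$ claimed.

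First I would note that the argument preceding the theorem computes $\E\gD D_\gl \to (-q\ln p - p\ln q)/H$ directly via \refT{Tsum} applied to $f(x) = xq(1-e^{-qx})e^{-px} + xp(1-e^{-px})e^{-qx}$. Crucially, the author has already observed that in the arithmetic case the nonzero Fourier coefficients $\hpsi(m)$ all vanish, so $\E\gD D_\gl$ converges to the same constant with \emph{no} oscillatory correction, in both the arithmetic and non-arithmetic cases. This is the point that makes the final theorem work cleanly: all oscillations in $\E D^P_n$ come from the trie depth $\E D_n$, not from the compression saving.

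Next I would depoissonize. Since $D_n$ and $D^P_n$ (for a fixed string) are each stochastically monotone in the number of strings, I can sandwich $\gl = n \pm n^{2/3}$ exactly as done for $D_n$ at the end of \refS{Strie}, obtaining $\E\gD D_n = \E\gD D_\gl + o(1)$ with $\gl = n$. Thus $\E\gD D_n \to (-q\ln p - p\ln q)/H$ in both cases, with an $o(1)$ error term.

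Finally I would combine: $\E D^P_n = \E D_n - \E \gD D_n$, applying \refT{TD1+}(i) in the non-arithmetic case to get exactly the stated formula, and \refT{TD1+}(ii) in the arithmetic case to get the same formula plus the periodic term $\ppsitd(\ln n)$ inherited unchanged from \refT{TD1+}, since $\E\gD D_n$ contributes no periodic correction. The main obstacle, if any, is the verification that the Fourier coefficients $\hpsi(m)$ for the saving function $f$ really do vanish for $m\neq 0$ in the arithmetic case; but this is the short computation already indicated (the factor $1 - p^{1-\ii s} - q^{1-\ii s}$ vanishing at the relevant imaginary arguments), analogous to the calculation carried out for the Patricia internal-node count earlier in \refS{Spat}.
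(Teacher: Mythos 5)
Your proposal is correct and is essentially the paper's own argument: the discussion immediately preceding the theorem computes $\E \Delta D_\gl \to (-q\ln p - p\ln q)/H$ via \refT{Tsum}, notes that the arithmetic-case Fourier coefficients $\hpsi(m)$, $m\neq0$, vanish so the saving carries no oscillation, depoissonizes with $\gl = n \pm n^{2/3}$, and then combines with \refT{TD1+} exactly as you do. The only small imprecision is your citation of the factor $1-p^{1-\ii s}-q^{1-\ii s}$, which belongs to the internal-node computation; for the saving function one gets $\hggh(s)=\Gamma(\ii s)\bigl(q(p^{-\ii s}-1)+p(q^{-\ii s}-1)\bigr)$, whose nonzero Fourier coefficients vanish at $s=-2\pi m/d$ because $p^{-\ii s}=q^{-\ii s}=1$ there, which is the same mechanism you describe.
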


\section{Insertion in a trie} \label{Sins}

When a new string is inserted in a trie, it becomes a new external
node; it may also create one or several new internal nodes. Let
$N\ge0$ be the number of new internal nodes.

\begin{theorem}
  \label{Tnew}
As \ntoo,
\begin{align*}
  \P(N=0)&=1-\frac{2pq}H-\ppsi(\ln n) +o(1),\\
\P(N=j)&=\Bigpar{\frac{2pq}H+\ppsix(\ln n)}2pq(1-2pq)^{j-1}+o(1),
\qquad j\ge1,
\end{align*}
where $\ppsix=0$ in the non-arithmetic case, while in the $d$-arithmetic
case
\begin{equation*}
  \ppsix(t)=\frac{2pq}H\sum_{k\neq0} \Gamma\bigpar{1-\frac{2\pi\ii k}d}
  e^{2\pi\ii kt/d}.
\end{equation*}
Further,
\begin{equation}
  \label{qe}
\E N=\frac1H+\frac1{2 pq}\ppsix(\ln n)+o(1).
\end{equation}
The same results hold in the Poisson case (with $n$ replaced by $\gl$).
\end{theorem}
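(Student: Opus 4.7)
My plan is to reduce the entire theorem to the asymptotics of $\E N$, by giving a simple combinatorial expression for $N$ and identifying a geometric distribution conditional on $N\ge 1$. Let $c_j$ denote the number of the $n$ existing strings whose first $j$ letters coincide with those of the inserted string $\Xi=\xi_1\xi_2\dotsm$. The vertex $\xi_1\dotsm\xi_j$ is an internal node of the old trie iff $c_j\ge 2$, and of the new trie iff $c_j\ge 1$ (since $\Xi$ itself contributes one extra match), so the new internal nodes are precisely the indices with $c_j=1$:
\begin{equation*}
N=\sum_{j\ge 0}\ett{c_j=1}.
\end{equation*}
If $N\ge 1$, let $\tau$ be the least such $j$ and $\Xi'$ the unique competing existing string. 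Conditionally on $\xi_1\dotsm\xi_\tau$, the letters of $\Xi$ and $\Xi'$ past position $\tau$ are independent $\Be(p)$ sequences, so they disagree at each position with probability $2pq$, and the number of further agreements is $\Ge(2pq)$ starting at $0$. Consequently
\begin{equation*}
\P(N=j\mid N\ge 1)=2pq(1-2pq)^{j-1}\ (j\ge 1),\qquad \E[N\mid N\ge 1]=\frac{1}{2pq},
\end{equation*}
so $\P(N\ge 1)=2pq\cdot\E N$, and every quantity in the statement is determined by $\E N$.

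Next, I compute $\E N$ by Poissonization. In the model with $\Po(\gl)$ existing strings we have $c_j\mid\Xi\sim\Po(\gl e^{-S_j})$, hence $\P(c_j=1\mid\Xi)=\gl e^{-S_j}\exp(-\gl e^{-S_j})$ and a rearrangement over prefixes gives
\begin{equation*}
\gl\,\E N=\sum_{\aaa\in\cAx}f(\gl P(\aaa)),\qquad f(x)=x^2 e^{-x}.
\end{equation*}
This $f$ is non-negative and satisfies the bounds \eqref{dRie}; moreover $\int_0^\infty f(x)x^{-2}\dd x=\int_0^\infty e^{-x}\dd x=1$ and $\hggh(s)=\int_0^\infty e^{-x}x^{\ii s}\dd x=\Gamma(1+\ii s)$. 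Applying \refT{Tsum} yields $\E N\to 1/H$ in the non-arithmetic case and, in the arithmetic case,
\begin{equation*}
\E N=\frac{1}{H}+\frac{1}{H}\sum_{k\ne 0}\Gamma\Bigpar{1-\frac{2\pi\ii k}{d}}e^{2\pi\ii k(\ln\gl)/d}+o(1),
\end{equation*}
which is exactly $1/H+(2pq)^{-1}\psi(\ln\gl)+o(1)$ for $\psi$ as in the statement. Plugging this into the identities of the previous paragraph yields \eqref{qe} and the stated formulas for $\P(N=0)$ and $\P(N=j)$, $j\ge 1$, in the Poisson model.

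Finally, to pass from $\gl$ to a fixed number $n$ of strings I would couple the two models by starting with $\min(n,\Po(n))$ common strings, and use $\gl=n\pm n^{2/3}$ as in the proof of \refT{Ttrien}. The main obstacle is that, unlike the global trie size $\xY_n$, the quantity $N$ is not monotone in the number of strings, so the pure sandwich argument must be adapted: however, altering $O(\sqrt n)$ strings can only affect $N$ when one of the changed strings happens to share more than $\tau$ bits with $\Xi$, and since $\tau=\Op(\ln n)$ by the renewal estimates behind \refT{TD1}, this event has probability $o(1)$, so both the expectation and the individual probabilities transfer from the Poisson to the fixed-$n$ model.
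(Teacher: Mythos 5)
Your argument is correct, and it shares the paper's backbone --- the exact geometric law of $N$ conditioned on $N\ge1$, Poissonization, a renewal-theoretic evaluation, and a coupling to depoissonize --- but the renewal step is done differently, and more economically. The paper works with $\P(N\ge1)$ directly: it splits according to the value of $\xi_{K+1}$, computes the conditional probability of the event with the function $f_1(x)=p^2x(e^{-px}-e^{-x})$, applies the key renewal theorem (\refT{TAkey}) and adds the two halves by symmetry, and only then recovers $\E N$ from the geometric law. You instead use the identity $N=\sum_j\ett{c_j=1}$ (which is correct: the new internal nodes are exactly the prefixes of $\Xi$ matched by exactly one existing string) to write $\gl\,\E N=\sum_{\aaa}f(\gl P(\aaa))$ with the single clean function $f(x)=x^2e^{-x}$, feed it into the ready-made \refT{Tsum} (giving $\int_0^\infty f(x)x^{-2}\dd x=1$ and $\hpsi(k)=\Gamma(1-2\pi\ii k/d)$, matching the stated oscillation), and then run the geometric identity in reverse, $\P(N\ge1)=2pq\,\E N$. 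This buys you one application of an already established theorem instead of a two-case key-renewal computation, and since $\E N=\P(N\ge1)/(2pq)$ holds exactly in the fixed-$n$ model as well, transferring the probabilities automatically transfers \eqref{qe}. The one spot where your justification points in the wrong direction is the depoissonization: what is needed is not the upper bound $\tau=\Op(\ln n)$, but that a fresh string penetrates to the critical level with small probability, i.e.\ an upper bound on $e^{-S_\tau}$ (equivalently a lower bound on $S_\tau$). This follows because at most one of the roughly $n$ existing strings shares the first $\tau$ bits of $\Xi$, so by a Chernoff bound $e^{-S_\tau}=\Op\bigpar{\ln^2 n/n}$ with high probability; then your $\Op(\sqrt n)$ altered strings interfere with probability $o(1)$ and the coupling closes. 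Since the paper treats its own depoissonization at the same level of brevity, I regard this as a fixable imprecision rather than a gap.
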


\begin{proof}
  Consider first the Poisson case, with insertion of $\Xi$ in a trie
  with $\Po(\gl)$ other strings.

Let $K$ be the length of the longest prefix of $\Xi$ that is shared
with at least two strings already existing in the trie; this is the
depth of the last internal node (in the existing trie) that the new
string encounters while being inserted.

There is either no existing string with the same $K+1$ first letters
as $\Xi$, or exactly one such string. In the first case, $\Xi$ is
inserted at depth $K+1$ without creating any new internal nodes, so
$N=0$.

In the second case, we have reached an external node, which is
converted into an internal node, and the string that was stored there
is displaced and instead stored, together with the new string, at the
end of a sequence of $N\ge1$ new internal nodes, where $N$ is the
number of common letters, after the $K$ first, in these two strings.

Thus, conditioned on $N\ge1$, $N$ has a geometric distribution:
\begin{equation}
  \label{q3}
\P(N=j)=\P(N\ge1)(p^2+q^2)^{j-1}\cdot2pq,\qquad j\ge1.
\end{equation}
Since further $\P(N=0)=1-\P(N\ge1)$, it suffices to find $\P(N\ge1)$.

For a given $k$, the event $N\ge1$, $K=k$ and, say, $\xi_{K+1}=1$,
happens if and only if $\xi_{k+1}=1$ and there is exactly one
existing string beginning with $\xi_1\dotsm\xi_k1$ and at least one
beginning with $\xi_1\dotsm\xi_k0$.
The conditional probability of this given $\ga\=\xi_1\dotsm\xi_k$ is
\begin{equation*}
  \P(\xi_{k+1}=1)\P\bigpar{\Po(\gl P(\ga)q)\ge1}\P\bigpar{\Po(\gl P(\ga)p)=1}
=f_1\bigpar{\gl P(\ga)},
\end{equation*}
with
\begin{equation*}
  f_1(x)=p(1-e^{qx})(pxe^{-px})
=p^2xe^{-px}-p^2xe^{-x}.
\end{equation*}
Thus,
\begin{equation*}
  \begin{split}
  \P(N\ge1,\, K=k \text{ and } \xi_{K+1}=1)
&=\E f_1\bigpar{\gl \P(\xi_1\dotsm\xi_k)}
=\E f_1\bigpar{\gl e^{-S_k}}
\\&
=\E f_1\bigpar{e^{-(S_k-\ln\gl)}}	
  \end{split}
\end{equation*}
and, summing over $k$ and using \eqref{U2},
\begin{equation*}
  \P(N\ge1 \text{ and } \xi_{K+1}=1)
=\sumk\E f_1\bigpar{e^{-(S_k-\ln\gl)}}
=\intoo f_1\bigpar{e^{-(x-\ln\gl)}}\dd U(x).
\end{equation*}
The function $g_1(x)\=f_1(e^{-x})$  is directly
Riemann integrable on $\oooo$ by \refL{LRiemann}
(because $f_1(x)=O(x\bmin x\qw)$), and thus the key renewal theorem
\refT{TAkey} yields
\begin{equation}\label{q4a}
  \P(N\ge1 \text{ and } \xi_{K+1}=1)
=\frac1H \intoooo g_1(x)\dd x + \ppsiq1(\ln\gl)+o(1).
\end{equation}
where $\ppsiq1(t)=0$ in the non-arithmetic case and 
\begin{equation}
  \label{q4}
\ppsiq1(t)=\frac1H \sum_{m\neq0}\hg_1(-2\pi m/d)e^{2\pi\ii mt/d}
\end{equation}
in the arithmetic case.

Routine integrations yield
\begin{equation}
  \label{qk1}
\intoooo g_1(x)\dd x = \intoo f_1(y)\frac{\dd y}y
= \intoo (p^2 e^{-py}-p^2e^{-y})\dd y
=p-p^2=pq
\end{equation}
and, more generally,
\begin{equation*}
  \begin{split}
\hg_1(s)=
\intoooo e^{-\ii sx}g_1(x)\dd x = \intoo f_1(y)y^{\ii s-1}\dd y
=(p^{1-\ii s}-p^2)\Gamma(1+\ii s)	;
  \end{split}
\end{equation*}
thus in the arithmetic case, since $p^{2\pi\ii m/d}=1$ for integers $m$,
\begin{equation}
  \label{qk2}
\hg_1(-2\pi m/d)
=pq\Gamma(1-2\pi m\ii /d).
\end{equation}

By symmetry, \eqref{q4a} implies, for similarly defined $g_0$ and $\psi_0$, 
\begin{equation}\label{qk2a}
  \P(N\ge1 \text{ and } \xi_{K+1}=0)
=\frac1H \intoooo g_0(x)\dd x + \ppsiq0(\ln\gl)+o(1),
\end{equation}
where, noting that \eqref{qk1} and \eqref{qk2} are symmetric in $p$
and $q$, $\intoooo g_0(x)\dd x=pq$ and $\ppsiq0=\ppsiq1$.

Consequently, summing \eqref{q4a} and \eqref{qk2a},
with $\ppsix\=\ppsiq0+\ppsiq1=2\ppsiq1$,
\begin{equation}\label{qk3}
  \P(N\ge1)
=\frac{2pq}H + \ppsix(\ln\gl)+o(1).
\end{equation}
The result in the Poisson case now follows from \eqref{q3}, \eqref{q4},
\eqref{qk2} and \eqref{qk3}.
For the mean we have by \eqref{q3} and \eqref{qk3},
\begin{equation*}
  \E N=\sumj j\P(N=j) =\frac1{2pq}\P(N\ge1)
=\frac{1}H + \frac1{2pq}\ppsix(\ln\gl)+o(1).
\end{equation*}

To depoissonize, consider first adding $\Xi$ to a trie with
$\Po(n-n\qqqb)$ strings, and then increase the family by adding 
$\Po(n\qqqb)$ further strings; it is easily seen that with probability
$1-O(\gl\qqqw)=1-o(1)$, this does not change the place where $\Xi$ is
inserted, and thus not $N$. The same holds for all intermediate tries,
in particular for the one with exactly $n$ strings if there is one,
which there is \whp\ because $\P\bigpar{\Po(n-n\qqqb)\le n}\to1$ and
$\P\bigpar{\Po(n+n\qqqb)\ge n}\to1$. Hence the variable $N$ is \whp\
the same for $n$ strings and for $\Po(n)$ strings.
\end{proof}

It is easily verified that, 
at least if we ignore the error terms,
the expected number of new internal nodes added for each new string
given by \eqref{qe}
coincides with the derivative of 
$\E W_\gl=\frac\gl H+\frac\gl H \ppsitrie(\ln\gl)+o(\gl)$
given by \eqref{jeii}, as it should.

\begin{remark}
  \citet{ChrMahmoud} studied random climbing in random tries,
taking (in one version) steps left or  right with probabilities $p$
and $q$; this is 
like inserting a new node but without moving any old one. The length
of the climb is thus $D_n$ when $N=0$ or 1 but 
$D_n-(N-1)$ when $N\ge1$.

The average climb length found by \citet{ChrMahmoud} for this version
thus follows from
Theorems \refand{TD1+}{Tnew}.
\end{remark}

\section{Tunstall and Khodak codes}\label{Scodes}

Tunstall and Khodak codes are variable-to-fixed length codes that are
used in data compression. We give a brief description here.
See \cite{DRSS06}, \cite{DRSS08} and the survey \cite{Szp-codes} 
for more details
and references, as well as for an analysis using Mellin transforms.

We recall first the general situation. The idea is that an infinite 
string can be parsed as a unique sequence of nonoverlapping 
\emph{phrases} belonging to a certain (finite) \emph{dictionary} $\cD$.
Each phrase in the dictionary then can be represented by a binary number
of fixed length $\ell$; if there are $M$ phrases in the dictionary we take
$\ell\=\ceil{\lg M}$. 

Note first that a set of phrases is a dictionary allowing a unique
parsing in the way just described
if and only if every infinite string has exactly one prefix in
the dictionary. Equivalently, the phrases in the dictionary have to be
the external nodes of a trie where every internal node has two
children (so the Patricia trie is the same); this trie is the parsing tree.

By a random phrase we mean a phrase distributed as the unique initial
phrase in a random infinite string $\Xi$. Thus a phrase $\aaa$ in the
dictionary $\cD$ is chosen with probability $P(\aaa)$.
We let the random variable $\dl$ be the length of a random phrase.

If we parse an infinite \iid{} string $\Xi$, the successive phrases
will be independent with this distributions. Hence, if $K_N$ is the
(random) number of phrases required to code the $N$ first letters
$\xi_1\dotsm\xi_N$, then, see \refApp{App} and \eqref{nu},
$K_N=\nu(N-1)$ for a renewal process where the increments $X_i$ are
independent copies of $\dl$. Consequently, as $N\to\infty$, by
\refT{TA1}, 
\begin{equation}
\frac{K_N}N\asto \frac1{\E \dl}
\qquad\text{and}\qquad
\frac{\E K_N}N\to \frac1{\E \dl}.
\end{equation}
We obtain also convergence of higher moments and, by \refT{TA2}, a
central limit theorem for $K_N$. The expected number of bits required
to code a string of length $N$ is thus
\begin{equation*}
  \ell\E K_N \sim \frac{\ell N}{\E \dl} 
= \frac{\ceil{\lg M}}{\E \dl} N.
\end{equation*}
For simplicity, we consider the ratio
$\kk\=\xfrac{{\lg M}}{\E \dl}$, and call it the \emph{compression rate}.
(One objective of the code is to make this ratio small.)

In Khodak's construction of such a dictionary, 
we fix a threshold $r\in(0,1)$ and construct a
parsing tree as the subtree of the complete infinite binary tree
such that the internal nodes are the strings $\aaa=\ga_1\dotsm\ga_k$
with $P(\aaa)\ge r$; the external nodes are thus the strings $\aaa$
such that $P(\aaa)<r$ but the parent, $\aaa'$ say, has $P(\aaa')\ge
r$. The phrases in the Khodak code are the external nodes in this
tree. 
For convenience, we let $R=1/r>1$.
Let $M=M(R)$ be the number of phrases in the Khodak code.

In Tunstall's construction, we are instead given a number $M$. We start with
the empty phrase and then iteratively $M-1$ times replace a phrase
$\aaa$ having maximal $P(\aaa)$ by its two children $\aaa0$ and $\aaa1$.

It is easily seen that Khodak's construction with some $r>0$
gives the same result as 
Tunstall's with $M=M(R)$. 
Conversely, a Tunstall code is almost a Khodak code, with $r$ chosen
as the smallest $P(\aaa)$ for a proper prefix $\aaa$ of a phrase; the
difference is that Tunstall's construction handles ties more flexibly;
there may be some phrases too with $P(\aaa)=r$. Thus, Tunstall's
construction may give any desired number $M$ of phrases, while
Khodak's does not. We will see that in the non-arithmetic case, this
difference is asymptotically negligible, while it is important in the
arithmetic case. (This is very obvious if $p=q=1/2$, when Khodak's code
always gives a dictionary size $M$ that is a power of 2.)

Let us first consider the number of phrases, $M=M(R)$, in Khodak's
construction with a threshold $r=1/R$. 
This is a purely deterministic problem, but we may nevertheless apply our
probabilistic renewal theory arguments. In fact, $M$, the number of
leafs in the parsing tree, equals $1$ + the number of internal nodes. Thus,
$M=1+\sum_\aaa f(RP(\aaa))$
with $f(x)\=\ett{x\ge1}$, and we may apply \refT{Tsum}. 

\begin{theorem}\label{TMKhodak}
Consider the Khodak code with threshold $r=1/R$.
  \begin{thmenumerate}
\item
  If\/ $\ln p/\ln q$ is irrational, then, as $R\to\infty$,
  \begin{equation*}
	\frac{M(R)}{R} \to \frac{1} H.	
  \end{equation*}

\item
  If\/ $\ln p/\ln q$ is rational, 
then, as $R\to\infty$,
  \begin{equation*}
	\frac{M(R)}{R}  =  \frac1H
	\cdot\frac{d}{1-e^{-d}}e^{-d\frakt{(\ln R)/d}} + o(1).
  \end{equation*}
  \end{thmenumerate}
\end{theorem}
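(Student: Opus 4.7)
The plan is to apply \refT{Tsum} to the deterministic identity $M(R)=1+\sum_\aaa \ett{RP(\aaa)\ge 1}$, in which the indicator counts the internal nodes of the Khodak parsing tree. I would choose $f(x)\=\ett{x\ge 1}$; then $f$ is a.e.\ continuous (one jump, at $x=1$) and trivially satisfies \eqref{dRie}, since $f(x)=0$ for $x<1$ and $f(x)=1$ for $x\ge 1$. Thus \refT{Tsum} applies to $F(R)\=\sum_\aaa f(RP(\aaa))$, and after dividing by $R$ the additive constant $1$ is absorbed into $o(1)$.

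For the non-arithmetic case (i), the integral on the right-hand side of \eqref{jeiw} reduces to
\begin{equation*}
\intoo f(x)\,x\qww\dd x=\int_1^\infty x\qww\dd x=1,
\end{equation*}
so \eqref{jeiw} gives $M(R)/R\to 1/H$. For the $d$-arithmetic case (ii), I would use the explicit sum representation \eqref{tsumsum}, namely $\psi(t)=d\sumkoooo \ggh(kd-t)$ with $\ggh(t)\=e^t f(e^{-t})=e^t\ett{t\le 0}$. The contribution $\ggh(kd-t)=e^{kd-t}$ survives exactly when $kd\le t$, so, writing $n\=\floor{t/d}$ so that $nd-t=-d\frakt{t/d}$, the sum collapses to a single geometric series:
\begin{equation*}
\psi(t)=d\,e^{-t}\sum_{k\le n}e^{kd}=\frac{d\,e^{nd-t}}{1-e^{-d}}=\frac{d}{1-e^{-d}}\,e^{-d\frakt{t/d}}.
\end{equation*}
Substituting $t=\ln R$ into \eqref{jeiiw} then yields the stated formula.

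The only delicate point is the discontinuity of $f$ at $x=1$: this prevents us from invoking the ``continuous $\psi$'' clause of \refT{Tsum}, and indeed the resulting $\psi$ jumps at integer multiples of $d$. However, the a.e.\ continuity of $f$, together with direct Riemann integrability of $\ggh$ (automatic from \refL{LRiemann}, since $\ggh$ is supported on $(-\infty,0]$ and bounded there by $e^t$), is enough for the key renewal theorem underlying \refT{Tsum} to apply, so this is no real obstacle.
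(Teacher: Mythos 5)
Your proposal is correct and follows essentially the same route as the paper: write $M(R)=1+\sum_\aaa f(RP(\aaa))$ with $f(x)=\ett{x\ge1}$, apply \refT{Tsum} (the hypotheses hold since $f$ is a.e.\ continuous and satisfies \eqref{dRie}), compute $\int_1^\infty x\qww\dd x=1$ for the non-arithmetic case, and sum the geometric series in \eqref{tsumsum} to get $\frac{d}{1-e^{-d}}e^{-d\frakt{t/d}}$ in the arithmetic case. Your closing remark about the discontinuity of $f$ is a fair observation, but it is already covered by the statement of \refT{Tsum}, whose continuity clause for $\psi$ is simply not invoked here.
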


\begin{proof}
  The non-arithmetic case follows directly from \refT{Tsum}(i), since
  $\intoo f(x)x\qww\dd x=\int_1^\infty x\qww\dd x=1$.

In the arithmetic case, we use \eqref{tsumsum}. Since
$g(t)=e^{t}\ett{t\le0}$, the sum in \eqref{tsumsum} is a geometric
series that can be summed directly:
\begin{equation*}
  \psi(t)=d\sum_{kd\le t} e^{kd-t}
=\frac{d}{1-e^{-d}} e^{d\floor{t/d}-t}
=\frac{d}{1-e^{-d}} e^{-d\frakt{t/d}}.
\qedhere
\end{equation*}
\end{proof}

\begin{remark}
  \label{R15}
In the arithmetic case (ii), $\ln P(\ga)$ is a multiple of $d$ for
any string $\ga$. Hence $M(R)$ jumps only when
$R\in\set{e^{kd}:k\ge0}$, and it suffices to consider such $R$. For
these $R$, the result can be written
\begin{equation}
  \label{p15}
M(R)\sim\frac1H\frac{d}{1-e^{-d}}R,
\qquad \ln R\in d\bbZ.
\end{equation}
\end{remark}

Next, consider the length $\dl$ of a random phrase.
We will use the notation $\dtm$ for a Tunstall code with $M$ phrases
and $\dkr$ for a Khodak code with threshold $r=1/R$.

Consider first the Khodak code. By construction, given a random string
$\Xi=\xi_1\xi_2\dotsm$, the 
first phrase in it is $\xi_1\dotsm\xi_n$ where $n$ is the smallest
integer such that $P(\xi_1\dotsm\xi_n)=e^{-S_n}<r=e^{-\ln R}$.
Hence, by \eqref{nu},
\begin{equation}\label{dkr}
  \dkr=\nu(\ln R).
\end{equation}
Hence, Theorems \ref{TA1}--\ref{TA2} immediately yield the following
(as well as convergence of higher moments).

\begin{theorem}
  \label{TKh}
For the Khodak code,
the following holds as $R\to\infty$,
with $\gss=H_2-H^2=pq\ln^2(p/q)$:
\begin{align}
  \frac{\dkr}{\ln R}& \asto \frac1H, \label{tkha}
\\
\dkr&\sim\AsN\Bigpar{\frac{\ln R}H,\,\frac{\gss}{H^3}\ln R}, \label{tkhb}
\\
\Var{\dkr}&\sim\frac{\gss}{H^3}\ln R. \label{tkhc}
\end{align}
If\/ $\ln p/\ln q$ is irrational, then
\begin{equation}\label{tkhd}
\E{\dkr}=\frac{\ln R}H+\frac{H_2}{2H^2}+o(1). 
\end{equation}
If\/ $\ln p/\ln q$ is rational, then, with $d\=\gcdpq$
given by \eqref{d},
\begin{equation}\label{tkhe}
\E{\dkr}=\frac{\ln R}H+\frac{H_2}{2H^2}
+\frac dH\Bigpar{\frac12-\Bigfrakt{\frac{\ln R}{d}}}+o(1).
\end{equation}
\end{theorem}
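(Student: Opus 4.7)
The plan is to reduce everything to standard renewal theory via the identity \eqref{dkr}, which states $\dkr=\nu(\ln R)$ where $\nu$ is the counting function for the i.i.d.\ renewal sequence $X_i$ defined in \eqref{x}. Since those increments satisfy $\E X_i=H$ and $\Var X_i=H_2-H^2=pq\ln^2(p/q)=\gss$, every claim of the theorem becomes a direct translation of an asymptotic statement about $\nu(t)$ at $t=\ln R$, read off from the appendix with parameters $\mu=H$, $\E X^2=H_2$, $\gss=\gss$, and lattice span $d=\gcdpq$ in the arithmetic case.

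For \eqref{tkha}, \refT{TA1} gives $\nu(t)/t\asto 1/\mu$, which specializes to $\dkr/\ln R\asto 1/H$. The standard uniform integrability of $\nu(t)/t$ and its powers, which is part of the renewal-theoretic package in the appendix, upgrades this to convergence of all moments. The central limit statement \eqref{tkhb} and the variance asymptotic \eqref{tkhc} follow in exactly the same way from \refT{TA2}, since the renewal-CLT variance $\gss t/\mu^3$ becomes $\gss\ln R/H^3$ under our identifications.

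For the expectation expansions \eqref{tkhd}--\eqref{tkhe} I would invoke the refined renewal asymptotics \eqref{ta1+a} and \eqref{ta1+b} that were already used in the proof of \refT{TD1+}. The present situation is in fact cleaner than \refT{TD1+}: because $\dkr=\nu(\ln R)$ involves no initial random variable (we are using $\nu$, not $\hnu$), both the Gumbel contribution $\gamma/H$ and the \refL{Lsofie}-type folding of $X_0$ into the periodic correction disappear. In the non-arithmetic case, \eqref{ta1+a} reads $\E\nu(t)=t/\mu+\E X^2/(2\mu^2)+o(1)$, giving \eqref{tkhd}; in the arithmetic case, \eqref{ta1+b} produces the bare sawtooth $(d/\mu)(\tfrac12-\frakt{t/d})$, giving \eqref{tkhe}.

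There is no genuine obstacle here beyond bookkeeping. The only care required is to verify that the arithmetic oscillation has the correct form: $\nu(t)$ jumps only at lattice points in $d\bbZ$, so the period is indeed $d$ and the coefficient of $\tfrac12-\frakt{(\ln R)/d}$ is $d/H$, matching \eqref{tkhe}. Once the reduction \eqref{dkr} is made, the rest is a direct appeal to the appendix.
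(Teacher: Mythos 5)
Your proposal is correct and is exactly the paper's argument: the paper establishes $\dkr=\nu(\ln R)$ in \eqref{dkr} and then states that Theorems \ref{TA1}--\ref{TA2} immediately yield the result, with \eqref{ta1+a0} and \eqref{ta1+b0} (the $X_0=0$ forms you correctly reduce to) giving \eqref{tkhd}--\eqref{tkhe}. No gap; the identifications $\mu=H$, $\E X_1^2=H_2$, $\gss=pq\ln^2(p/q)$ and span $d=\gcdpq$ are all that is needed.
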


In the arithmetic case, as said in \refR{R15}, it suffices to consider
thresholds such that $-\ln r=\ln R$ is a multiple of $d$; in this
case \eqref{tkhe} becomes
\begin{equation}\label{tkhex}
\E{\dkr}=\frac{\ln R}H+\frac{H_2}{2H^2}
+\frac d{2H}+o(1).
\end{equation}

We analyze the Tunstall code by comparing it to the Khodak code. Thus,
suppose that $M$ is given, and increase $R$ (decrease $r$) until we
find a Khodak code with $M(R)\ge M$ phrases. (By our definitions,
$M(R)$ is right-continuous, so a smallest such $R$ exists.)
Let $M_+\=M(R)\ge M$ and $M_-\=M(R-)<M$.
Thus, there are $M_+-1$ strings $\ga$ with $P(\ga)\ge r=R\qw$, and
$M_--1$ strings with $P(\ga)>r$; consequently there are $M_+-M_-$
strings with $P(\ga)=r$.
The strings with $P(\ga)=r$ are not parsing phrases in the Khodak code
(while all their children are), but we use some of them in the
Tunstall code to achieve exactly $M$ parsing phrases. Since each
of these strings replaces two parsing phrases in the Khodak code, 
the total number of parsing phrases decreases by 1 for each used string with
$P(\ga)=r$, and thus the Tunstall code uses 
$M(R)-M=M_+-M$ parsing phrases with $P(\ga)=r$.
The length $\dtm$ of a random phrase, realized as the first phrase in $\Xi$,
equals $\dkr$ unless $\Xi$ begins with one of the phrases $\ga$ in the
Tunstall code with $P(\ga)=r$, in which case $\dtm=\dkr-1$.
The probability of the latter event is evidently $P(\ga)=r$ for each
such $\ga$, and is thus $(M(R)-M)r$. Consequently, with $R$ as above,
\begin{equation}
  \label{dtm-dkr}
\dtm=\dkr-\gdm,
\end{equation}
where $\gdm\in\setoi$ and $\P(\gdm=1)
=(M(R)-M)/R$.
We can now find the results for $\dtm$:

\begin{theorem}
  \label{TT}
For the Tunstall code,
the following holds as $M\to\infty$,
with $\gss=H_2-H^2=pq\ln^2(p/q)$:
\begin{align}
  \frac{\dtm}{\ln M}& \asto \frac1H, \label{tta}
\\
\dtm&\sim\AsN\Bigpar{\frac{\ln M}H,\,\frac{\gss}{H^3}\ln M}, \label{ttb}
\\
\Var{\dtm}&\sim\frac{\gss}{H^3}\ln M. \label{ttc}
\end{align}
If\/ $\ln p/\ln q$ is irrational, then
\begin{equation}\label{ttd}
\E{\dtm}=\frac{\ln M}H+\frac{\ln H}H+\frac{H_2}{2H^2}+o(1).
\end{equation}
If\/ $\ln p/\ln q$ is rational, then, with $d\=\gcdpq$
given by \eqref{d},
\begin{multline}\label{tte}
\E{\dtm}=\frac{\ln M}H+\frac{\ln H}H+\frac{H_2}{2H^2}
+\frac 1H\ln{\frac{\sinh(d/2)}{d/2}}
\\
+\frac dH\ppsi\Bigpar{\Bigfrakt{\frac{\ln M+\ln(H(1-e^{-d})/d)}{d}}}
+o(1),
\end{multline}
where 
\begin{equation}\label{psiT}
\ppsix(x)\=\frac{e^{dx}-1}  {e^{d}-1} -x
. 
\end{equation}
\end{theorem}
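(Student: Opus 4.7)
The plan is to use the identity \eqref{dtm-dkr}, namely $\dtm = \dkr - \gdm$, where $R = R(M)$ is the smallest threshold with $M(R) \ge M$ and $\gdm \in \setoi$ satisfies $\P(\gdm = 1) = (M(R) - M)/R$. The asymptotics of $\dtm$ then reduce to those of $\dkr$ in \refT{TKh}, after translating $\ln R$ into $\ln M$ via \refT{TMKhodak} and computing $\E\gdm$. The a.s.\ convergence \eqref{tta}, CLT \eqref{ttb} and variance \eqref{ttc} follow immediately from \eqref{tkha}--\eqref{tkhc}, since $\gdm \in \setoi$ is negligible on the scales $\ln M$ and $\sqrt{\ln M}$, while (as shown next) $\ln R - \ln M = O(1)$ in both cases.

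First I would determine $\ln R$ in terms of $\ln M$. In the non-arithmetic case (i), \refT{TMKhodak}(i) gives $M(R)/R \to 1/H$ and $M(R-)/R \to 1/H$; hence $(M(R) - M)/R \le (M(R) - M(R-))/R = o(1)$, which yields $\ln R = \ln M + \ln H + o(1)$ and $\P(\gdm = 1) = o(1)$. The mean \eqref{ttd} is then obtained by substitution into \eqref{tkhd}. In the arithmetic case (ii), jumps of $M(R)$ occur only at $R \in \set{e^{kd}: k\ge 1}$, so $R = e^{k_Md}$ for an integer $k_M$. Writing $\alpha \= \ln M + \ln(H(1-e^{-d})/d)$ and $\psi_0 \= \lceil\alpha/d\rceil - \alpha/d$, the refinement \eqref{p15} gives $k_Md = \alpha + d\psi_0 + o(1)$ together with $M(e^{k_Md}) = M e^{d\psi_0} + o(M)$, whence
\begin{equation*}
\P(\gdm = 1) = \frac{d\bigpar{1 - e^{-d\psi_0}}}{H(1-e^{-d})} + o(1).
\end{equation*}

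The main work is \eqref{tte}. Since $\ln R = k_Md \in d\bbZ$, the refinement \eqref{tkhex} gives $\E\dkr = k_Md/H + H_2/(2H^2) + d/(2H) + o(1)$. Subtracting $\P(\gdm=1)$, substituting $k_Md = \alpha + d\psi_0$ and $\alpha = \ln M + \ln(H(1-e^{-d})/d)$, and splitting
\begin{equation*}
\frac{1}{H}\ln\frac{H(1-e^{-d})}{d} = \frac{\ln H}{H} - \frac{d}{2H} + \frac{1}{H}\ln\frac{\sinh(d/2)}{d/2}
\end{equation*}
via $1 - e^{-d} = 2e^{-d/2}\sinh(d/2)$, the non-$\psi_0$ terms assemble into those in \eqref{tte}. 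The $\psi_0$-dependent terms simplify through the algebraic identity
\begin{equation*}
\psi_0 - \frac{1 - e^{-d\psi_0}}{1 - e^{-d}} = \frac{e^{d\{\alpha/d\}} - 1}{e^d - 1} - \{\alpha/d\},
\end{equation*}
obtained by writing $\psi_0 = 1 - \{\alpha/d\}$ (off of integers); the right-hand side is precisely $\psi(\{\alpha/d\})$ for $\psi$ as in \eqref{psiT}, producing \eqref{tte}. The main obstacle is careful bookkeeping of the integer jumps of $k_M$ and $M(R)$, and in particular the boundary behaviour as $\{\alpha/d\} \to 0$; this is controlled by the continuity and $1$-periodicity of $\psi$ on $\bbR$ (with $\psi(0) = 0 = \lim_{x\to 1^-}\psi(x)$), so any ambiguity at boundary points is absorbed into the $o(1)$ error.
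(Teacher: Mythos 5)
Your proposal is correct and follows essentially the same route as the paper: reduce $\dtm$ to $\dkr$ via $\dtm=\dkr-\gdm$, use Theorem~\ref{TMKhodak} (and \eqref{p15}) to express $\ln R$ and $\E\gdm$ in terms of $\ln M$, feed this into \eqref{tkhd}/\eqref{tkhex}, and absorb the integer-rounding ambiguity near $\frakt{\alpha/d}\in\{0,1\}$ using the continuity of $\ppsix$ with $\ppsix(0)=\ppsix(1)=0$, exactly as the paper does with its variables $x$ and $x_0$. The only differences are cosmetic (your $\psi_0=1-\frakt{\alpha/d}$ versus the paper's $x$, and your explicit statement of the algebraic identity yielding \eqref{psiT}).
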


Note that $\ppsix$ is continuous, with
$\ppsix(0)=\ppsix(1)=0$. $\ppsix$ is convex and thus $\ppsix\le0$ on \oi.
In the symmetric case $p=q=1/2$, $d=H=\ln 2$ and $\ppsix(x)=2^x-1-x$,
with a minimum $-0.086071\dots$.

\begin{proof}
Let as above $R$ be the smallest number with $M(R)\ge M$; thus
$M(R)\ge M>M(R-)$.
By \refT{TMKhodak},
$\ln R=\ln M+O(1)$, so \eqref{tta}--\eqref{ttc} follow from
  \eqref{tkha}--\eqref{tkhc} and the fact that $|\dtm-\dkr|\le1$, see
  \eqref{dtm-dkr}. 

If $\ln p/\ln q$ irrational,
\refT{TMKhodak} yields $M(R)/R\to 1/H$, and thus also $M(R-)/R\to
1/H$.
Since $M(R)\ge M>M(R-)$, also 
\begin{equation}
  \label{b6}
\frac MR\to\frac1H,
\end{equation}
and further
$M(R)/M\to1$.
Consequently,
\begin{equation*}
  \E\gdm = \frac{M(R)-M}{R}=\Bigpar{\frac{M(R)}M-1}\frac MR \to0,
\end{equation*}
and thus, by \eqref{dtm-dkr}, $\E\dtm=\E\dkr-\E\gdm=\E\dkr+o(1)$.
Since also, by \eqref{b6} again, $\ln R=\ln M+\ln H+o(1)$, \eqref{ttd}
follows from \eqref{tkhd}.

In the case when $\ln p/\ln q$ is rational, we argue similarly, but we
have to be more careful. First, necessarily $R=e^{Nd}$ for some
integer $N$, see \refR{R15}. Further, \eqref{p15} applies.
Let, for convenience, 
\begin{equation}\label{beta}
\gb\=H\,\frac{1-e^{-d}}d
=H\,\frac{\sinh(d/2)}{d/2}e^{-d/2};
\end{equation}
thus \eqref{p15} can be
written
$M(R)\sim\gb\qw R$ as $R\to\infty$.
Let
\begin{equation}\label{sprut}
  x\=\frac1d\ln(\gb M)-N+1
=\frac1d\ln\frac{\gb M}R+1.
\end{equation}
Then, by these definitions and \eqref{p15},
\begin{align}
  M&=\gb\qw e^{d(N-1+x)},\label{M}\\
M(R)&=\gb\qw R(1+o(1))=\gb\qw e^{dN+o(1)},\label{MR}\\
M(R-)&=M(Re^{-d})=\gb\qw (Re^{-d})(1+o(1))=\gb\qw e^{d(N-1)+o(1)}.
\end{align}
Since $M(R-)<M\le M(R)$, we see that $o(1)\le x\le 1+o(1)$. We define
also, using \eqref{M},
\begin{equation}\label{x00}
  x_0\=\Bigfrakt{\frac{\ln(\gb M)}{d}}
=\Bigfrakt{\frac{\ln e^{d(N-1+x)}}{d}}
=\frakt x.
\end{equation}
Typically, $0\le x<1$, and then $x_0=x$, but it may happen that $x$ is
slightly below 0 and $x_0=x+1$, or that $x$ is slightly above 1 and
then $x_0=x-1$.

By \eqref{sprut}, $\ln R=\ln(\gb M)+d(1-x)$, and thus \eqref{tkhex}
yields, using \eqref{beta},
\begin{equation*}
  \begin{split}
\E{\dkr}&
=\frac{\ln (\gb M)}H+\frac{H_2}{2H^2}+\frac d{2H}+\frac dH(1-x)+o(1)
\\&
=\frac{\ln M}H+\frac{\ln H}H+\frac1H\ln\frac{\sinh(d/2)}{d/2}
+\frac{H_2}{2H^2}+\frac dH(1-x)+o(1).
  \end{split}
\end{equation*}
Furthermore, by $R=e^{dN}$, \eqref{M}, \eqref{MR} and \eqref{beta},
\begin{equation*}
  \begin{split}
\E\gdm&=\frac{M(R)-M}R
=\gb\qw(1-e^{d(x-1)})+o(1)
\\&
=\frac dH\,\frac{1-e^{xd-d}}{1-e^{-d}}+o(1)
=\frac dH\Bigpar{1-\frac{e^{xd}-1}{e^d-1}}+o(1).	
  \end{split}
\end{equation*}
Combining these, we find by \eqref{dtm-dkr} and \eqref{psiT},
\begin{equation*}
  \begin{split}
\E\dtm
&=
\E{\dkr}-\E\gdm
\\&
=\frac{\ln M}H+\frac{\ln H}H+\frac1H\ln\frac{\sinh(d/2)}{d/2}
+\frac{H_2}{2H^2}
+\frac dH\ppsix(x)
+o(1).
  \end{split}
\end{equation*}
This is almost \eqref{tte}, except that there $\ppsix(x)$ is replaced
by $\ppsix(x_0)=\ppsix(\frakt{\ln(\gb M)/d})$, see \eqref{x00}.
However, as noted above, $x\neq x_0$ can happen only when one of $x$
and $x_0$ is $o(1)$ and the other is $1+o(1)$. Since the function
$\ppsix$ is continuous and $\ppsix(0)=\ppsix(1)$, we see that in this
case
$\ppsix(x)-\ppsix(x_0)=\pm(\ppsix(1)-\ppsix(0))+o(1)=o(1)$. Hence,
$\ppsix(x)=\ppsix(x_0)+o(1)$ in all cases, and \eqref{tte} follows.
\end{proof}

\begin{remark}
  We have chosen to derive \refT{TT} from the corresponding result
  \refT{TKh} for the Khodak code. An alternative is to note that
in the Tunstall code, we obtain the random phrase length $\dtm$
by stopping $\Xi$ at  $M_+-M$ of the $M_+-M_-$ strings
  $\ga$ with $P(\ga)=r$, and all strings with smaller $P(\ga)$. 
By symmetry, we obtain the same distribution
  of the length if we stop randomly with probability $(M_+-M)/(M_+-M_-)$
whenever $P(\ga)=e^{-S_n}=r$; equivalently, we stop when
  $e^{-S_n-X_0}<r$, where $X_0$ is a random variable, independent of
  $\Xi$, with values $0$ and $\eps$, for some very small positive
  $\eps=\eps(M)$, and $\P(X_0=\eps)=(M_+-M)/(M_+-M_-)$. Consequently,
  we have $\dtm\eqd\hnu(\ln R)$, with $R$ and $X_0$ as above, and we
  can apply Theorems \ref{TA1}--\ref{TA2} (and \refR{RA1}) directly.
\end{remark}

\begin{corollary}
  The compression rate for the Tunstall code is
  \begin{equation*}
\kk\=\frac{\lg M}{\E\dtm}
=\frac H{\ln2}\lrpar{1-\frac{\ln H+H_2/2H+\gd}{\ln M}+o\bigpar{(\ln M)\qw}}
  \end{equation*}
where $\gd=0$ when $\ln p/\ln q$ is irrational while when $\ln p/\ln q$
is rational,
\begin{equation*}
  \gd\=\ln\frac{\sinh(d/2)}{d/2} 
+d\ppsix\Bigpar{\Bigfrakt{\frac{\ln M+\ln(H(1-e^{-d})/d)}{d}}},
\end{equation*}
with $d$ given by \eqref{d} and $\ppsix$ by \eqref{psiT}.

For the Khodak code, the compression rate $\lg (M(R))/\E\dkr$
is asymptotically given by
the same formula, with $\ln M$ replaced by $\ln R$, except that the
$\ppsix$ term does not appear in $\gd$.
\end{corollary}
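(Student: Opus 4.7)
The plan is to obtain both statements by directly dividing the expansion of $\lg M$ (or $\lg M(R)$) by the expansion of $\E\dtm$ (or $\E\dkr$) supplied by \refT{TT} and \refT{TKh} (together with \refT{TMKhodak}), and then expanding the reciprocal to first order in $1/\ln M$ (respectively $1/\ln R$).

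For the Tunstall code this is the cleaner of the two cases because $\lg M=(\ln M)/\ln 2$ is exact. I would write \eqref{ttd} and \eqref{tte} uniformly as
\begin{equation*}
\E\dtm=\frac{\ln M}{H}\Bigpar{1+\frac{\ln H+H_2/(2H)+\gd}{\ln M}+o\bigpar{1/\ln M}},
\end{equation*}
with $\gd=0$ in the non-arithmetic case and $\gd=\ln\tfrac{\sinh(d/2)}{d/2}+d\ppsix(\cdots)$ in the arithmetic case, reading off the latter directly from \eqref{tte}. Using $1/(1+u)=1-u+O(u^2)$ with $u=O(1/\ln M)$ then gives the claimed formula for $\kk$.

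The Khodak case is slightly more delicate because both $\lg M(R)$ and $\E\dkr$ depend on $R$ in non-trivial ways in the arithmetic case, and I would expect the cancellation of their oscillatory parts to be the main step to verify. In the non-arithmetic case, \refT{TMKhodak}(i) gives $\ln M(R)=\ln R-\ln H+o(1)$, which combined with \eqref{tkhd} yields $\gd=0$ exactly as in the Tunstall case. In the arithmetic case, rewrite the identity $\frac{d}{1-e^{-d}}=\frac{d/2}{\sinh(d/2)}\,e^{d/2}$ to obtain, from \refT{TMKhodak}(ii),
\begin{equation*}
\ln M(R)=\ln R-\ln H-\ln\frac{\sinh(d/2)}{d/2}+d\Bigpar{\tfrac12-\Bigfrakt{\tfrac{\ln R}{d}}}+o(1).
\end{equation*}
The key observation is that the oscillatory term $d(\tfrac12-\{\ln R/d\})$ appearing here is \emph{the same} as the one appearing in \eqref{tkhe} for $\E\dkr$. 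When I form $\lg M(R)/\E\dkr$ and expand to first order in $1/\ln R$, these two oscillatory contributions therefore cancel, leaving only the constant correction $\ln\tfrac{\sinh(d/2)}{d/2}$; this matches the claim that the $\ppsix$ term is absent for the Khodak rate.

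Finally, I would collect the constants $\ln H$ (from $\ln M(R)=\ln R-\ln H+\cdots$ or, in the Tunstall case, from the $\frac{\ln H}{H}$ term of \eqref{ttd}/\eqref{tte}), $H_2/(2H)$ and, when applicable, $\gd$, and verify that the error terms coming from the denominator expansion are indeed $o((\ln M)\qw)$ (respectively $o((\ln R)\qw)$), so that multiplying by the numerator preserves this order. The only real obstacle is the book-keeping of the oscillatory cancellation in the Khodak arithmetic case; everything else is a routine Taylor expansion of $(1+u)\qw$.
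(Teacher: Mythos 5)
Your proposal is correct, and for the Tunstall part it is exactly the intended argument: divide $\lg M=\ln M/\ln 2$ by the expansions \eqref{ttd}/\eqref{tte} and expand $(1+u)^{-1}$ with $u=O(1/\ln M)$. For the Khodak part you take a genuinely different (though closely related) route from the paper: you combine \eqref{tkhe} of \refT{TKh} with \refT{TMKhodak}(ii), use $d/(1-e^{-d})=\frac{d/2}{\sinh(d/2)}e^{d/2}$ to rewrite the constant, and observe that the sawtooth term $d\bigpar{\tfrac12-\frakt{\ln R/d}}$ appears identically in $\ln M(R)$ and in $H\,\E\dkr$, so it cancels in the ratio, leaving only $\ln\frac{\sinh(d/2)}{d/2}$. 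The paper's own (brief) justification instead uses the identification $\dkr=\dtx{M(R)}$, i.e.\ the Khodak code is the Tunstall code with $M=M(R)$ phrases, so \eqref{tte} applies verbatim; in the arithmetic case one may take $\ln R\in d\bbZ$, and then the argument $\frakt{\ln(\gb M(R))/d}=\frakt{N+o(1)}$ of $\psi$ is close to $0$ or $1$, where $\psi$ vanishes by continuity, which is why the $\psi$ term disappears. Your version has the minor advantages of covering all $R$ (not only $\ln R\in d\bbZ$) and of exhibiting the cancellation of the oscillatory parts explicitly, while the paper's reduction is shorter because it reuses the Tunstall formula directly; the remaining bookkeeping you mention (that the $o(1)$ errors, divided by the leading term, remain $o((\ln M)^{-1})$ resp.\ $o((\ln R)^{-1})$, and that $\ln M(R)=\ln R+O(1)$ allows replacing one by the other in the correction term) is routine, as you say.
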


The reason that the $\ppsix$ term does not appear for the Khodak code
is that $\dkr=\dtx{M(R)}$, and in the arithmetic case, 
we may assume that $R=e^{Nd}$, and then
for $\dtx{M(R)}$,
the argument $x_0$ of $\psi$ is 
$\frakt{\ln(\gb M(R))/d}=\frakt{\ln(R)/d+o(1)}=\frakt{N+o(1)}$
 and thus close to 0 or 1, where $\ppsix$ vanishes.

\section{A stopped random walk}\label{SX}
\citet{DrSz} consider (motivated by the study of Tunstall and Khodak codes)
walks in a region in the first quadrant bounded by two crossing lines.
Their first result, on the number of possible paths, seems to require
a longer comment, and will not be considered here. Their second result
is about a random walk in the plane taking only unit steps north or
east, which is stopped when it exits the region; the probability of an
east step is $p$ each time. Coding steps east by 1 and north by 0,
this is the same as taking our random string $\Xi$. \citet{DrSz}
study, in our notation, the exit time 
\begin{equation*}
\dkv\=\min\set{n:n>k \text{ or } S_n>V\ln 2}  
\end{equation*}
for given numbers $K$ and $V$, with $K$ integer.
We thus have
\begin{equation}
  \label{x1}
\dkv=(K+1)\bmin\nu(V\ln 2).
\end{equation}
We have here kept the notations $K$ and $V$ from \cite{DrSz}, but for
convenience we in the sequel write $V_2\=V\ln 2$.
We assume $p\neq q$, since otherwise $\dkv=(K\bmin\floor V)+1$ is
deterministic. 

We need a little more notation. Let as usual
$\phi(x)\=(2\pi)\qqw e^{-x^2/2}$ and $\Phi(x)\=\int_{-\infty}^x
\phi(y)\dd y$ be the density and distribution functions of the
standard normal distribution. Further, let
\begin{equation}
  \label{Psi}
\Psi(x)\= \int_{-\infty}^x \Phi(y)\dd y = x\Phi(x)+\phi(x).
\end{equation}
This definition is motivated by the following lemma.

\begin{lemma}
  \label{LPsi}
If\/ $Z\sim N(0,1)$, then for every real $t$,
$\E(Z\bmax t)=\Psi(t)$ and 
$\E(Z\bmin t)=-\Psi(-t)$.
Further, $\Psi(t)-\Psi(-t)=t$.
\end{lemma}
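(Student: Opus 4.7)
The plan is straightforward: verify that the two expressions for $\Psi$ given in the definition actually agree, then compute $\E(Z\bmax t)$ by direct integration, deduce the min statement from the max statement by symmetry, and finally derive the identity $\Psi(t)-\Psi(-t)=t$ either by combining the two expectations or by direct algebra. There is no real obstacle; the whole argument is a short calculation.

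First I would check the second equality in the definition of $\Psi$, namely $\int_{-\infty}^x\Phi(y)\dd y=x\Phi(x)+\phi(x)$, by integration by parts. The boundary term at $-\infty$ vanishes since $\Phi(y)\to 0$ fast enough that $y\Phi(y)\to 0$, and $\int_{-\infty}^x y\phi(y)\dd y=-\phi(x)$ because $-\phi$ is a primitive of $y\phi(y)$. This also gives the antiderivative I will reuse immediately.

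Next, for any real $t$, split at $t$:
\begin{equation*}
\E(Z\bmax t)=\int_{-\infty}^{t}t\,\phi(z)\dd z+\int_{t}^{\infty}z\,\phi(z)\dd z=t\Phi(t)+\phi(t)=\Psi(t),
\end{equation*}
where the middle integral is evaluated using $\int_t^\infty z\phi(z)\dd z=\phi(t)$ and the last equality is the closed form verified in the previous step. For $\E(Z\bmin t)$, the cleanest route is the identity $\min(z,t)=-\max(-z,-t)$ together with $-Z\eqd Z$, which gives
\begin{equation*}
\E(Z\bmin t)=-\E((-Z)\bmax(-t))=-\E(Z\bmax(-t))=-\Psi(-t).
\end{equation*}

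Finally, for $\Psi(t)-\Psi(-t)=t$, I would simply add the two expectations already computed: since $\max(z,t)+\min(z,t)=z+t$ pointwise, taking expectations yields $\Psi(t)+(-\Psi(-t))=\E Z+t=t$. (As a sanity check, the closed form $\Psi(x)=x\Phi(x)+\phi(x)$ combined with $\Phi(t)+\Phi(-t)=1$ and $\phi(t)=\phi(-t)$ gives the same conclusion at once.)
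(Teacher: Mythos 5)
Your proof is correct. It differs from the paper's mainly in how $\E(Z\bmax t)=\Psi(t)$ is obtained: you first verify the closed form $\Psi(x)=x\Phi(x)+\phi(x)$ by integration by parts and then compute $\E(Z\bmax t)$ by splitting the integral at $t$, landing on $t\Phi(t)+\phi(t)$; the paper instead writes $\E(Z\bmax t)=\E(Z\bmax t-Z)$ (using $\E Z=0$), applies the tail formula $\E W=\int_0^\infty\P(W>x)\dd x$ to the nonnegative variable $W=Z\bmax t-Z$, and gets $\int_0^\infty\Phi(t-x)\dd x=\int_{-\infty}^t\Phi(y)\dd y=\Psi(t)$ directly against the integral form of $\Psi$, never needing the closed form at that point. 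Your route is slightly more computational (it needs $\int_t^\infty z\phi(z)\dd z=\phi(t)$ and the by-parts identity), but it has the small bonus of actually checking the equality asserted in the definition \eqref{Psi}, which the paper states without proof. The remaining two claims are handled identically in both arguments: $\E(Z\bmin t)=-\Psi(-t)$ via $Z\bmin t=-((-Z)\bmax(-t))$ and $-Z\eqd Z$, and $\Psi(t)-\Psi(-t)=t$ from the pointwise identity $(z\bmax t)+(z\bmin t)=z+t$ (with the closed-form check as a parenthetical alternative, just as in the paper).
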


\begin{proof}
  Since $\E Z=0$,
  \begin{equation*}
	\begin{split}
\E(Z\bmax t)	  
&=
\E(Z\bmax t-Z)
=\intoo\P(Z\bmax t -Z>x)\dd x
=\intoo\Phi(t-x)\dd x
\\&
=\Psi(t).
	\end{split}
  \end{equation*}
Further, since $-Z\eqd Z$,
\begin{equation*}
  -\E(Z\bmin t) = \E\bigpar{(-Z)\bmax(-t)}
= \E\bigpar{Z\bmax(-t)}
=\Psi(-t).
\end{equation*}
Finally, $\Psi(t)-\Psi(-t)=\E\bigpar{(Z\bmax t)+(Z\bmin t)} =
\E(Z+t)=t$.
(This also follows from \eqref{Psi} and $\Phi(t)+\Phi(-t)=1$,
$\phi(-t)=\phi(t)$.) 
\end{proof}

We can now state our version of the result by \citet{DrSz}. We do not
obtain as sharp error estimates as they do
(although our bounds easily can be improved when $|K-V_2/H|$ is large
enough). 
On the other hand, our
result is more general and includes the transition region when
$V_2/H\approx K$ and both stopping conditions are important.

\begin{theorem}
  \label{TX}
Suppose that $p\neq q$ and that $V,K\to\infty$. Let\/ 
$V_2\=V\ln 2$ and\/
$\hgss\=(H_2-H^2)/H^3>0$.
\begin{thmenumerate}
  \item\label{txd1}
If\/ $(K-V_2/H)/\sqrt{V_2} \to+\infty$, then $\dkv$ is asymptotically
normal:
\begin{equation}
  \label{tx1}
\dkv\sim\AsN\Bigpar{\frac{V_2}{H},\hgss V_2}.
\end{equation}
Further, $\Var(\dkv)\sim\hgss V_2$.  

\item\label{txd2}
If\/ $(K-V_2/H)/\sqrt{V_2} \to-\infty$, then $\dkv$ is asymptotically
degenerate: 
\begin{equation}  \label{tx2}
  \P(\dkv=K+1)\to1.
\end{equation}
Further, $\Var D=o(V_2)$.
  \item\label{txd3}
If\/ $(K-V_2/H)/\sqrtvv \to a\in(-\infty,+\infty$), then $\dkv$ is
asymptotically truncated normal:
\begin{equation}
  \label{tx3}
V_2\qqw(\dkv-V_2/H)\dto (\hgs Z)\bmin a = \hgs\bigpar{Z\bmin(a/\hgs)}.
\end{equation}
with $Z\sim N(0,1)$.
Further,$$\Var(\dkv)\sim V_2\Var(\hgs Z\bmin a)
=V_2\hgss\Var(Z\bmin(a/\hgs)).$$
\item\label{txe}
In every case,
\begin{align}
  \E\dkv 
&=
\frac{V_2}H -\hgs\sqrt{V_2}\Psi\Bigparfrac{V_2/H-K}{\hgs\sqrt{V_2}} +
o(\sqrt{V_2}) \label{txe1} 
\\
&=
K-\hgs\sqrt{V_2}\Psi\Bigparfrac{K-V_2/H}{\hgs\sqrt{V_2}} +
o(\sqrt{V_2}) \label{txe2}. 
\end{align}
\item\label{txe+}
If $(K-V_2/H)/\sqrt{V_2} \ge\ln V_2$, then
\begin{equation}
  \E \dkv = \frac{V_2}H+\frac{H_2}{2H^2}+\ppsi(V_2)+o(1),
\end{equation}
where $\ppsix=0$ in the non-arithmetic case and 
$\ppsix(t)=\frac dH\bigpar{1/2-\set{t/d}}$ in the $d$-arithmetic case.
\item\label{txe-}
If $(K-V_2/H)/\sqrt{V_2} \le-\ln V_2$, then
\begin{equation}
  \E \dkv = K+1+o(1).
\end{equation}
\end{thmenumerate}
\end{theorem}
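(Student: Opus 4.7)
The starting point is the identity $\dkv = (K+1)\bmin \nu(V_2)$ from \eqref{x1}. By the ordinary renewal CLT (\refT{TA2} with \refR{RA1}), setting $\zeta_V := (\nu(V_2)-V_2/H)/\sqrtvv$, one has $\zeta_V \dto \hgs Z$ with $Z\sim N(0,1)$ and $\hgss=(H_2-H^2)/H^3$, and also $\Var(\nu(V_2))/V_2\to\hgss$. Writing $a_V := (K+1-V_2/H)/\sqrtvv = a + o(1)$ (when the latter limit exists), the identity becomes
\begin{equation*}
\frac{\dkv-V_2/H}{\sqrtvv} = \zeta_V \bmin a_V.
\end{equation*}

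For \ref{txd1}, \ref{txd2}, \ref{txd3}, apply the continuous mapping theorem to the function $x\mapsto x\bmin a_V$. In case \ref{txd1}, $a_V\to+\infty$, so $\zeta_V\bmin a_V=\zeta_V$ with probability $\to 1$, and $\dkv$ inherits the CLT and variance asymptotics from $\nu(V_2)$. In case \ref{txd2}, $a_V\to-\infty$ forces $\zeta_V\bmin a_V = a_V$ whp, hence $\dkv=K+1$ whp and $\Var(\dkv)=o(V_2)$. In case \ref{txd3}, the maps $x\mapsto x\bmin a_V$ converge uniformly on compacts to $x\mapsto x\bmin a$, so $\zeta_V\bmin a_V\dto\hgs Z\bmin a=\hgs(Z\bmin a/\hgs)$. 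The variance statement in \ref{txd3} follows once one knows that $((\dkv-V_2/H)/\sqrtvv)^2$ is uniformly integrable; since $|\dkv-V_2/H|\le|\nu(V_2)-V_2/H|+|a_V|\sqrtvv$, this reduces to uniform integrability of $\zeta_V^2$, which is standard in renewal theory because $X_i$ is bounded.

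For \ref{txe}, use the alternative form $\dkv=(K+1)-(K+1-\nu(V_2))_+$ and standardize:
\begin{equation*}
\E\dkv=(K+1)-\sqrtvv\,\E\lrpar{\frac{K+1-\nu(V_2)}{\sqrtvv}}_+.
\end{equation*}
The random variable $(K+1-\nu(V_2))/\sqrtvv=a_V-\zeta_V$ converges in distribution to $a-\hgs Z$ (treating $a=\pm\infty$ as limits), and UI of $\zeta_V^2$ upgrades this to $L^1$ convergence of the positive part, giving $\E((a_V-\zeta_V)_+)\to\hgs\Psi(a/\hgs)$. Here the computation $\E(a-\hgs Z)_+=\hgs\E(a/\hgs-Z)_+=\hgs\Psi(a/\hgs)$ uses \refL{LPsi} via the identity $(c-Z)_+=c-Z\bmin c$ and $\Psi(c)-\Psi(-c)=c$. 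Substituting and absorbing the $O(1)$ differences between $a$ and $a_V$ (via the Lipschitz property of $\Psi$, since $|\Psi'|=\Phi\le1$) yields \eqref{txe2}, and applying $\Psi(x)=x+\Psi(-x)$ converts it to \eqref{txe1}. The delicate point is that the $L^1$ limit must be established uniformly across all three regimes $a\to+\infty,a\to-\infty,a\to a$, which is where one uses that $X_i$ is bounded and hence $\zeta_V$ has uniformly subgaussian tails.

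For \ref{txe+} and \ref{txe-}, sharpen the tail control. Since $X_i$ takes only the two values $\ln(1/p)$ and $\ln(1/q)$, Hoeffding/Chernoff gives, for $m$ with $|mH-V_2|$ in the relevant range,
\begin{equation*}
\P(\nu(V_2)> m)=\P(S_m\le V_2)\le\exp(-c(mH-V_2)^2/m),
\end{equation*}
and analogously for $\P(\nu(V_2)\le m)$ with $mH<V_2$. In case \ref{txe+}, take $m=K+1$; the hypothesis makes the exponent $\ge c'(\ln V_2)^2$, so $\P(\nu(V_2)>K+1)$ is super-polynomially small, and then
\begin{equation*}
\E(\nu(V_2)-(K+1))_+\le (\E\nu(V_2)^2)^{1/2}\,\P(\nu(V_2)>K+1)^{1/2}=o(1).
\end{equation*}
Hence $\E\dkv=\E\nu(V_2)-o(1)$, and the second-order renewal expansion \eqref{ta1+a}--\eqref{ta1+b} (applied with no initial $X_0$) gives the claimed formula, with the arithmetic-case correction $\psi(V_2)=(d/H)(1/2-\frakt{V_2/d})$ coming from the standard overshoot expansion. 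In case \ref{txe-}, the hypothesis forces $K=O(V_2)$, so $\E(K+1-\nu(V_2))_+\le(K+1)\P(\nu(V_2)<K+1)=O(V_2)\cdot e^{-c(\ln V_2)^2}=o(1)$, and $\E\dkv=K+1-o(1)$. The main technical obstacle throughout is uniformity of the tail/UI estimates; this is cleanly supplied by the boundedness of $X_i$, which converts the CLT input of \refT{TA2} into exponential-type concentration strong enough to handle cases \ref{txe+} and \ref{txe-}.
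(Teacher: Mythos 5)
Your proposal is correct and follows essentially the same route as the paper: the identity $\dkv=(K+1)\bmin\nu(V_2)$, the renewal CLT of \refT{TA2} plus continuous mapping for \ref{txd1}--\ref{txd3}, uniform integrability of the standardized $\nu(V_2)$ to upgrade to means and variances with \refL{LPsi} giving the $\Psi$ terms in \ref{txe}, and Chernoff bounds plus Cauchy--Schwarz together with \refT{TA1+} for \ref{txe+} and \ref{txe-}. The only cosmetic differences are that the paper obtains the uniform integrability from \eqref{ta2e2} (rather than subgaussian tails from bounded increments) and splits \ref{txe} into the three regimes by passing to subsequences instead of arguing uniformly.
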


\begin{proof}
  Let 
  \begin{align*}
\tD&\=\frac{\dkv-V_2/H}{\sqrtvv}, &
\tnu&\=\frac{\nu(V_2)-V_2/H}{\sqrtvv}, \\
\tK&\=\frac{K-V_2/H}{\sqrtvv}, &
\tKi&\=\frac{K+1-V_2/H}{\sqrtvv} =\tK+o(1).
  \end{align*}
Thus, by \eqref{x1}, $\tD=\tnu\bmin\tKi$.
By \refT{TA2},
\begin{equation}
  \label{x4}
\tnu=\frac{\nu(V_2)-V_2/H}{\sqrtvv} \dto N(0,\hgss).
\end{equation}
The results on convergence in distribution in \ref{txd1}--\ref{txd3}
follow immediately, noting that in \ref{txd1}, \whp\ $\nu(V_2)<K+1$
and thus $\dkv=\nu(V_2)$; in \ref{txd3} we use \eg{} the continuous
mapping theorem on $\bmin:\bbR^2\to\bbR$.

For \ref{txe}, note first that the two expressions in \eqref{txe1} and
\eqref{txe2} are the same by \refL{LPsi}.
We may by considering subsequences assume that one of the cases
\ref{txd1}--\ref{txd3} occurs.

Next, \eqref{ta2e2} 
can be written $\E(\tnu^2)\to\hgss$, which
together with \eqref{x4} implies that $\tnu^2$ is uniformly
integrable. 
(See  \eg{} \cite[Theorem 5.5.9]{Gut}.)
In case \ref{txd3}, when $\tKi$ converges, this implies that
$\tD^2=(\tnu\bmin\tKi)^2$ also is uniformly integrable, and thus the
convergence in distribution already proved for \ref{txd3} implies
$
\E\tD\to\E (\hgs(Z\bmin(a/\hgs))) = -\hgs\Psi(-a/\hgs)$,
which yields \eqref{txe1} when $K\to a\in\bbR$; 
further,
the uniform square integrability of $\tD^2$ implies 
$\Var\tD\to\Var(\hgs Z\bmin a)$ as asserted in \ref{txd3}.

If instead $\tKi\to+\infty$, case \ref{txd1}, we may assume $\tKi>0$;
then $\tD^2=(\tnu\bmin\tKi)^2\le\tnu^2$ and thus $\tD^2$ is uniformly
integrable in this case too. Hence \eqref{tx1} implies both
$\Var(\tD)\sim\hgss$, or equivalently
$\Var\dkv\sim\hgss V_2$ as asserted in \ref{txd1}, and $\E\tD\to0$, which
yields \eqref{txe1} in this case because $\Psi(-\tK)\to0$.

Finally, if $\tKi\to-\infty$, case \ref{txd2}, we may assume that
$\tKi<0$; then
$\tKi-\tD=(\tKi-\tnu)_+\le|\tnu|$ is uniformly square integrable, and
$\tKi-\tD\pto0$ by \eqref{tx2}.
Hence $\tKi-\E\tD=\E(\tKi-\tD)\to0$, and thus \eqref{txe2} holds,
since $\Psi(\tK)\to0$ and $1=o(\sqrt{V_2})$.
Further, $\Var\tD=\Var(\tKi-\tD)\to0$, which yields $\Var D=o(V_2)$.

This completes the proof of \ref{txe}.

For \ref{txe+}, we have $\dkv\le\nu(V_2)$ and thus, by the \CSineq{}
and \refT{TA1},
\begin{equation}
  \label{x7}
  \begin{split}
\E|\dkv-\nu(V_2)| 
&\le\E\bigpar{\nu(V_2)\ett{\dkv\neq\nu(V_2)}}
\\&
\le\bigpar{\E\nu(V_2)^2}\qq \P\bigpar{\dkv\neq\nu(V_2)}\qq
\\&
=O(V_2) \P\bigpar{\dkv\neq\nu(V_2)}\qq.	
  \end{split}
\end{equation}
For $\tK\ge\ln V_2$, Chernoff's bound
\cite[Theorem 2.1]{JLR} implies, because $S_{K+1}$ is a linear
transformation of a binomial $\Bi(K+1,p)$ random variable,
\begin{equation*}
  \begin{split}
  \P(\dkv\neq\nu(V_2))
&=\P(\nu(V_2)>K+1)
=\P(S_{K+1}\le V_2)
\\&
=\P(S_{K+1}-\E S_{K+1}\le -H\tKi\sqrtvv)
\\&
\le\exp\Bigpar{-c_1{\frac{\tKi^2V_2}{K+1+ \tKi\sqrtvv}}}
\\&
\le\exp(-c_2\ln^2(V_2)).
  \end{split}
\end{equation*}
for some $c_1,c_2>0$ (depending on $p$); the last inequality is
perhaps most easily seen by considering the case $K+1\le 2 V_2/H$
(when $K+1\asymp V_2$) and $K+1> 2 V_2/H$ (when $\tKi\asymp
K/\sqrtvv$) separately.
Hence, the \rhs{} of \eqref{x7} tends to 0, 
and thus $\E\dkv=\E\nu(V_2)+o(1)$. Consequently, \ref{txe+}
follows from the formulas \eqref{ta1+a0}  and \eqref{ta1+b0} 
for $\E\nu(V_2)$ provided by
\refT{TA1+}.

The argument for \ref{txe-} is very similar.  The Chernoff bound for
$S_K$ implies  
\begin{equation*}
  \begin{split}
  \P(\dkv\neq K+1)
=\P(\nu(V_2)<K+1)=\P(S_K>V_2)
\le\exp(-c_3\ln^2(V_2)),
  \end{split}
\end{equation*}
and the \CSineq{} then implies $\E\abs{K+1-\dkv}=o(1)$, proving \ref{txe-}.
\end{proof}

\appendix

\section{Some renewal theory}
\label{App}

For the readers' (and our own) convenience, we collect here a few
standard results 
from renewal theory, sometimes in less standard versions.
See \eg\ \citet{Asmussen}, \citet{FellerII} or \citet{GutSRW} for
further details.

We suppose that $X_1,X_2,\dots$ is an \iid{} sequence of non-negative
random variables with finite  mean $\mu\=\E X>0$, and that 
$S_n\=\sumin X_i$. Moreover, we suppose that $X_0$ is independent of
$X_1,X_2,\dots$ (but $X_0$ may have a different distribution, and is
not necessarily positive) and define $\hS_n\=\sumn X_i=S_n+X_0$.
We further define the first passage times $\nu(t)$ and $\hnu(t)$ by
\eqref{nu} and \eqref{hnu} and the renewal function $U$ by \eqref{U}.
(Recall that $\nu$ is a special case of $\hnu$ with $X_0=0$. Hence the
results stated below for $\tnu$ hold for $\nu$ too.)

For some theorems, 
we have to distinguish between the arithmetic (lattice) and
non-arithmetic (non-lattice) cases,
in general defined as follows:
\begin{description}
  \item[arithmetic (lattice)]
There is a positive real number $d$ such that $X_1/d$ always is an integer.
We let $d$ be the largest such number and say that $X_1$ is
\emph{$d$-arithmetic}. (This maximal $d$ is called the \emph{span} of
the distribution.) 

  \item[non-arithmetic (non-lattice)]  
No such $d$ exists. (Then $X_1$ is not supported
  on any proper closed subgroup of $\bbR$.)
\end{description}

\begin{theorem}
  \label{TA1}
As $t\to\infty$,
\begin{equation}\label{ta1a}
  \frac{\hnu(t)}{t}\asto \frac{1}{\mu}.
\end{equation}
If further $0<r<\infty$ and $\E|X_0|^r<\infty$, then
$\hnu(t)/t\to\mu\qw$ in $L^r$, \ie, $\E|\hnu(t)/t-\mu\qw|^r\to0$, and thus
\begin{equation}\label{ta1b}
 \E\lrpar{ \frac{\hnu(t)}{t}}^r\to \frac{1}{\mu^r}.
\end{equation}
\end{theorem}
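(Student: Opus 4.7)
The plan is to derive (\ref{ta1a}) from the strong law of large numbers via a standard sandwich on $\hS_{\hnu(t)-1}$ and $\hS_{\hnu(t)}$, and then to obtain (\ref{ta1b}) by showing $\E(\hnu(t)/t)^r\to\mu^{-r}$; since the a.s.\ limit in (\ref{ta1a}) is the constant $1/\mu$, convergence of the $r$th moments to that of the degenerate limit will automatically upgrade the a.s.\ convergence to $L^r$ convergence.

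For (\ref{ta1a}) I first note that $(\hS_n)_{n\ge 0}$ is nondecreasing in $n$, because $X_i\ge 0$ for $i\ge 1$, and tends a.s.\ to $+\infty$ since $S_n/n\to\mu>0$ a.s.\ by the SLLN. Consequently $\hnu(t)<\infty$ a.s.\ for every $t$ and $\hnu(t)\to\infty$ a.s.\ as $\ttoo$. Once $t\ge X_0$ we have $\hnu(t)\ge 1$ and $\hS_{\hnu(t)-1}\le t<\hS_{\hnu(t)}$; dividing by $\hnu(t)$ gives
\[
\frac{\hS_{\hnu(t)-1}}{\hnu(t)-1}\cdot\frac{\hnu(t)-1}{\hnu(t)}\ \le\ \frac{t}{\hnu(t)}\ <\ \frac{\hS_{\hnu(t)}}{\hnu(t)},
\]
and both outer terms tend a.s.\ to $\mu$ because $\hS_n/n=S_n/n+X_0/n\to\mu$, which yields (\ref{ta1a}).

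For (\ref{ta1b}), Fatou's lemma immediately gives $\liminf_{\ttoo}\E(\hnu(t)/t)^r\ge\mu^{-r}$, so the task is the matching upper bound. The key coupling is that since $(S_n)_{n\ge 0}$ is nondecreasing and $\hnu(t)=\min\{n\ge 0:S_n>t-X_0\}$, one has $\hnu(t)\le\nu(t+|X_0|)$; conditioning on $X_0$, which is independent of $(X_i)_{i\ge 1}$, and writing $\phi_r(s)\=\E\nu(s)^r$, this yields
\[
\E\hnu(t)^r\ \le\ \E\phi_r(t+|X_0|).
\]
Invoking the classical renewal bounds $\phi_r(s)/s^r\to\mu^{-r}$ and $\phi_r(s)=O(s^r)$ uniformly for $s\ge 1$ (see \eg\ \citet{GutSRW}), I would split the expectation according to $\{|X_0|\le t\}$ and $\{|X_0|>t\}$: on the first event $\phi_r(t+|X_0|)/t^r$ is bounded by a constant and converges pointwise to $\mu^{-r}$, so dominated convergence delivers a contribution tending to $\mu^{-r}$; on the second event $\phi_r(t+|X_0|)/t^r\le C(|X_0|/t)^r$, whose expectation is $\le Ct^{-r}\E(|X_0|^r\ett{|X_0|>t})\to 0$ under the assumption $\E|X_0|^r<\infty$.

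The main technical input, and the step I would think about most carefully, is the uniform-in-$s$ moment bound $\phi_r(s)=O(s^r)$ under the bare assumption $\mu<\infty$ (no extra moment condition on $X_1$). This is the classical renewal fact established by comparison with a truncated walk $X_i\wedge M$ of bounded increments, whose mean can be made arbitrarily close to $\mu$ by choosing $M$ large; for this truncated walk Hoeffding's inequality provides exponential tails on the associated passage time and hence the required uniform $L^r$ control. Once this input is granted, $\E(\hnu(t)/t)^r\to\mu^{-r}$ together with the a.s.\ convergence from (\ref{ta1a}) forces $\{(\hnu(t)/t)^r\}$ to be uniformly integrable, and $L^r$ convergence of $\hnu(t)/t$ to $1/\mu$ follows.
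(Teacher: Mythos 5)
Your proposal is correct, but it is genuinely more detailed than, and slightly different in route from, what the paper does: the paper's entire proof is a citation of \citetq{Theorem 2.5.1}{GutSRW} for the case $X_0=0$ together with the remark that the general case ``follows by essentially the same proof''. Your a.s.\ part is the same SLLN sandwich that underlies the cited result, so there is no divergence there. For the $L^r$ part, instead of redoing the delay-free proof with $X_0$ carried along, you reduce the delayed problem to the delay-free one: the domination $\hnu(t)\le\nu(t+|X_0|)$, conditioning on the independent $X_0$, the classical facts $\E\nu(s)^r=O(s^r)$ and $\E\nu(s)^r/s^r\to\mu^{-r}$ (which require only $0<\mu<\infty$, via truncation of the increments), and the split on $\{|X_0|\le t\}$ versus $\{|X_0|>t\}$, where the tail term $t^{-r}\E\bigpar{|X_0|^r\ett{|X_0|>t}}\to0$ is exactly where the hypothesis $\E|X_0|^r<\infty$ enters; finally moment convergence plus a.s.\ convergence gives uniform integrability of $(\hnu(t)/t)^r$ and hence $L^r$ convergence (valid for all $0<r<\infty$). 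This buys a self-contained argument that makes the role of the assumption on $X_0$ explicit, at the cost of still invoking the same classical delay-free moment theorem the paper cites; the paper's approach buys brevity. Two small points worth stating explicitly if you write this up: the sandwich is only applied on the a.s.\ eventual event $\{X_0\le t,\ \hnu(t)\ge2\}$, and the Fatou lower bound uses $\hnu(t)\ge0$.
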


\begin{proof}
See \eg{} \citetq{Theorem 2.5.1}{GutSRW} for
  the case $X_0=0$; the general case follows by essentially the same proof.
\end{proof}

\begin{theorem}
  \label{TA1+}
Suppose that $\E X_1^2<\infty$ and $\E|X_0|<\infty$.
  \begin{romenumerate}
\item
If the distribution of $X_1$ is non-arithmetic,
  then, as \ttoo,
\begin{align}\label{ta1+a0}
  \E{\nu(t)}&= \frac{t}{\mu}+\frac{\E X_1^2}{2\mu^2}+o(1)
\intertext{and, more generally,}
\label{ta1+a}
  \E{\hnu(t)}&= \frac{t}{\mu}+\frac{\E X_1^2}{2\mu^2} -\frac{\E X_0}{\mu}+o(1).
\end{align}
\item
If the distribution of $X_1$ is $d$-arithmetic,	
  then, as \ttoo,
\begin{align}\label{ta1+b0}
  \E{\nu(t)}&= 
\frac{t}{\mu}+\frac{\E X_1^2}{2\mu^2}
+\frac{d}{\mu}\Bigpar{\frac12-\Bigfrakt{\frac{t}d}}
  +o(1).
\intertext{and, more generally,}
\label{ta1+b}
  \E{\hnu(t)}&= 
\frac{t}{\mu}+\frac{\E X_1^2}{2\mu^2}
+\frac{d}{\mu}\Bigpar{\frac12-\E\Bigfrakt{\frac{t-X_0}d}}
  -\frac{\E X_0}{\mu} +o(1).
\end{align}
  \end{romenumerate}
\end{theorem}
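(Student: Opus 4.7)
The plan is to use Wald's identity to reduce the two formulas to the asymptotic behaviour of the overshoot $R(t)\=S_{\nu(t)}-t$, and then to extend to $\hnu$ by conditioning on $X_0$. Since $\nu(t)$ is a stopping time for $X_1,X_2,\dots$ with $\E\nu(t)<\infty$ by \refT{TA1}, Wald's identity gives $\E S_{\nu(t)}=\mu\E\nu(t)$, hence
\begin{equation*}
\E\nu(t)=\frac{t+\E R(t)}{\mu}.
\end{equation*}
Thus \eqref{ta1+a0} and \eqref{ta1+b0} amount to showing, as \ttoo, that $\E R(t)\to \E X_1^2/(2\mu)$ in the non-arithmetic case and $\E R(t)=\E X_1^2/(2\mu)+d\bigpar{1/2-\frakt{t/d}}+o(1)$ in the $d$-arithmetic case.

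For the non-arithmetic case I would combine distributional convergence with uniform integrability. The renewal equation $\P(R(t)>y)=\P(X_1>t+y)+\int_0^t\P(X_1>t-s+y)\dd U(s)$, together with direct Riemann integrability of $s\mapsto\P(X_1>s+y)$ (\refL{LRiemann}) and the key renewal theorem (\refT{TAkey}), yields $\P(R(t)>y)\to\mu\qw\int_y^\infty\P(X_1>x)\dd x$, \ie, $R(t)$ converges in distribution to the equilibrium overshoot whose mean is $\mu\qw\intoo x\P(X_1>x)\dd x=\E X_1^2/(2\mu)$. Upgrading this to convergence of expectations needs uniform integrability, which follows from the pointwise bound $0\le R(t)\le X_{\nu(t)}$ together with the (separately proved) convergence $\E X_{\nu(t)}\to \E X_1^2/\mu$, a size-biased statement that uses exactly the hypothesis $\E X_1^2<\infty$.

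For the $d$-arithmetic case, since $S_n\in d\bbZ_{\ge 0}$ the function $\nu$ is constant on each half-open interval $[md,(m+1)d)$, so $R\bigpar{(m+f)d}=R(md)-fd$ for $f\in[0,1)$, and it suffices to prove $\E R(md)\to \E X_1^2/(2\mu)+d/2$ as $m\to\infty$. The arithmetic version of the key renewal theorem gives the limit masses $\P(R_\infty=kd)=d\P(X_1\ge kd)/\mu$ for $k\ge 1$, and the elementary identity $\sum_{k\ge 1}k\P(Y\ge k)=(\E Y^2+\E Y)/2$ with $Y=X_1/d$ produces $\E R_\infty=\E X_1^2/(2\mu)+d/2$; subtracting $fd=d\frakt{t/d}$ then produces \eqref{ta1+b0}.

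To pass to $\hnu$, note that conditionally on $X_0$, $\hnu(t)$ is a stopping time for $X_1,X_2,\dots$, so Wald's identity applied conditionally and then averaged gives $\E\hnu(t)=(t+\E\widehat R(t)-\E X_0)/\mu$ with $\widehat R(t)\=\hS_{\hnu(t)}-t$. On $\set{X_0\le t}$ one has $\widehat R(t)=R(t-X_0)$, while on $\set{X_0>t}$ one has $0\le\widehat R(t)\le X_0+X_1$ whose expectation on this event is $o(1)$ since $\E|X_0|<\infty$. Conditioning on $X_0$ and applying dominated convergence (using the uniform boundedness of $\E R(s)$ established above) transfers the asymptotic of $\E R$ to $\E\widehat R$; in the arithmetic case the correction survives as $d\bigpar{1/2-\E\frakt{(t-X_0)/d}}$, giving \eqref{ta1+b}. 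The main obstacle is the uniform integrability step for $R(t)$: distributional convergence is a routine key-renewal consequence, but upgrading it to convergence of expectations is exactly what validates the $\E X_1^2/(2\mu^2)$ correction; the fractional-part bookkeeping in the $d$-arithmetic case is then straightforward.
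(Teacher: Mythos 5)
Your argument is correct, and for the part the paper actually proves it is the same argument: the paper's proof of this theorem is a citation to Gut, \emph{Stopped Random Walks}, Theorem 2.5.2 for the case $X_0=0$, followed by exactly the conditioning-on-$X_0$ step you perform, including the observation $\hnu(t)=\nu(t-X_0)$ on $\set{X_0\le t}$ and the fractional-part bookkeeping $\E\bigpar{\floor{(t-X_0)/d}d}=t-\E X_0-d\E\frakt{(t-X_0)/d}$. What you add is a self-contained proof of the cited base case via Wald's identity and the overshoot $R(t)=S_{\nu(t)}-t$, which is the standard route behind that reference, so your write-up is more complete than the paper's rather than genuinely different. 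The only step needing care is the upgrade from $R(t)\dto R_\infty$ to $\E R(t)\to\E R_\infty$: your domination $0\le R(t)\le X_{\nu(t)}$ works, but to get uniform integrability of $X_{\nu(t)}$ from convergence of its mean you also need its convergence in distribution to the size-biased law (which you allude to and which follows by the same key-renewal computation). A slightly shorter variant is to apply \refT{TAkey} directly to $\E R(t)=\int_0^t z(t-s)\dd U(s)$ with $z(u)\=\E\bigpar{(X_1-u)\ett{X_1>u}}$, which is monotone and integrable with $\intoo z(u)\dd u=\E X_1^2/2$ precisely because $\E X_1^2<\infty$; this gives the mean asymptotics in one stroke, and Lorden's inequality (or this same bound) supplies the uniform bound $\sup_s\E R(s)<\infty$ that your dominated-convergence step for general $X_0$ uses. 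Your arithmetic-case reduction $R\bigpar{(m+f)d}=R(md)-fd$ and the identity $\sum_{k\ge1}k\P(Y\ge k)=\bigpar{\E Y^2+\E Y}/2$ correctly yield the $d\bigpar{\tfrac12-\frakt{t/d}}$ correction, and the treatment of the event $\set{X_0>t}$ is fine since its contribution is $o(1)$ under $\E|X_0|<\infty$.
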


\begin{proof}
See \eg{} \citetq{Theorem 2.5.2}{GutSRW} for
  the case $X_0=0$; the general case follows easily by conditioning on
  $X_0$. In the arithmetic case, note that
$\hnu(t)=\nu(t-X_0)=\nu\bigpar{\floor{(t-X_0)/d}d}$ and use 
$\E(\floor{(t-X_0)/d}d) = t-\E X_0 -d\E\frakt{(t-X_0)/d}$.
\end{proof}

\begin{theorem}
  \label{TA2}
Assume that $\gss\=\Var X_1<\infty$.
Then, as \ttoo, 
\begin{equation}\label{ta2}
  \frac{\hnu(t)-t/\mu}{\sqrt{t}} \dto N\Bigpar{0,\frac{\gss}{\mu^3}}.
\end{equation}
If further $\gss>0$, this can be written
$\hnu\sim\AsN(\mu\qw t,\gss\mu\qwww t)$.

Moreover, if also $\E X_0^2<\infty$, then 
\begin{align}
\label{ta2var}
  \Var\bigpar{\hnu(t)}=\frac{\gss}{\mu^3} t+o(t);
\intertext{and}
\label{ta2e2}
  \E\bigpar{\hnu(t)-t/\mu}^2=\frac{\gss}{\mu^3} t+o(t).  
\end{align}
\end{theorem}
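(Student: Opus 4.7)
The plan is to derive \eqref{ta2} via the standard renewal inversion trick, and then upgrade the distributional convergence to the moment statements \eqref{ta2var} and \eqref{ta2e2} by establishing uniform integrability. Since $\hnu(t)$ differs from $\nu(t)$ only through the independent shift $X_0$, and the conclusions are in the scale $\sqrt t$ while $X_0=O_p(1)$ under our moment hypotheses, the main work is really for $\nu(t)$, with the passage $\nu\mapsto\hnu$ carried out at the end by conditioning on $X_0$.

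For the CLT, I would exploit the inversion relation $\hnu(t)>n \iff \hS_n\le t$. Fix $x\in\bbR$ and set $n(t)=\lceil t/\mu+x\sqrt{\gss t/\mu^3}\rceil$, so that a direct computation gives
\begin{equation*}
\frac{t-n(t)\mu}{\sqrt{n(t)\gss}}\longrightarrow -x \qquad \text{as }t\to\infty.
\end{equation*}
Then
\begin{equation*}
\P\Bigpar{\frac{\hnu(t)-t/\mu}{\sqrt{\gss t/\mu^3}}>x}
=\P\bigpar{\hS_{n(t)}\le t}
=\P\Bigpar{\frac{\hS_{n(t)}-n(t)\mu-X_0}{\sqrt{n(t)\gss}}\le \frac{t-n(t)\mu-X_0}{\sqrt{n(t)\gss}}},
\end{equation*}
and since the left-hand side inside $\P$ is $(S_{n(t)}-n(t)\mu)/\sqrt{n(t)\gss}\dto N(0,1)$ by the classical CLT while the right-hand side converges in probability to $-x$ (using $X_0=O_p(1)$ and $\E|X_0|<\infty$), Slutsky's theorem yields the limit $\Phi(-x)=1-\Phi(x)$. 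This is \eqref{ta2}.

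For the moment statements, the CLT reduces \eqref{ta2e2} to proving that the family $\bigl\{(\hnu(t)-t/\mu)^2/t:t\ge 1\bigr\}$ is uniformly integrable; then \eqref{ta2var} follows by subtracting the mean, which contributes only $o(\sqrt t)$ by \refT{TA1+}. Uniform integrability can be obtained from a uniform bound $\E\nu(t)^{2+\eps}=O(t^{2+\eps})$ for some $\eps>0$, which is standard (see, e.g., \citet{GutSRW}) and extends to $\hnu$ through $\hnu(t)\le \nu((t-X_0)\vee 0)+\mathbf{1}[X_0>t]\cdot 1$ after conditioning on $X_0$; alternatively, one can derive $\E\nu(t)^2=(t/\mu)^2+O(t)$ directly by Wald-type identities on $S_{\nu(t)}^2$ and show matching upper and lower bounds for the second moment, which directly yields \eqref{ta2e2}.

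The main obstacle is the uniform integrability step: without it, \eqref{ta2} gives only $\liminf$ versions of \eqref{ta2e2}. Handling this requires either a Chow--Robbins style moment estimate for $\nu(t)$ (available under $\E X_1^2<\infty$) or a direct computation using $\hS_{\hnu(t)}=t+O_p(1)$ combined with control of the overshoot's second moment. The assumption $\E X_0^2<\infty$ is exactly what is needed to transfer these bounds from $\nu$ to $\hnu$ through the identity $\hnu(t)-t/\mu=\bigl(\nu(t-X_0)-(t-X_0)/\mu\bigr)-X_0/\mu$, after which the claim reduces to the $X_0=0$ case for $\nu$ plus a negligible $L^2$ term from $X_0/\mu$.
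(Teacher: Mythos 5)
Your CLT argument (inversion $\hnu(t)>n\iff\hS_n\le t$, deterministic $n(t)$, classical CLT plus Slutsky to absorb $X_0=O_p(1)$) is correct and is essentially the textbook proof; the paper itself does not reprove this but simply cites \citet{GutSRW}, Theorem 2.5.2, for $X_0=0$ and handles general $X_0$ by conditioning, noting also that \eqref{ta2var} and \eqref{ta2e2} are equivalent because $\E\hnu(t)-t/\mu=O(1)$ by \refT{TA1+} (your ``$o(\sqrt t)$'' is the same point). Do add one line for the degenerate case $\gss=0$, where your normalization $\sqrt{\gss t/\mu^3}$ is not available and \eqref{ta2} just says $(\hnu(t)-t/\mu)/\sqrt t\pto0$; the paper dismisses this as trivial but your computation as written divides by zero.

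The genuine gap is in the moment step, which is exactly the part the paper outsources to Gut. Neither of your two proposed mechanisms delivers \eqref{ta2e2} as stated. First, a bound $\E\nu(t)^{2+\eps}=O(t^{2+\eps})$ only gives uniform integrability of $(\nu(t)/t)^2$, i.e.\ of the quantity relevant to the law of large numbers scale; what you need is uniform integrability of $\bigpar{(\nu(t)-t/\mu)/\sqrt t}^2$, for which the sufficient moment bound is $\sup_t\E\bigabs{(\nu(t)-t/\mu)/\sqrt t}^{2+\eps}<\infty$ --- a statement about fluctuations at scale $\sqrt t$ that does not follow from crude polynomial growth of $\E\nu(t)^{2+\eps}$. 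Second, the ``alternative'' estimate $\E\nu(t)^2=(t/\mu)^2+O(t)$ cannot ``directly yield'' \eqref{ta2e2}: combined with $\E\nu(t)=t/\mu+O(1)$ it only gives $\E(\nu(t)-t/\mu)^2=O(t)$, an error of the same order as the constant $\gss t/\mu^3$ you must identify (and $L^1$-boundedness of $(\nu(t)-t/\mu)^2/t$ does not give its uniform integrability either). The correct ingredient is the one you only gesture at: either the Chow--Robbins/Gut theorem that $\bigcpar{\bigpar{(\nu(t)-t/\mu)/\sqrt t}^2:t\ge1}$ is uniformly integrable under $\E X_1^2<\infty$ alone (this is precisely the content of the result the paper cites), or a direct computation via Wald's second identity, $\E\bigpar{S_{\nu(t)}-\mu\nu(t)}^2=\gss\,\E\nu(t)$, writing $S_{\nu(t)}=t+R_t$ with overshoot $R_t\ge0$ and proving $\E R_t^2=o(t)$ (true under $\E X_1^2<\infty$, but it requires an argument), after which Cauchy--Schwarz on the cross term gives $\E(\mu\nu(t)-t)^2=\gss t/\mu+o(t)$, i.e.\ \eqref{ta2e2} for $\nu$; the transfer to $\hnu$ by conditioning on $X_0$ with $\E X_0^2<\infty$ is then fine and matches the paper. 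As written, the step you flag as ``the main obstacle'' is indeed the obstacle, and the specific justifications you offer for it would not go through.
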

\begin{proof}
See \eg{} \citetq{Theorem 2.5.2}{GutSRW}
for the case $X_0=0$,
noting that  \eqref{ta2var} and \eqref{ta2e2} are equivalent because
$\E\hnu(t)-t/\mu=O(1)$ by \refT{TA1+};
again, the case with a general
  $X_0$ is similar, or follows by conditioning on $X_0$.
The case $\gss=0$ is trivial.
\end{proof}

\begin{remark}\label{RA1}
  We can allow $X_0=X_0\sssn$ to depend on $n$ in Theorems
\ref{TA1}--\ref{TA2} provided $\asto$ is  weakened to $\pto$ in \eqref{ta1a}
and we add the following uniformity assumptions:
$X_0\sssn$ is tight;
for $L^r$ convergence and \eqref{ta1b} we further
assume that $\sup_n \E|X_0\sssn|^r<\infty$; for \refT{TA1+} we
assume that $X_0\sssn$ are uniformly integrable; for 
\eqref{ta2var} and \eqref{ta2e2} we
assume that $\sup_n \E|X_0\sssn|^2<\infty$.
\end{remark}

For the evaluation of \eqref{ta1+b} when $X_0$ is non-trivial, we note 
the following formula.
\begin{lemma}
  \label{Lsofie}
Suppose that $X$ has a continuous distribution with finite mean, and a
characteristic function $\gf(t)\=E e^{\ii t X}$ that satisfies
$\gf(t)=O(|t|^{-\gd})$ for some $\gd>0$. Then, for any real $u$, 
\begin{equation*}
  \E\frakt{X+u}
=
\frac12-\sum_{n\neq0}\frac{\gf(2\pi n)}{2\pi n \ii} e^{2\pi\ii n u}.
\end{equation*}
\end{lemma}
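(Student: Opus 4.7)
The plan is to compute $\E\frakt{X+u}$ by expanding the sawtooth function $h(y)\=\frakt{y}-1/2$ in its Fourier series on the circle $\bbR/\bbZ$ and integrating termwise against the law of $X+u$. A routine integration by parts would give the Fourier coefficients $\hat h(n)=-1/(2\pi\ii n)$ for $n\neq 0$ and $\hat h(0)=0$, so formally $h(y)=-\sum_{n\neq 0}e^{2\pi\ii n y}/(2\pi\ii n)$; taking expectation and using $\E e^{2\pi\ii n(X+u)}=\gf(2\pi n)e^{2\pi\ii n u}$ would then produce exactly the claimed formula. The real issue is that this Fourier series converges only conditionally and breaks down at the integers, so the passage to the expectation needs to be justified.

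To legitimize the interchange I would pass to Fej\'er (Ces\`aro) sums $\sigma_N h = F_N * h$, where $F_N$ is the Fej\'er kernel. This gives two key properties: (a) Fej\'er's theorem yields $\sigma_N h(y)\to h(y)$ at every point $y\in\bbR\setminus\bbZ$, and (b) since $F_N\ge 0$ with $\int F_N=1$ on $[0,1)$, one has the uniform bound $|\sigma_N h(y)|\le\|h\|_\infty=1/2$. The hypothesis that $X$ has a continuous (non-atomic) distribution forces $\P(X+u\in\bbZ)=0$, so $\sigma_N h(X+u)\to h(X+u)$ almost surely, and bounded convergence then gives $\E\sigma_N h(X+u)\to \E h(X+u) = \E\frakt{X+u}-1/2$.

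On the Fourier side, $\E\sigma_N h(X+u) = -\sum_{0<|n|<N}(1-|n|/N)\,\gf(2\pi n)e^{2\pi\ii n u}/(2\pi\ii n)$, and here the decay assumption $\gf(t)=O(|t|^{-\gd})$ enters: it yields $|\gf(2\pi n)/n|=O(|n|^{-1-\gd})$, hence the unweighted series $\sum_{n\neq 0}\gf(2\pi n)e^{2\pi\ii n u}/(2\pi\ii n)$ converges absolutely, and a second application of dominated convergence on the counting measure shows the Ces\`aro-weighted partial sums tend to this full sum. Equating the two limits and rearranging gives the lemma. The main obstacle is simply this double justification of limit interchange; once the Fej\'er setup is in place, both steps are standard, and the rest of the lemma is just bookkeeping.
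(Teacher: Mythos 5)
Your proof is correct, but it takes a genuinely different route from the paper. The paper disposes of the lemma in two lines by a reduction to an external result: setting $X_u\=\floor{X+u}-u+1$, so that $\frakt{X+u}=X-X_u+1$, it invokes the formula for the mean of a rounded continuous random variable from Theorem 2.3 of the author's earlier paper on rounding and oscillatory asymptotics; this is where the finite-mean hypothesis is actually used, since the identity $\E\frakt{X+u}=\E X-\E X_u+1$ requires $\E X$ to exist. Your argument is instead self-contained: you expand the sawtooth $h(y)=\frakt y-\tfrac12$ in its Fourier series, use Fej\'er means together with the non-atomicity of the law of $X$ (so $\P(X+u\in\bbZ)=0$) and bounded convergence to justify $\E\sigma_N h(X+u)\to\E h(X+u)$, and use the decay $\gf(t)=O(|t|^{-\gd})$ to get absolute convergence of $\sum_{n\neq0}\gf(2\pi n)e^{2\pi\ii nu}/(2\pi\ii n)$, so that the Ces\`aro-weighted sums converge to the full series; both limit interchanges are handled correctly, and equating the limits gives the stated identity. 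What your route buys is independence from the external reference and, incidentally, the observation that the finite-mean hypothesis is not needed for this particular formula (the sawtooth is bounded), while the characteristic-function decay plays exactly the role of making the series absolutely convergent; what the paper's route buys is brevity and consistency with the rounding framework it cites, where essentially the same Fourier analysis has already been carried out once and for all.
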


\begin{proof}
  Let $X_u\=\floor{X+u}-u+1$. Then $\frakt{X+u}=X-X_u+1$, and the
  result follows from the formula for $\E X_u$ in \cite[Theorem 2.3]{SJ175}.
\end{proof}

For the next theorem  
(known as the \emph{key renewal theorem}),
we say that a function $f\ge0$
on $\oooo$ is \emph{directly Riemann integrable} if the upper and lower 
Riemann sums $\sum_{k=-\infty}^\infty h\sup_{[(k-1)h,kh)}f $
and $\sum_{k=-\infty}^\infty h\inf_{[(k-1)h,kh)}f$
are finite and converge to the same limit as $h\to0$.
(See further \citetq{Section XI.1}{FellerII}; Feller considers
  functions on $\ooo$, but this makes no difference.)
For most purposes, the following sufficient condition suffices.
(Usually, one can take $F=f$.) 

\begin{lemma}\label{LRiemann}
  Suppose that $f$ is a non-negative function on $\oooo$.
If $f$ is bounded and \aex{} continuous, and there exists an
integrable function $F$ with $0\le f\le F$ such that
$F$ is non-decreasing on $(-\infty,-A)$ and non-increasing on
$(A,\infty)$ for some $A$, then $f$ is directly Riemann integrable.
\end{lemma}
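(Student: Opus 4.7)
The plan is to split $\bbR$ into a compact middle $[-B,B]$ and two tails, handling each separately. On the middle, boundedness and \aex{} continuity of $f$ give ordinary Riemann integrability by the classical Lebesgue criterion; on the tails, the monotonicity of $F$ lets us bound each block's supremum by an integral of $F$, so by choosing $B$ large we can make the tail contributions uniformly small.

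In more detail, fix $\eps>0$ and pick $B>A+1$ with $\int_{B-1}^\infty F+\int_{-\infty}^{-B+1}F<\eps$ (possible since $F$ is integrable), and restrict to $h\le1$. For a block $I_k=[(k-1)h,kh)$ contained in $[B,\infty)$, the non-increasing function $F$ satisfies $\sup_{I_k}F=F((k-1)h)$, hence
\begin{equation*}
h\sup_{I_k}f\le hF((k-1)h)\le\int_{(k-2)h}^{(k-1)h}F(x)\dd x,
\end{equation*}
using $F((k-1)h)\le F(x)$ for $x\in[(k-2)h,(k-1)h]$, which requires only $(k-2)h\ge A$. Summing over $k$ with $(k-1)h\ge B$ telescopes to $\le\int_{B-1}^\infty F<\eps$, and the analogous estimate handles the left tail; the two blocks straddling $x=\pm B$ each contribute at most $h\sup f=O(h)$. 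The same bounds \emph{a fortiori} control the lower Riemann sums on the tails. On $[-B,B]$, $f$ is bounded and \aex{} continuous, so it is Riemann integrable in the classical sense: the restricted upper and lower Riemann sums both tend to $\int_{-B}^B f$ as $h\to0$, and in particular their difference is $o(1)$. Combining, the full upper and lower Riemann sums of $f$ are finite, differ by at most $2\eps+o(1)$, and, since $\eps$ was arbitrary, share a common limit (equal to $\int_\bbR f$).

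The main bookkeeping obstacle is making the tail comparison uniform in $h$ rather than only in the limit: a block near $x=B$ may enter the region $[A,B]$ where $F$ need not be monotone, and $F$ itself need not be bounded on $[-B,B]$. The two-block slack (integrating from $(k-2)h$ instead of $(k-1)h$) together with the choice $B>A+1$ and $h\le1$ absorbs this, leaving just the single straddling block per tail whose $O(h)$ contribution is negligible; on the middle we never need a pointwise bound on $F$ since $f$ itself is bounded there.
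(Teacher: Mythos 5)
Your proof is correct and follows essentially the same route as the paper's own (sketched) argument: Riemann integrability on a compact middle interval from boundedness and a.e.\ continuity, plus the dominating function $F$ and its monotonicity to make the tail blocks of the upper and lower sums uniformly small in $h$. You have simply filled in the bookkeeping that the paper's sketch leaves implicit, and the details (the two-block slack, the straddling blocks, $h\le1$) check out.
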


\begin{proof}[Sketch of proof]
  It is well-known that the boundedness and \aex{} continuity implies
  Riemann integrability on any finite interval $[-B,B]$. Using the
  dominating function $F$, one sees that the tails of the Riemann sums
  coming from intervals $[(k-1)h,kh)$ outside $[-B,B]$ can be made
  arbitrarily small, uniformly in $h\in(0,1]$, by choosing $B$ large.
\end{proof}

\begin{theorem}\label{TAkey}
  Let $f$ be any non-negative directly Riemann integrable function on
  $(-\infty,\infty)$. 
  \begin{romenumerate}
\item
If the distribution of $X_1$ is non-arithmetic,
  then, as \ttoo,
  \begin{align}
\intoo f(s-t)\dd U(s) &\to \frac1\mu\intoooo f(s)\dd s,
\label{key+}
\\
\intoo f(t-s)\dd U(s) &\to \frac1\mu\intoooo f(s)\dd s.
\label{key-}
  \end{align}
\item
If the distribution of $X_1$ is $d$-arithmetic,	
  then, as \ttoo,
  \begin{align}
\intoo f(s-t)\dd U(s)&= \frac1\mu\psi(t) +o(1),
\label{keya+}
\\
\intoo f(t-s)\dd U(s) &= \frac1\mu\psi(-t) +o(1),
\label{keya-}
  \end{align}
where $\psi(t)$ is the bounded $d$-periodic function 
\begin{equation}
  \label{psi}
\psi(t)\=d\sumkoooo f(kd-t);
\end{equation}
$\psi$ has the Fourier series
\begin{equation}\label{fou}
  \psi(t)\sim\summoooo\hpsi(m)e^{2\pi\ii mt/d}
\end{equation}
with
\begin{equation}\label{fouf}
  \hpsi(m)=\widehat f(-2\pi m/d) = \intoooo e^{2\pi\ii m t/d}f(t)\dd t.
\end{equation}
  \end{romenumerate}
\end{theorem}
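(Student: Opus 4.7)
The plan is to deduce \refT{TAkey} from Blackwell's renewal theorem (cited from \eg{} \citet{FellerII} or \citet{GutSRW}), which states that in the non-arithmetic case $U(t+h)-U(t)\to h/\mu$ as $t\to\infty$ for every fixed $h>0$, while in the $d$-arithmetic case $U$ is concentrated on $d\bbZ_{\ge0}$ with $U(\set{nd})\to d/\mu$. Given Blackwell, the key renewal theorem reduces to a three-stage sandwich. First, \eqref{key+} holds for $f=\ett{s\in[a,b)}$, since
\[
\intoo f(s-t)\dd U(s)=U(t+b)-U(t+a)\to \frac{b-a}{\mu}=\frac{1}{\mu}\intoooo f(s)\dd s;
\]
linearity then extends \eqref{key+} to any \emph{finite} linear combination of such indicators. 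Second, for a general directly Riemann integrable $f\ge0$ and mesh $h>0$, I would sandwich $f$ between the step functions $\underline f_h(x)=\sum_k(\inf_{[(k-1)h,kh)}f)\,\ett{x\in[(k-1)h,kh)}$ and the corresponding $\overline f_h$ built with sup. Truncating each step function to $|k|\le K$, applying the finite step-function case, and letting first $t\to\infty$ and then $K\to\infty$ yields
\[
\frac1\mu\sum_k h\inf_{[(k-1)h,kh)}f\le\liminf_t\intoo f(s-t)\dd U(s)\le\limsup_t\intoo f(s-t)\dd U(s)\le\frac1\mu\sum_k h\sup_{[(k-1)h,kh)}f.
\]
Third, direct Riemann integrability makes both bounds converge to $\intoooo f(s)\dd s/\mu$ as $h\to0$, which is \eqref{key+}.

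The arithmetic case \eqref{keya+} is handled by the same sandwich, now invoking the lattice form of Blackwell. Since $U$ is concentrated on $d\bbZ_{\ge0}$, the relevant limits depend only on $\frakt{t/d}$, and the common value of the upper and lower sandwich bounds is precisely the $d$-periodic function $\psi(t)=d\sum_k f(kd-t)$ of \eqref{psi}; its absolute convergence also follows from direct Riemann integrability. For the Fourier expansion \eqref{fou}--\eqref{fouf}, I would change variable $u=kd-t$ and telescope, using $e^{-2\pi\ii mk}=1$:
\[
\hpsi(m)=\frac1d\int_0^d\psi(t)e^{-2\pi\ii mt/d}\dd t=\sum_k\int_{(k-1)d}^{kd}f(u)e^{2\pi\ii mu/d}\dd u=\widehat f(-2\pi m/d).
\]
The backward statements \eqref{key-} and \eqref{keya-} reduce to the forward ones applied to the reflected function $g(x)\=f(-x)$, which is again directly Riemann integrable with $\intoooo g(s)\dd s=\intoooo f(s)\dd s$ and satisfies $\psi_g(t)=\psi(-t)$.

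The main obstacle is the uniform tail control in the sandwich step: in order to pass from the truncated step function to the full one, one needs $\intoo\overline f_h(s-t)\ett{|s-t|>B}\dd U(s)\to0$ as $B\to\infty$ uniformly in $t$. This is what the notion of direct Riemann integrability was designed to furnish --- finiteness of the two-sided upper Riemann sum --- and it combines with the standard uniform bound $\sup_t\bigpar{U(t+1)-U(t)}<\infty$ (a quick consequence of Blackwell) to close the estimate.
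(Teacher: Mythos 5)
Your part (i) is precisely the argument the paper invokes by citation: Blackwell's theorem for indicators of intervals, linearity, a two-sided step-function sandwich with tails controlled by direct Riemann integrability and the uniform bound $\sup_t\bigpar{U(t+1)-U(t)}<\infty$; this is the proof of \citet{FellerII}, Section XI.1, and your reflection $f(x)\mapsto f(-x)$ for \eqref{key-}, \eqref{keya-} and the computation of \eqref{fouf} also agree with the paper.

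The gap is in the arithmetic case (ii). You claim that ``the common value of the upper and lower sandwich bounds is precisely $\psi(t)$'', but for a general directly Riemann integrable $f$ these bounds need not have a common value. Direct Riemann integrability only gives $\sum_k h\bigpar{\sup_{I_k}f-\inf_{I_k}f}\to0$, with the mesh factor $h$; in the lattice situation the sandwich leaves you with the oscillation sum $\sum_k\bigpar{\sup_{J_k}f-\inf_{J_k}f}$ over the mesh intervals $J_k$ containing the points $kd-t$, with no factor $h$ to make it small, and it does not vanish when $f$ is discontinuous at a point of the shifted lattice $d\bbZ-t$. For instance $f=\ett{x=0}$ is directly Riemann integrable with integral $0$, yet for $t\in d\bbZ$ the left side of \eqref{keya+} is $\ddx U\set{t}\to d/\mu=\frac1\mu\psi(t)$ while your lower sandwich bound is $0$; this is not an artificial worry, since the paper's own application to the Khodak code (\refT{TMKhodak}, where $g(t)=e^t\ett{t\le0}$ and $\ln R\in d\bbZ$) evaluates $\psi$ exactly at such a discontinuity. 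The repair is the direct argument the paper uses, for which you already have all the ingredients: since $\ddx U$ is concentrated on $\set{kd:k\ge0}$, the integral is the sum $\sum_{k\ge0}f(kd-t)\,\ddx U\set{kd}$, so $\intoo f(s-t)\dd U(s)-\frac1\mu\psi(t)$ reduces to $\sumkoooo f(kd-t)\bigpar{\ddx U\set{kd}-d/\mu}$ (the terms with $k<0$, whose arguments tend to $-\infty$, are absorbed via the dRi tail bound); this tends to $0$ by Blackwell's lattice statement $\ddx U\set{kd}\to d/\mu$, the uniform boundedness of $\ddx U\set{kd}$, and the bound $\sup_t\sum_k f(kd-t)\le\sum_j\sup_{[(j-1)d,jd)}f<\infty$ furnished by direct Riemann integrability. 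With that substitution for (ii), your proof is complete.
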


In particular, the average of $\psi$ is $\hpsi(0)=\intoooo f$.
The series \eqref{psi} converges uniformly on $[0,d]$; thus
$\psi$ is continuous if $f$ is. Further, if $f$ is sufficiently smooth
(an integrable second derivative is enough), then the Fourier series
\eqref{fou} converges uniformly.

\begin{proof}
  The two formulas \eqref{key+} and \eqref{key-} are equivalent by the
  substitution $f(x)\to f(-x)$.
The theorem is usually stated in the form \eqref{key-} for functions 
$f$ supported on $\ooo$; then the integral is $\int_0^t f(t-s)\dd U(s)$.
However,  the proof in \citetq{Section XI.1}{FellerII} applies to the
more general  form above as well. (The proof is based
  on approximations with step functions and the special case when
  $f(x)$ is an indicator fuction of an interval; the latter case is
  known as \emph{Blackwell's renewal theorem}.)
In fact, a substantially more general version of \eqref{key-}, where also the
  increments $X_k$ may take negative values, is given 
in  \cite[Theorem 4.2]{AN78}.

Part (ii) follows similarly (and more easily) from the fact that the
measure $\ddx U$ is concentrated on \set{kd:k\ge0}, and thus 
$\intoo f(s-t)\dd U(s)- \frac1\mu\psi(t) =\sumkoooo f(kd-t)(\ddx
U\set{kd}-d/\mu)$ together with the renewal theorem $\ddx U\set{kd}-d/\mu\to0$
as $k\to\infty$. The Fourier coefficient calculation in \eqref{fouf}
is straightforward and standard.
\end{proof}

Finally, we consider a situation where we are given also another
sequence $Y_1,Y_2,\dots$ of random variables such that the pairs
$(X_i,Y_i)$, $i\ge1$, are \iid, while $Y_i$ and $X_i$ may be (and
typically are) dependent on each other. ($Y_i$ need not be positive.)
We denote the means by $\mu_X\=\E X_1$ and $\mu_Y\=\E Y_1$; thus
$\mu_X=\mu$ in the earlier notation, and we assume as above that
$0<\mu_X<\infty$. We also suppose that $X_0$ is independent of all
$(X_i,Y_i)$, $i\ge1$.
Let $V_n\=\sumin Y_i$.

\begin{theorem}
  \label{TAVtaut}
Suppose that $\gss_X\=\Var X_1<\infty$ and $\gss_Y\=\Var Y_1<\infty$,
and let
\begin{equation*}
  \tgs^2\=\Var(\mu_X Y_1 - \mu_Y X_1).
\end{equation*}
Then 
\begin{equation*}
  \frac{V_{\hnu(t)}-\frac{\mu_Y}{\mu_X}t}{\sqrt t}
\dto N\Bigpar{0,\frac{\tgs^2}{\mu_X^3}}.
\end{equation*}
If\/ $\tgs^2>0$, this can also be written as
\begin{equation*}
V_{\hnu(t)}\sim
\AsN\lrpar{\frac{\mu_Y}{\mu_X}t,\,
\frac{\tgs^2}{\mu_X^3}t}.
\end{equation*}
\end{theorem}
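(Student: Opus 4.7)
The plan is to reduce this to a standard central limit theorem for a random sum by introducing the centered combination $W_i \= \mu_X Y_i - \mu_Y X_i$. These are \iid{} with $\E W_i = 0$ and, by direct computation, $\Var W_i = \tgs^2$. Writing $T_n \= \sum_{i=1}^n W_i$, we have the algebraic identity
\begin{equation*}
  \mu_X V_n - \mu_Y S_n = T_n,
\end{equation*}
so at the random index $n=\hnu(t)$,
\begin{equation*}
  \mu_X V_{\hnu(t)} = T_{\hnu(t)} + \mu_Y S_{\hnu(t)}
   = T_{\hnu(t)} + \mu_Y \hS_{\hnu(t)} - \mu_Y X_0.
\end{equation*}

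The first step is to handle the ``deterministic'' part $\mu_Y \hS_{\hnu(t)}$. Since $\hS_{\hnu(t)-1}\le t < \hS_{\hnu(t)}$, the overshoot $R(t)\=\hS_{\hnu(t)}-t$ is nonnegative; because $\Var X_1<\infty$ (in particular $X_1$ is integrable), standard renewal-theoretic bounds give $R(t)=\Op(1)$, and similarly $X_0=\Op(1)$ by assumption. Therefore
\begin{equation*}
  \mu_X V_{\hnu(t)} - \mu_Y t = T_{\hnu(t)} + \Op(1).
\end{equation*}
Dividing by $\mu_X\sqrt t$, the desired asymptotic normality reduces to showing
\begin{equation*}
  \frac{T_{\hnu(t)}}{\sqrt t}\dto N\Bigpar{0,\frac{\tgs^2}{\mu_X}},
\end{equation*}
which by Slutsky's theorem will then yield the stated limit $N(0,\tgs^2/\mu_X^3)$.

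The second step is this random-index CLT for $T_{\hnu(t)}$. By the ordinary CLT, $T_n/\sqrt n\dto N(0,\tgs^2)$; by \refT{TA1}, $\hnu(t)/t\pto 1/\mu_X$. Anscombe's theorem (see \eg{} \citetq{Chapter 1}{GutSRW}) applies to \iid{} summands because the Anscombe uniform-continuity-in-probability condition is automatic, yielding
\begin{equation*}
  \frac{T_{\hnu(t)}}{\sqrt{\hnu(t)}}\dto N(0,\tgs^2),
\end{equation*}
and then $\sqrt{\hnu(t)/t}\pto \mu_X^{-1/2}$ converts this to the display above.

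The main obstacle I anticipate is the $\Op(1)$ control of the overshoot $R(t)=\hS_{\hnu(t)}-t$; for general \iid{} non-negative $X_i$ with only a second moment, the overshoot is tight but not uniformly bounded, so one must appeal to a renewal-overshoot tightness result (which is standard but worth citing). Everything else — the algebraic identity, the variance computation $\Var(\mu_X Y_1-\mu_Y X_1)=\tgs^2$, and the Anscombe step — is routine. The degenerate case $\tgs^2=0$, in which $\mu_X Y_1=\mu_Y X_1$ almost surely and hence $V_{\hnu(t)}$ is a deterministic affine function of $\hS_{\hnu(t)}$, is handled separately (or absorbed into the standard convention that $N(0,0)$ denotes a point mass).
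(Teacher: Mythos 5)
Your proof is correct, and in substance it is the same argument the paper relies on: the paper's own proof is a one-line reference to \citetq{Theorem 4.2.3}{GutSRW} for the case $X_0=0$, with the general case obtained "by the same proof, or by conditioning on $X_0$", and what you have written out is essentially the standard proof of that cited result — the linear combination $W_i=\mu_X Y_i-\mu_Y X_i$, Anscombe's theorem (which, as you implicitly use, requires no independence of the random index from the summands, only $\hnu(t)/t\pto1/\mu_X$ from \refT{TA1}), and Slutsky. Your one flagged worry, tightness of the overshoot $R(t)=\hS_{\hnu(t)}-t$, is indeed the only place where a citation is needed, and it is harmless (you only need $R(t)=\op(\sqrt t)$): conditionally on $X_0=x_0\le t$ the overshoot is the ordinary renewal overshoot at level $t-x_0$, whose mean is at most $\E X_1^2/\mu_X$ uniformly in the level by Lorden's inequality (finite since $\Var X_1<\infty$), while $\P(X_0>t)\to0$; hence $R(t)=\Op(1)$. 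A small advantage of your route over the paper's "condition on $X_0$" remark is that feeding $X_0$ directly into the identity $\mu_X V_{\hnu(t)}=T_{\hnu(t)}+\mu_Y\hS_{\hnu(t)}-\mu_Y X_0$ handles the delayed walk with no conditioning at all and makes it transparent that no moment assumption on $X_0$ is required, consistent with the statement; the degenerate case $\tgs^2=0$ is, as you say, immediate since then $T_n\equiv0$.
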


Note that the special case $Y_i=1$ yields \eqref{ta2}.

\begin{proof}
  For $X_0=0$, and thus $\hnu(t)=\nu(t)$, this is \citetq{Theorem
  4.2.3}{GutSRW}. The general case follows by the same proof, or by
  conditioning on $X_0$. 
\end{proof}

\begin{remark}
  \label{RVtaut}
Again, we can allow $X_0=X_0\sss n$ to depend on $n$, as long as the
$X_0\sss n$ is tight. 
\end{remark}

\newcommand\AAP{\emph{Adv. Appl. Probab.} }
\newcommand\JAP{\emph{J. Appl. Probab.} }
\newcommand\JAMS{\emph{J. \AMS} }
\newcommand\MAMS{\emph{Memoirs \AMS} }
\newcommand\PAMS{\emph{Proc. \AMS} }
\newcommand\TAMS{\emph{Trans. \AMS} }
\newcommand\AnnMS{\emph{Ann. Math. Statist.} }
\newcommand\AnnPr{\emph{Ann. Probab.} }
\newcommand\CPC{\emph{Combin. Probab. Comput.} }
\newcommand\JMAA{\emph{J. Math. Anal. Appl.} }
\newcommand\RSA{\emph{Random Struct. Alg.} }
\newcommand\ZW{\emph{Z. Wahrsch. Verw. Gebiete} }
\newcommand\DMTCS{\jour{Discrete Math. Theor. Comput. Sci.} }
\newcommand\DMTCSproc{{Discrete Math. Theor. Comput. Sci. Proc.} }

\newcommand\AMS{Amer. Math. Soc.}
\newcommand\Springer{Springer-Verlag}
\newcommand\Wiley{Wiley}

\newcommand\vol{\textbf}
\newcommand\jour{\emph}
\newcommand\book{\emph}
\newcommand\inbook{\emph}
\def\no#1#2,{\unskip#2, no. #1,} 
\newcommand\toappear{\unskip, to appear}

\newcommand\webcite[1]{
\texttt{\def~{{\tiny$\sim$}}#1}\hfill\hfill}
\newcommand\webcitesvante{\webcite{http://www.math.uu.se/~svante/papers/}}
\newcommand\arxiv[1]{\webcite{arXiv:#1.}}

\def\nobibitem#1\par{}

\end{document}